\newtheorem{theorem}{Theorem}[section]
\newtheorem{lemma}[theorem]{Lemma}
\newtheorem{corollary}[theorem]{Corollary}
\theoremstyle{definition}
\newtheorem{definition}[theorem]{Definition}
\newtheorem{example}[theorem]{Example}
\theoremstyle{remark}
\newcommand{\R}{{\mathbb R}}
\newcommand{\Z}{{\mathbb Z}}
\newcommand{\C}{{\mathbb C}}
\begin{document}

\title[Symmetry preserving self-adjoint extensions]{Symmetry preserving self-adjoint extensions of Schr\"odinger operators with singular potentials}

\author{D.M.~Gitman}
\address{Institute of Physics, University of Sao Paulo, Brazil}
\email{gitman@dfn.if.usp.br}
\author{A.G.~Smirnov}
\address{I.~E.~Tamm Theory Department, P.~N.~Lebedev
Physical Institute, Leninsky prospect 53, Moscow 119991, Russia}

\email{smirnov@lpi.ru}

\author{I.V.~Tyutin}
\address{I.~E.~Tamm Theory Department, P.~N.~Lebedev
Physical Institute, Leninsky prospect 53, Moscow 119991, Russia}

\email{tyutin@lpi.ru}

\author{B.L.~Voronov}
\address{I.~E.~Tamm Theory Department, P.~N.~Lebedev
Physical Institute, Leninsky prospect 53, Moscow 119991, Russia}

\email{voronov@lpi.ru}

\thanks{This research was supported by the Russian Foundation for Basic Research (Grant Nos. 09-01-00835, A.G.S.; 08-02-01118, I.V.T.); D.M.G.
is grateful to the Brazilian foundations FAPESP and CNPq for permanent support}

\begin{abstract}
We develop a general technique for finding self-adjoint extensions of a symmetric operator that respect a given set of its symmetries. Problems of
this type naturally arise when considering two- and three-dimensional Schr\"odinger operators with singular potentials. The approach is based on
constructing a unitary transformation diagonalizing the symmetries and reducing the initial operator to the direct integral of a suitable family of
partial operators. We prove that symmetry preserving self-adjoint extensions of the initial operator are in a one-to-one correspondence with
measurable families of self-adjoint extensions of partial operators obtained by reduction. The general construction is applied to the
three-dimensional Aharonov-Bohm Hamiltonian describing the electron in the magnetic field of an infinitely thin solenoid.
\end{abstract}

\maketitle

\section{Introduction}

It is well known that strong singularities in the potential may lead to the lack of self-adjointness of the corresponding Schr\"odinger operator on
its natural domain. As a result, the quantum model is no longer fixed uniquely by the potential and different quantum dynamics described by various
self-adjoint extensions of the initial Schr\"odinger operator are possible. Without additional physical information, it is generally impossible to
choose a single extension giving the ``true'' dynamics. However, the arbitrariness can be reduced if there are symmetries of the initial
Schr\"odinger operator: in this case, it is natural to require the extensions to also respect these symmetries. In this paper, we propose a general
technique for finding all such symmetry preserving extensions and apply it to the analysis of the Aharonov-Bohm Hamiltonian describing a charged
particle in the magnetic field of an infinitely thin solenoid.

Most generally, the problem of finding symmetry preserving self-adjoint extensions can be posed as follows. Suppose $H$ is a symmetric (not
necessarily closed) operator in a separable Hilbert space $\mathfrak H$ and $\mathcal X$ is a set of symmetries of $\mathfrak H$, i.e., bounded
everywhere defined operators in $\mathfrak H$ commuting\footnote{The commutation of $T\in \mathcal X$ with $H$ means that $T\Psi\in D_H$ and
$TH\Psi=HT\Psi$ for any $\Psi$ belonging to the domain $D_H$ of $H$ (see the beginning of Sec.~\ref{s3}).} with $H$. Then our aim is to find all
self-adjoint extensions $\tilde H$ of $H$ that commute with all elements of $\mathcal X$.

In this paper, we assume that the symmetries are normal pairwise commuting operators. The procedure of finding symmetry preserving self-adjoint
extensions of $H$ falls into three major steps:
\begin{itemize}
\item Diagonalization of symmetries.

\item Reduction of $H$.

\item Finding self-adjoint extensions of the partial operators obtained via reduction of $H$.
\end{itemize}
By a diagonalization of $\mathcal X$, we mean a unitary operator $V\colon \mathfrak H\to \int_S^\oplus \mathfrak S(s)\,d\nu(s)$ such that $VTV^{-1}$
is the operator $\mathcal T_f$ of multiplication by some $\nu$-measurable complex function $f$ in $\int_S^\oplus \mathfrak S(s)\,d\nu(s)$ for any
$T\in \mathcal X$ (here, $\nu$ is a measure on a measurable space $S$ and $\mathfrak S$ is a $\nu$-measurable family of Hilbert spaces on $S$; we
shall briefly recall the notions related to direct integrals of Hilbert spaces in Sec.~\ref{s5}). We shall be mainly interested in a special class of
diagonalizations, called exact, that satisfy the following condition:
\begin{itemize}
\item[(E)] For any bounded everywhere defined operator $R$ in $\mathfrak H$ that commutes with all elements of $\mathcal X$, the operator $VRV^{-1}$
commutes with $\mathcal T_f$ for any $\nu$-measurable bounded $f$ on $S$.
\end{itemize}
This condition allows us to apply the von~Neumann's reduction theory~\cite{Neumann} (or, more precisely, its generalization due to
Nussbaum~\cite{Nussbaum} for the case of unbounded operators) and conclude that $VRV^{-1}$ can be decomposed into a direct integral of closed
operators for any closed $R$ commuting with symmetries. For applications, it is important to have a criterion for deciding whether a given
diagonalization is exact or not. To this end, we introduce the notion of a $\nu$-separating family of functions on $S$ (see Definition~\ref{d_exact})
and prove, under very mild assumptions on $\nu$, that a diagonalization is exact if and only if there is a $\nu$-separating family
$\{f_\iota\}_{\iota\in I}$ of $\nu$-measurable complex functions on $S$ such that $V^{-1}\mathcal T_{f_\iota}V\in \mathcal X$ for any $\iota\in I$
(see Theorem~\ref{t0a}). The latter condition is usually easily checked for concrete examples.

By a reduction of $H$ with respect to a given diagonalization for $\mathcal X$, we mean a $\nu$-measurable family of operators $a(s)$ acting in
$\mathfrak S(s)$ such that $\int_S^\oplus a(s)\,d\nu(s)$ is an extension of $VHV^{-1}$ and the image $V(D_H)$ of $D_H$ under $V$ has a suitable
density with respect to the domains of $a(s)$ (see Definition~\ref{d5} and Definition~\ref{d_red} for details).

In this paper, we do not give any general recipe for constructing diagonalizations and reductions: this has to be done separately for each concrete
case. At the same time, we prove that exact diagonalizations and reductions always exist for any set $\mathcal X$ of normal bounded pairwise
commuting operators in $\mathfrak H$ and any densely defined closable operator $H$ commuting with all elements of $\mathcal X$ (Theorem~\ref{t_diag}
and Lemma~\ref{l_red}).

Given an exact diagonalization for $\mathcal X$ and a reduction of $H$, we can describe all symmetry preserving extensions of $H$. Namely, we prove
(Theorem~\ref{t1}) that the operator
\begin{equation}\label{sep8aa}
\tilde H = V^{-1}\int^\oplus \tilde a(s) \,d\nu(s)\, V
\end{equation}
is a self-adjoint extension of $H$ commuting with symmetries for any $\nu$-measurable family $\tilde a(s)$ of self-adjoint extensions of $a(s)$.
Conversely, for any self-adjoint extension $\tilde H$ of $H$ commuting with symmetries, there is a unique (up to $\nu$-equivalence) $\nu$-measurable
family $\tilde a(s)$ of self-adjoint extensions of $a(s)$ such that (\ref{sep8aa}) holds.

We illustrate the general construction described above by applying it to the three-dimensional model of an electron in the magnetic field of an
infinitely thin solenoid. In this case, the Hamiltonian is formally given by the differential expression
\begin{equation}\label{diff_expr_a}
\frac{\hbar^2}{2m_{\mathrm e}}\left(i\nabla + \frac{e}{\hbar c}\mathbf A\right)^2,
\end{equation}
where $e$ and $m_{\mathrm e}$ are the electron charge and mass respectively, $c$ is the velocity of light, and the vector potential $\mathbf
A=(A^1,A^2,A^3)$ has the form
\[
A^1(x,y,z) = -\frac{\Phi y}{2\pi(x^2+y^2)},\quad A^2(x,y,z) = \frac{\Phi x}{2\pi(x^2+y^2)},\quad A^3(x,y,z)=0
\]
($\Phi$ is the flux of the magnetic field through the solenoid). Expression~(\ref{diff_expr_a}) is singular on the $z$-axis. For this reason,
(\ref{diff_expr_a}) naturally determines an operator $H$ in $L^2(\R^3)$ with the domain consisting of smooth functions with compact support separated
from the $z$-axis. As the set $\mathcal X$ of symmetries, it is natural to choose the set of all operators in $L^2(\R^3)$ induced by translations
along the $z$-axis and rotations around the $z$-axis (it is straightforward to check that $H$ commutes with all such operators). We describe all
self-adjoint extensions of $H$ commuting with the elements of $\mathcal X$ (Theorem~\ref{t2}).

The paper is organized as follows. In Sec.~\ref{s2}, we fix the measure-theoretic notation and recall some basic facts concerning the integration
with respect to spectral measures. In Sec.~\ref{s3}, we show how the commutation properties of (unbounded) operators in a Hilbert space can be
described in terms of von~Neumann algebras, which provide a convenient setting for the study of diagonalizations and their exactness. In
Sec.~\ref{s4}, we give the definition of $\nu$-separating families of functions and use it to describe the systems of generators of von~Neumann
algebras associated with spectral measures. In Sec.~\ref{s5}, we reformulate the definition of exact diagonalization in terms of von~Neumann
algebras, establish the existence of exact diagonalizations, and use the results of Secs.~\ref{s3} and~\ref{s4} to characterize them in terms of
$\nu$-separating families. In Sec.~\ref{s6}, we prove the existence of reductions for any symmetric operator with respect to exact diagonalizations
of symmetries and obtain the description of its symmetry preserving self-adjoint extensions. Secs.~\ref{s_meas} and~\ref{s8} are devoted to
application of the abstract construction to Schr\"odinger operators. In Sec~\ref{s_meas}, we derive a condition for the measurability of families of
one-dimensional Schr\"odinger operators and their self-adjoint extensions. Combining this condition with the general results of Sec.~\ref{s6}, we
find all symmetry preserving self-adjoint extensions of the Aharonov-Bohm Hamiltonian determined by~(\ref{diff_expr_a}).

\section{Preliminaries on measures and spectral measures}
\label{s2}

Recall that a set $S$ is called a measurable space if it is equipped with a $\sigma$-algebra $\Sigma_S$ of subsets of $S$. Given a Borel space $S$,
the elements of $\Sigma_S$ are called measurable subsets of $S$. Every subset $A$ of a measurable space $S$ has a natural structure of a measurable
space: the $\sigma$-algebra $\Sigma_A$ consists of all sets of the form $A\cap B$, where $B\in \Sigma_S$. A map $f$ from a measurable space $S$ to a
measurable space $S'$ is called measurable if $f^{-1}(A)\in \Sigma_S$ for any measurable subset $A$ of $S'$. A measurable map $f\colon S\to S'$ is
called a measurable isomorphism if it is bijective and $f^{-1}$ is a measurable map from $S'$ to~$S$.

If $S$ is a topological space, then it can be naturally made a measurable space by putting $\Sigma_S$ equal to the Borel $\sigma$-algebra of $S$
(i.e., the smallest $\sigma$-algebra on $S$ containing all open subsets of $S$). We shall assume, unless otherwise specified, that all considered
topological spaces (in particular, $\R$ and $\C$) carry a measurable structure defined in this way.

A measure on a measurable space $S$ means a countably additive function $\nu$ from $\Sigma_S$ to the extended positive semi-axis $[0,\infty]$. A
subset $N$ of $S$ is called a $\nu$-null set if $N\subset N'$, where $N'$ is measurable and $\nu(N')=0$. A map $f$ is said to be defined $\nu$-almost
everywhere ($\nu$-a.e.) on $S$ if there is a $\nu$-null set $N$ such that $S\setminus N\subset D_f$, where $D_f$ is the domain of $f$. Given a set
$S'$, a map $f$ is said to be an $\nu$-a.e. defined map from $S$ to $S'$ if there is a $\nu$-null set $N$ such that $S\setminus N\subset D_f$ and
$f(s)\in S'$ for any $s\in S\setminus N$. Two $\nu$-a.e. defined maps $f$ and $g$ are called equal $\nu$-a.e. if there is a $\nu$-null set $N$ such
that $S\setminus N\subset D_f\cap D_g$ and $f$ and $g$ coincide on $S\setminus N$. The $\nu$-essential supremum of a $\nu$-a.e. defined real function
$f$ on $S$ (notation $\nu$-$\mathrm{ess\,sup}_{s\in S}f(s)$) is the greatest lower bound of $C\in \R$ such that $f(s)\leq C$ for $\nu$-almost every
($\nu$-a.e.) $s\in S$. A complex $\nu$-a.e. defined function $f$ is said to be $\nu$-essentially bounded on $S$ if $\nu$-$\mathrm{ess\,sup}_{s\in
S}|f(s)|<\infty$. A map $f$ is called an $\nu$-measurable map from $S$ to a measurable space $S'$ if $f$ is defined $\nu$-a.e. on $S$ and there is a
measurable map from $S$ to $S'$ that is $\nu$-a.e. equal to $f$.

All maps defined $\nu$-a.e. on $S$ fall into equivalence classes of maps that are equal $\nu$-a.e. Given a $\nu$-a.e. defined map $f$ on $S$, we
denote its equivalence class by $[f]_\nu$. For any set $S'$, we denote by $\mathcal F(S,S',\nu)$ the set of all equivalence classes $[f]_\nu$ such
that $\xi(s)\in S'$ for $\nu$-a.e. $s\in S$. If $S'$ is a vector space, then $\mathcal F(S,S',\nu)$ obviously has a natural structure of a vector
space. Given a Hilbert space $\mathfrak H$, we denote by $L^2(S,\mathfrak H,\nu)$ the subspace of $\mathcal F(S,S',\nu)$ consisting of all $[f]_\nu$
such that $f$ is a $\nu$-measurable map from $S$ to $\mathfrak H$ and $\int \|f(s)\|^2\,d\nu(s)<\infty$. For $\mathfrak H=\C$, the space
$L^2(S,\mathfrak H,\nu)$ will be denoted by $L^2(S,\nu)$. If $\nu$ is the Lebesgue measure, the space $L^2(S,\nu)$ will be denoted by $L^2(S)$.

A measure $\nu$ on $S$ is called $\sigma$-finite if there is a sequence of measurable sets $A_1,A_2,\ldots$ such that $S=\bigcup_{j=1}^\infty A_j$
and $\nu(A_j)<\infty$ for all $j$. Throughout the paper, all measures will be assumed $\sigma$-finite.

Let $S$ be a measurable space, $\mathfrak H$ be a Hilbert space, and $\mathcal P(\mathfrak H)$ be the set of orthogonal projections on $\mathfrak H$.
A map $E\colon \Sigma_S\to \mathcal P(\mathfrak H)$ is called a spectral measure for $(S,\mathfrak H)$ if it is countably additive with respect to
the strong operator topology on $\mathcal P(\mathfrak H)$ and $E(S)$ is the identity operator in $\mathfrak H$. If $E$ is a spectral measure, then
$E(A_1\cap A_2)=E(A_1)E(A_2)$ for any measurable $A_1,A_2\subset S$ (see~\cite{BirmanSolomjak}, Sec.~5.1, Theorem~1). For any $\Psi\in \mathfrak H$,
the finite positive measure $E_\Psi$ on $S$ is defined by setting $E_\Psi(A)=\langle E(A)\Psi,\Psi\rangle$ for any measurable $A$, where $\langle
\cdot,\cdot\rangle$ is the scalar product on $\mathfrak H$. A subset $N$ of $S$ is called an $E$-null set if $N\subset N'$, where $N'$ is measurable
and $E(N')=0$. The concepts of an $E$-a.e. defined function, an $E$-essentially bounded function, and an $E$-measurable function are defined in the
same way as in the case of a positive measure with $\nu$-null sets replaced with $E$-null sets.

Let $E$ be a spectral measure on a measurable space $S$. Given an $E$-measurable complex function $f$ on $S$, the integral $J^E_f$ of $f$ with
respect to $E$ is defined as the unique linear operator in $\mathfrak H$ such that
\begin{align}
&D_{J^E_f}=\left\{\Psi\in \mathfrak H : \int |f(s)|^2\,dE_\Psi(s)<\infty \right\} \label{dom} \\
& \langle\Psi, J^E_f \Psi\rangle = \int f(s)\,dE_\Psi(s),\quad \Psi\in D_{J^E_f}. \nonumber
\end{align}
For any $E$-measurable complex function $f$ on $S$, the operator $J^E_f$ is normal\footnote{Recall that a closed densely defined linear operator $T$
in a Hilbert space $\mathfrak H$ is called normal if the operators $TT^*$ and $T^*T$ have the same domain of definition and coincide thereon (as
usual, $T^*$ denotes the adjoint of $T$).}, and we have $J^E_{\bar f}=J^*_f$, where $\bar f$ is the complex conjugate function of $f$. For any
$E$-measurable $f$ and $g$ on $S$, we have
\begin{equation}\label{sum_prod}
J^E_{fg}=\overline{J^E_fJ^E_g},\quad  J^E_{f+g}=\overline{J^E_f+J^E_g},
\end{equation}
where the bar means closure. The operator $J^E_f$ is everywhere defined and bounded if and only if $f$ is $E$-essentially bounded. In this case, we
have
\begin{equation}\label{normbound}
\|J^E_f\| = \nu\mbox{-}\mathrm{ess\,sup}_{s\in S} |f(s)|.
\end{equation}

For every normal operator $T$, there is a unique spectral measure $\mathcal E_T$ on $\C$ such that $J^{\mathcal E_T}_{\mathrm{id}_{\C}}=T$, where
$\mathrm{id}_{\C}$ is the identity function on $\C$. The operators $\mathcal E_T(A)$, where $A$ is a Borel subset of $\C$, are called the spectral
projections of $T$. If $f$ is an $\mathcal E_T$-measurable complex function on $S$, then the operator $J_f^{\mathcal E_T}$ is also denoted as $f(T)$.

Let $S$ and $S'$ be measurable spaces, $\mathfrak H$ be a Hilbert space, $E$ be a spectral measure for $(S,\mathfrak H)$, and $\varphi\colon S\to S'$
be an $E$-measurable map. We denote by $\varphi_* E$ the push-forward of $E$ under $\varphi$. By definition, this means that $\varphi_* E$ is the
spectral measure for $(S',\mathfrak H)$ such that
\[
\varphi_* E(A) = E(\varphi^{-1}(A))
\]
for any measurable $A\subset S'$. If $f$ is an $\varphi_*E$-measurable complex function on $S'$, then $f\circ \varphi$ is an $E$-measurable function
on $S$, and we have
\begin{equation}\label{pushforward}
J^{\varphi_*E}_f = J^E_{f\circ \varphi}.
\end{equation}
Let $\varphi$ be an $E$-measurable complex function on $S$. Formula~(\ref{pushforward}) with $f=\mathrm{id}_{\C}$ yields $J^{\varphi_*
E}_{\mathrm{id}_{\C}}=J^E_\varphi$. In view of the uniqueness of $\mathcal E_{J^E_\varphi}$, this means that
\begin{equation}\label{600}
\mathcal E_{J^E_\varphi}=\varphi_* E.
\end{equation}
In view of~(\ref{pushforward}), it follows that $f\circ\varphi$ is $E$-measurable and
\begin{equation}\label{composition}
f(J^E_\varphi) = J^{\varphi_* E}_f = J^E_{f\circ \varphi}
\end{equation}
for any any $\mathcal E_{J^E_\varphi}$-measurable complex function $f$ on $\C$.

\section{Commutation of operators and von~Neumann algebras}
\label{s3}

Given a Hilbert space $\mathfrak H$, we denote by $L(\mathfrak H)$ the algebra of all bounded everywhere defined linear operators in $\mathfrak H$.
We say that operators $T_1$ and $T_2$ in $\mathfrak H$ with the respective domains $D_{T_1}$ and $D_{T_2}$ commute if one of the following conditions
is satisfied:
\begin{enumerate}
\item[(A)] $T_1\in L(\mathfrak H)$ and we have $T_1\Psi\in D_{T_2}$ and $T_1T_2\Psi=T_2T_1 \Psi$ for any $\Psi\in D_{T_2}$.

\item[(B)] $T_2\in L(\mathfrak H)$ and we have $T_2\Psi\in D_{T_1}$ and $T_1T_2\Psi=T_2T_1 \Psi$ for any $\Psi\in D_{T_1}$.

\item[(C)] $T_1$ and $T_2$ are both normal and their spectral projections commute.

\end{enumerate}

Clearly, $T_1$ commutes with $T_2$ if and only if $T_2$ commutes with $T_1$. The next lemma shows that, wherever applicable, conditions~(A), (B), and
(C) are equivalent.

\begin{lemma} \label{l0}
Let $T_1$ and $T_2$ be commuting operators in $\mathfrak H$. Then we have
\begin{enumerate}
\item[1.] If $T_1\in L(\mathfrak H)$, then {\rm (A)} is satisfied.

\item[2.] If $T_2\in L(\mathfrak H)$, then {\rm (B)} is satisfied.

\item[3.] If $T_1$ and $T_2$ are both normal, then {\rm (C)} is satisfied.
\end{enumerate}
\end{lemma}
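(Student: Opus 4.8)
The three assertions have a common shape: since the hypothesis that $T_1$ and $T_2$ commute only guarantees that \emph{one} of (A), (B), (C) holds, each part requires me to show that the condition named there follows no matter which of the three actually holds. Two simplifications come first. Because commutation is symmetric, Part~2 follows from Part~1 by interchanging $T_1$ and $T_2$, and the case of Part~3 in which (B) holds follows from the case in which (A) holds by the same interchange; so I only need to treat Part~1 and the (A)-case of Part~3. Moreover, several sub-cases are immediate: if the condition asserted already holds there is nothing to prove, and if $T_1\in L(\mathfrak H)$ while commutation is witnessed by (B), then (B) forces $T_2\in L(\mathfrak H)$ as well, so $D_{T_1}=D_{T_2}=\mathfrak H$ and (A) and (B) then literally say the same thing. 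This isolates the two genuine implications: for Part~1, that (C) implies (A) when $T_1\in L(\mathfrak H)$; and for Part~3, that (A) implies (C) when both operators are normal.

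For (C)$\Rightarrow$(A) I would argue by spectral truncation of $T_2$. Condition (C) says the spectral projections of $T_1$ and $T_2$ commute, and since $T_1$ lies in the von~Neumann algebra generated by its own spectral projections, $T_1$ commutes with every $\mathcal E_{T_2}(B)$; consequently $T_1$ also commutes with each bounded operator $T_2\mathcal E_{T_2}(B_n)$, where $B_n=\{|\mu|\le n\}$, because this operator is a norm limit of finite linear combinations of the $\mathcal E_{T_2}(B)$. Writing $P_n=\mathcal E_{T_2}(B_n)$, for $\Psi\in D_{T_2}$ one computes
\[
T_2 P_n (T_1\Psi)=T_1(T_2 P_n)\Psi=T_1 P_n T_2\Psi = P_n T_1 T_2\Psi,
\]
using $T_2 P_n\Psi=P_n T_2\Psi$ on $D_{T_2}$. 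As $n\to\infty$ we have $P_n T_1\Psi\to T_1\Psi$ and $T_2 P_n T_1\Psi=P_n T_1 T_2\Psi\to T_1 T_2\Psi$, so the closedness of $T_2$ gives $T_1\Psi\in D_{T_2}$ and $T_2 T_1\Psi=T_1 T_2\Psi$, which is precisely (A).

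For (A)$\Rightarrow$(C) with $T_1,T_2$ normal and $T_1\in L(\mathfrak H)$, I would pass to the resolvent. From the inclusion $T_1 T_2\subset T_2 T_1$ furnished by (A), a one-line computation shows that $T_1$ commutes with $R_z=(T_2-z)^{-1}$ for every $z$ in the resolvent set: if $\Phi=R_z\Psi$, then $\Phi\in D_{T_2}$, hence $(T_2-z)T_1\Phi=T_1(T_2-z)\Phi=T_1\Psi$, i.e.\ $T_1 R_z\Psi=R_z T_1\Psi$. Since $T_2$ is normal, $R_z$ is a bounded normal operator, so Fuglede's theorem upgrades $T_1 R_z=R_z T_1$ to commutation of $T_1$ with $R_z^*$ as well, and hence with the whole von~Neumann algebra generated by $R_z$. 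That algebra contains all spectral projections $\mathcal E_{T_2}(B)$, since these are spectral projections of $R_z$; thus $T_1$ commutes with every $\mathcal E_{T_2}(B)$. Finally, each $\mathcal E_{T_2}(B)$ is self-adjoint, so applying Fuglede's theorem once more to the bounded operator $\mathcal E_{T_2}(B)$ and the normal operator $T_1$ shows that $\mathcal E_{T_2}(B)$ commutes with both $T_1$ and $T_1^*$, hence with every $\mathcal E_{T_1}(A)$. Therefore the spectral projections of $T_1$ and $T_2$ commute, which is (C).

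The main obstacle is this last implication. Hypothesis (A) is one-sided---$T_1$ carries $D_{T_2}$ into itself and intertwines $T_2$ only in the inclusion direction---whereas (C) is a fully symmetric statement about two commuting families of projections; bridging this asymmetry is exactly where normality is indispensable and where Fuglede's theorem does the essential work. The reduction to the bounded normal resolvent $R_z$ is what makes the standard bounded form of Fuglede's theorem applicable and thereby avoids any appeal to an unbounded version. The remaining points---the symmetry reductions and the trivial ``both bounded'' coincidence of (A) and (B)---are routine bookkeeping.
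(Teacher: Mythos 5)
Your symmetry reductions and your proof of (C)$\Rightarrow$(A) are correct; the latter is in fact a self-contained proof of the result that the paper simply cites (Theorem~8 of Sec.~5.4 in Birman--Solomjak): the truncations $T_2\mathcal E_{T_2}(B_n)$ are norm limits of linear combinations of spectral projections, $T_1$ commutes with them because it lies in the von~Neumann algebra generated by its own spectral projections while each $\mathcal E_{T_2}(B)$ lies in the commutant of that algebra, and the closedness of $T_2$ finishes the argument. That part is fine and arguably more informative than the paper's citation.

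The gap is in (A)$\Rightarrow$(C). Your reduction to the bounded case rests on choosing $z$ in the resolvent set of $T_2$, but an unbounded normal operator can have empty resolvent set: if $E$ is a spectral measure on $\C$ with full support (e.g.\ multiplication by characteristic functions on $L^2(\C,e^{-|z|^2}\,dA(z))$), then $T_2=J^E_{\mathrm{id}_\C}$ is normal with $\sigma(T_2)=\C$, and no $R_z$ exists. The lemma is stated (and later used, e.g.\ in Lemma~\ref{l4d}) for exactly such general normal operators, so this case cannot be excluded. The obvious repairs fail for the same reason you were trying to avoid: to commute $T_1$ past $(1+T_2^*T_2)^{-1}$ or past the bounded transform $T_2(|T_2|+1)^{-1}$ starting only from $T_1T_2\subset T_2T_1$, you would already need $T_1T_2^*\subset T_2^*T_1$, i.e.\ the conclusion of Fuglede's theorem. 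The paper avoids this entirely by invoking Theorem~1 of Fuglede's paper, which is stated for a bounded $T$ and an arbitrary (possibly unbounded) normal $N$ with $TN\subset NT$, so no reduction to a bounded normal operator is needed. If you want to keep your proof self-contained at the level of the bounded Fuglede theorem, the standard route is to decompose $\mathfrak H=\bigoplus_n \mathcal E_{T_2}(\omega_n)\mathfrak H$ over a partition of $\C$ into bounded annuli $\omega_n$, observe that $T_1$ maps $D_{T_2}$ into itself and hence that the blocks $\mathcal E_{T_2}(\omega_m)T_1\mathcal E_{T_2}(\omega_n)$ intertwine the bounded normal restrictions of $T_2$, and apply the bounded Fuglede--Putnam theorem blockwise; the resolvent shortcut as written does not cover the general case. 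The final step of your argument (commuting the self-adjoint projections $\mathcal E_{T_2}(B)$ past the spectral projections of the \emph{bounded} normal operator $T_1$) is unaffected, since there only the bounded Fuglede theorem is used.
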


\begin{proof}

1. Let $T_1\in L(\mathfrak H)$. If (B) is satisfied, then $T_2\in L(\mathfrak H)$ and $T_1$ and $T_2$ commute in the usual sense. Hence, (A) holds.
If (C) is satisfied, then (A) is ensured by Theorem~8 of Sec.~5.4 in~\cite{BirmanSolomjak} applied to the spectral measure of $T_2$.

\par\noindent
2. Let $T_2\in L(\mathfrak H)$. By statement~1, $T_2$ and $T_1$ satisfy (A), i.e., $T_1$ and $T_2$ satisfy~(B).

\par\noindent
3. Let $T_1$ and $T_2$ be both normal. If (A) is satisfied, then $T_1$ commutes with all spectral projections of $T_2$ by Theorem~1
in~\cite{Fuglede}. Applying the latter theorem again, we get~(C). If (B) holds, then $T_2$ and $T_1$ satisfy~(A), which again implies~(C). The lemma
is proved.
\end{proof}

\begin{lemma} \label{l1}
Let $R\in L(\mathfrak H)$ and $T$ be a densely defined operator in $\mathfrak H$ commuting with $R$. Then $R^*$ commutes with $T^*$. If $T$ is
closable, then the closure $\bar T$ of $T$ commutes with $R$.
\end{lemma}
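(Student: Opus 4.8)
The plan is to prove the first assertion by directly unwinding the definition of the adjoint, and then to deduce the second assertion by applying the first one twice. Throughout I use that, since $R\in L(\mathfrak H)$, Lemma~\ref{l0} forces the commutation of $R$ and $T$ to be of type~(A): for every $\Psi\in D_T$ we have $R\Psi\in D_T$ and $RT\Psi=TR\Psi$. Likewise, because $R^*\in L(\mathfrak H)$, to prove that $R^*$ commutes with $T^*$ it suffices to verify condition~(A) for the pair $(R^*,T^*)$.

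To this end I would fix $\Phi\in D_{T^*}$ and compute, for an arbitrary $\Psi\in D_T$,
\[
\langle T\Psi, R^*\Phi\rangle = \langle RT\Psi, \Phi\rangle = \langle TR\Psi, \Phi\rangle = \langle R\Psi, T^*\Phi\rangle = \langle \Psi, R^*T^*\Phi\rangle,
\]
where the second equality uses the type-(A) commutation together with the membership $R\Psi\in D_T$, and the third uses the definition of $T^*$ (legitimate since $R\Psi\in D_T$). The resulting identity $\langle T\Psi, R^*\Phi\rangle=\langle \Psi, R^*T^*\Phi\rangle$, valid for all $\Psi\in D_T$, is precisely the statement that $R^*\Phi\in D_{T^*}$ with $T^*R^*\Phi=R^*T^*\Phi$. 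This establishes condition~(A) for $(R^*,T^*)$ and hence the first claim.

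For the second assertion I would invoke the standard facts that a densely defined operator $T$ is closable if and only if its adjoint $T^*$ is densely defined, and that in that case $\bar T=T^{**}$. Thus, when $T$ is closable, $T^*$ is a \emph{densely defined} operator commuting with $R^*\in L(\mathfrak H)$, so the first assertion applies verbatim with $R$ replaced by $R^*$ and $T$ replaced by $T^*$, yielding that $(R^*)^*=R$ commutes with $(T^*)^*=T^{**}=\bar T$. I do not anticipate a genuine obstacle: the argument is a short formal manipulation, and the only points demanding care are the correct identification of the chain of inner-product equalities with the defining property of the adjoint, and the invocation of the closability criterion $\bar T=T^{**}$ (which is exactly where density of $T$, assumed in the hypotheses, is needed).
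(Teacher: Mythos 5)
Your proof is correct and follows essentially the same route as the paper's: reduce to condition~(A) via Lemma~\ref{l0}, verify $R^*\Phi\in D_{T^*}$ with $T^*R^*\Phi=R^*T^*\Phi$ by the same chain of inner-product identities, and then obtain the closure statement by applying the first part to the pair $(R^*,T^*)$ together with $R=(R^*)^*$ and $\bar T=(T^*)^*$. No gaps.
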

\begin{proof}
By statement~1 of Lemma~\ref{l0}, the operators $R$ and $T$ satisfy~(A). Let $\Psi\in D_{T^*}$ and $\Phi=T^*\Psi$. Then we have $\langle
T\Psi',\Psi\rangle=\langle \Psi',\Phi\rangle$ for any $\Psi'\in D_T$ and, in view of~(A), we obtain
\[
\langle T\Psi',R^*\Psi\rangle=\langle TR\Psi',\Psi \rangle=\langle R\Psi',\Phi \rangle = \langle \Psi',R^*\Phi \rangle,\quad \Psi'\in D_T.
\]
This means that $R^*\Psi\in D_{T^*}$ and $T^* R^* \Psi=R^*T^*\Psi$, i.e., $R^*$ and $T^*$ satisfy~(A). If $T$ is closable, then $R$ commutes with
$\bar T$ because $R=(R^*)^*$ and $\bar T = (T^*)^*$. The lemma is proved.
\end{proof}

Given a set $\mathcal X$ of operators in $\mathfrak H$, let $\mathcal X'$ denote its commutant, i.e., the subalgebra of $L(\mathfrak H)$ consisting
of all operators commuting with every element of $\mathcal X$. If all operators in $\mathcal X$ are densely defined, we denote by $\mathcal X^*$ the
set consisting of the adjoints of the elements of $\mathcal X$. The set $\mathcal X$ is called involutive if $\mathcal X^*=\mathcal X$.
Lemma~\ref{l1} implies that
\begin{equation}\label{adjcomm}
(\mathcal X')^*=(\mathcal X^*)'
\end{equation}
whenever all elements of $\mathcal X$ are closed and densely defined. Recall~\cite{Dixmier} that a subalgebra $\mathcal A$ of $L(\mathfrak H)$ is
called a von~Neumann algebra if it is involutive and coincides with its bicommutant~$\mathcal A''$. By the well-known von~Neumann's theorem (see,
e.~g.,~\cite{Dixmier}, Sec.~I.3.4, Corollaire~2), an involutive subalgebra $\mathcal A$ of $L(\mathfrak H)$ is a von~Neumann algebra if and only if
it contains the identity operator and is closed in the strong\footnote{The same is true for the weak operator topology because every involutive
strongly closed subalgebra of $L(\mathfrak H)$ is weakly closed (see~\cite{Dixmier}, Sec.~I.3.4, Th\'eor\`eme~2).} operator topology.

\begin{lemma} \label{l2}
If $\mathcal X$ is an involutive set of densely defined closed operators in $\mathfrak H$, then $\mathcal X'$ is a von~Neumann algebra.
\end{lemma}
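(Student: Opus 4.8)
The plan is to invoke von~Neumann's theorem, recalled just above the statement, which asserts that an involutive subalgebra of $L(\mathfrak H)$ is a von~Neumann algebra if and only if it contains the identity operator and is closed in the strong operator topology. Since $\mathcal X'$ is a subalgebra of $L(\mathfrak H)$ by definition, it will suffice to verify three things: that $\mathcal X'$ is involutive, that it contains the identity, and that it is strongly closed.

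The first two points are immediate. Involutivity follows directly from~(\ref{adjcomm}): as all elements of $\mathcal X$ are closed and densely defined, that formula gives $(\mathcal X')^*=(\mathcal X^*)'$, and the hypothesis $\mathcal X^*=\mathcal X$ then yields $(\mathcal X')^*=\mathcal X'$. The identity operator lies in $\mathcal X'$ because, being in $L(\mathfrak H)$, it trivially satisfies condition~(A) with respect to every $T\in \mathcal X$.

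The substance of the argument is the strong closedness, and this is where the closedness of the elements of $\mathcal X$ is essential. First I would take a net $R_\alpha$ in $\mathcal X'$ converging strongly to some $R\in L(\mathfrak H)$ and fix an arbitrary $T\in \mathcal X$ together with a vector $\Psi\in D_T$. Since each $R_\alpha\in L(\mathfrak H)$ commutes with $T$, statement~1 of Lemma~\ref{l0} shows that $R_\alpha$ and $T$ satisfy~(A), so that $R_\alpha\Psi\in D_T$ and $TR_\alpha\Psi=R_\alpha T\Psi$. Strong convergence gives $R_\alpha\Psi\to R\Psi$ and $R_\alpha T\Psi\to RT\Psi$, whence the vectors $TR_\alpha\Psi=R_\alpha T\Psi$ converge to $RT\Psi$. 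Because $T$ is closed, its graph is closed, so from $R_\alpha\Psi\to R\Psi$ and $TR_\alpha\Psi\to RT\Psi$ I would conclude that $R\Psi\in D_T$ and $TR\Psi=RT\Psi$. Thus $R$ and $T$ satisfy~(A), i.e.\ they commute, and since $T\in \mathcal X$ was arbitrary, $R\in \mathcal X'$.

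The only delicate point is that this closure argument must be carried out with nets rather than sequences, since the strong operator topology need not be metrizable; but the graph-closedness of $T$ applies to nets verbatim, so no genuine obstacle arises. With the three conditions in hand, von~Neumann's theorem delivers that $\mathcal X'$ is a von~Neumann algebra.
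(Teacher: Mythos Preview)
Your proof is correct and follows essentially the same route as the paper: both verify involutivity via~(\ref{adjcomm}) and then establish strong closedness by fixing $T\in\mathcal X$ and $\Psi\in D_T$, using the closedness of $T$ to pass to the limit. The only cosmetic difference is that the paper, rather than working with a general net, fixes $\Psi$ and $T\Psi$ first and extracts a \emph{sequence} $R_n\in C_T$ from the strong neighborhoods $W_{\Psi,T\Psi,n}$, thereby sidestepping any discussion of nets; your observation that graph-closedness applies to nets is of course equally valid.
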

\begin{proof}
By~(\ref{adjcomm}), the algebra $\mathcal X'$ is involutive, and it suffices to show that $\mathcal X'$ is closed in the strong operator topology.
Given an operator $T$ in $\mathfrak H$, let $C_T$ denote the set of all elements of $L(\mathfrak H)$ commuting with $T$ (in other words, $C_T$ is the
commutant of the one-point set $\{T\}$). Since $\mathcal X'=\bigcap_{T\in\mathcal X} C_T$, it suffices to prove that $C_T$ is strongly closed for any
closed $T$. Let $R$ belong to the strong closure of $C_T$. For every $\Psi_1,\Psi_2\in\mathfrak H$ and $n=1,2,\ldots$, the set
\[
W_{\Psi_1,\Psi_2,n}=\{\tilde R\in L(\mathfrak H) : \|(\tilde R-R)\Psi_i\|<1/n,\,\,i=1,2\}
\]
is a strong neighborhood of $R$ and, hence, has a nonempty intersection with $C_T$. Fix $\Psi\in D_T$ and choose $R_n\in C_T\cap
W_{\Psi,\,T\Psi,\,n}$ for each $n$. Then $R_n \Psi\to R\Psi$ and $R_n T \Psi\to RT\Psi$ in $\mathfrak H$. As $R_n$ commute with $T$, we have $R_n
\Psi\in D_T$ and $R_nT \Psi= TR_n \Psi$ for all $n$. In view of the closedness of $T$, it follows that $R\Psi\in D_T$ and $TR\Psi = RT\Psi$, i.e.,
$R\in C_T$. The lemma is proved.
\end{proof}

Let $\mathcal X$ be a set of closed densely defined operators in $\mathfrak H$. Then the set $\mathcal X\cup \mathcal X^*$ is involutive. We set
$\mathcal A(\mathcal X)=(\mathcal X\cup \mathcal X^*)''$ and call $\mathcal A(\mathcal X)$ the von~Neumann algebra generated by $\mathcal X$. If
$\mathcal X\subset L(\mathfrak H)$, then $\mathcal A(\mathcal X)$ is the smallest von~Neumann algebra containing $\mathcal X$. If $\mathcal X$
consists of normal operators, then $(\mathcal X\cup \mathcal X^*)'=\mathcal X'$ (by Theorem~1 in~\cite{Fuglede}, if $R\in L(\mathfrak H)$ commutes
with a normal operator $T$, then it commutes with $T^*$) and, therefore, we have $\mathcal A(\mathcal X)=\mathcal X''$. If $T$ is a closed densely
defined operator, then we shall write $\mathcal A(T)$ instead of $\mathcal A(\{T\})$, where $\{T\}$ is the one-point set containing~$T$.

Given a spectral measure $E$ on a measurable space $S$, we denote by $\mathcal P_E$ the set of all operators $E(A)$, where $A$ is a measurable subset
of $S$. Theorem~3 of Sec.~6.6 in~\cite{BirmanSolomjak} implies that $\{T,T^*\}'=\mathcal P'_{\mathcal E_T}$ for any normal operator $T$ and, hence,
\begin{equation}\label{601}
\mathcal A(T)=\mathcal A(\mathcal P_{\mathcal E_T}).
\end{equation}

\begin{lemma}\label{lll}
Let $T_1$ and $T_2$ be normal operators in $\mathfrak H$. Then the following statements are equivalent:
\begin{enumerate}
\item $T_1$ commutes with $T_2$.

\item $T_1$ commutes with every element of $\mathcal A(T_2)$.

\item Every element of $\mathcal A(T_1)$ commutes with every element of $\mathcal A(T_2)$.
\end{enumerate}
\end{lemma}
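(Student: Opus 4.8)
The plan is to reduce all three statements to a single inclusion of subalgebras of $L(\mathfrak H)$ by passing to spectral projections. The key preliminary observation is that, for a normal operator $T$, the commutant of the one-point set $\{T\}$, the commutant of the generated algebra $\mathcal A(T)$, and the commutant of the family of spectral projections all coincide:
\begin{equation*}
\{T\}' = \mathcal A(T)' = \mathcal P'_{\mathcal E_T}.
\end{equation*}
Indeed, since $T$ is normal, Theorem~1 in~\cite{Fuglede} gives $\{T\}'=\{T,T^*\}'$, and $\{T,T^*\}'=\mathcal P'_{\mathcal E_T}$ was recorded just before~(\ref{601}); on the other hand, the spectral projections are self-adjoint, so $\mathcal A(\mathcal P_{\mathcal E_T})=\mathcal P''_{\mathcal E_T}$, whence~(\ref{601}) yields $\mathcal A(T)=\mathcal P''_{\mathcal E_T}$ and therefore $\mathcal A(T)'=\mathcal P'''_{\mathcal E_T}=\mathcal P'_{\mathcal E_T}$, the last equality being the elementary identity $\mathcal Y'''=\mathcal Y'$.

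Granting this, I would translate each statement into an inclusion. Every element $R$ of $\mathcal A(T_2)$ lies in $L(\mathfrak H)$, so ``$T_1$ commutes with $R$'' means precisely $R\in\{T_1\}'$; hence statement~2 reads $\mathcal A(T_2)\subseteq\{T_1\}'$, and statement~3 reads $\mathcal A(T_2)\subseteq\mathcal A(T_1)'$. By the preliminary observation $\{T_1\}'=\mathcal A(T_1)'=\mathcal P'_{\mathcal E_{T_1}}$, so statements~2 and~3 are \emph{literally the same condition}
\begin{equation*}
\mathcal A(T_2)\subseteq \mathcal P'_{\mathcal E_{T_1}},
\end{equation*}
which disposes of the equivalence $(2)\Leftrightarrow(3)$ with no further work.

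It remains to match statement~1 with this condition. By statement~3 of Lemma~\ref{l0}, since $T_1$ and $T_2$ are both normal, statement~1 is equivalent to condition~(C), i.e.\ to the commuting of the spectral projections of $T_1$ and $T_2$, which is the inclusion $\mathcal P_{\mathcal E_{T_2}}\subseteq\mathcal P'_{\mathcal E_{T_1}}$. Now $\mathcal P'_{\mathcal E_{T_1}}$ is a von~Neumann algebra (being a commutant), and $\mathcal A(T_2)=\mathcal P''_{\mathcal E_{T_2}}$ is the smallest von~Neumann algebra containing the involutive set $\mathcal P_{\mathcal E_{T_2}}$; hence $\mathcal P_{\mathcal E_{T_2}}\subseteq\mathcal P'_{\mathcal E_{T_1}}$ holds if and only if $\mathcal A(T_2)\subseteq\mathcal P'_{\mathcal E_{T_1}}$. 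This gives $(1)\Leftrightarrow(2)$ and completes the chain.

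The step I expect to require the most care is the preliminary identity $\{T\}'=\mathcal P'_{\mathcal E_T}$ for possibly \emph{unbounded} $T$: here one must not treat $T$ as an element of $\mathcal A(T)$ (it need not be bounded), so the naive reading of $(3)\Rightarrow(2)$ as ``take $R=T_1$'' is unavailable. Everything instead flows through the bounded spectral projections, and the only nontrivial analytic inputs are the Fuglede theorem and the identification $\{T,T^*\}'=\mathcal P'_{\mathcal E_T}$, both already available in the text; the remaining manipulations are purely algebraic properties of commutants.
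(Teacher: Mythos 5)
Your proof is correct and follows essentially the same route as the paper's: both arguments reduce all three statements to inclusions among commutants of spectral projections via~(\ref{601}), Lemma~\ref{l0}(C), Fuglede's theorem, and the minimality of the bicommutant. The only difference is organizational — the paper proves $(1)\Leftrightarrow(3)$ and then $(2)\Leftrightarrow(3)$ (the latter via Lemma~\ref{l1}), whereas you collapse $(2)$ and $(3)$ into the single inclusion $\mathcal A(T_2)\subseteq\mathcal P'_{\mathcal E_{T_1}}$ first — and both versions are sound.
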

\begin{proof}
Let $\mathcal P_1$ and $\mathcal P_2$ be the sets of all spectral projections of $T_1$ and $T_2$ respectively. By~(\ref{601}), we have
\begin{equation}\label{at1} \mathcal
A(T_1)=\mathcal A(\mathcal P_1),\quad \mathcal A(T_2)=\mathcal A(\mathcal P_2).
\end{equation}
By Lemma~\ref{l0}, $T_1$ and $T_2$ commute if and only if they satisfy~(C), i.e., if and only if $\mathcal P_1\subset \mathcal P'_2$. Since $\mathcal
P'_2 = \mathcal A(\mathcal P_2)'$ and $\mathcal A(\mathcal P_1)$ is the smallest von~Neumann algebra containing $\mathcal P_1$, the latter inclusion
is equivalent to the relation $\mathcal A(\mathcal P_1)\subset \mathcal A(\mathcal P_2)'$, which means, in view of~(\ref{at1}), that all elements of
$\mathcal A(T_1)$ commute with all elements of $\mathcal A(T_2)$. Thus, statements~1 and~3 are equivalent. By Lemma~\ref{l1}, $T_1$ commutes with
every element of $\mathcal A(T_2)$ if and only if $\mathcal A(T_2)\subset \{T_1,T_1^*\}'$, i.e., if and only if $\mathcal A(T_1)\subset \mathcal
A(T_2)'$. Hence, statements~2 and~3 are equivalent and the lemma is proved.
\end{proof}

\begin{lemma}\label{l_norm}
A closed densely defined operator $T$ in $\mathfrak H$ is normal if and only if the algebra $\mathcal A(T)$ is Abelian.
\end{lemma}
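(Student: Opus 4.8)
The plan is to prove the two implications separately, treating the ``only if'' direction with the spectral calculus already developed and the ``if'' direction by passing to the bounded transform of $T$. For the forward direction I assume $T$ is normal. Then~(\ref{601}) gives $\mathcal A(T)=\mathcal A(\mathcal P_{\mathcal E_T})$, where $\mathcal P_{\mathcal E_T}$ is the family of spectral projections of $T$. Since $\mathcal E_T$ is a spectral measure, $\mathcal E_T(A_1)\mathcal E_T(A_2)=\mathcal E_T(A_1\cap A_2)=\mathcal E_T(A_2)\mathcal E_T(A_1)$, so $\mathcal P_{\mathcal E_T}$ is a self-adjoint (hence involutive) family of pairwise commuting projections; thus $\mathcal P_{\mathcal E_T}\subset \mathcal P_{\mathcal E_T}'$ and $\mathcal A(\mathcal P_{\mathcal E_T})=\mathcal P_{\mathcal E_T}''$. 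Taking commutants in $\mathcal P_{\mathcal E_T}\subset \mathcal P_{\mathcal E_T}'$ reverses the inclusion and yields $\mathcal P_{\mathcal E_T}''\subset \mathcal P_{\mathcal E_T}'$. As $\mathcal P_{\mathcal E_T}'=\mathcal P_{\mathcal E_T}'''=(\mathcal P_{\mathcal E_T}'')'$, this says precisely that every element of $\mathcal A(T)=\mathcal P_{\mathcal E_T}''$ commutes with every other, i.e., $\mathcal A(T)$ is Abelian.

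For the converse I assume $\mathcal A(T)$ is Abelian and deduce normality via the bounded transform $Z=T(1+T^*T)^{-1/2}$, which is a well-defined element of $L(\mathfrak H)$ since $T^*T$ is self-adjoint and nonnegative. The crucial step is to show $Z\in \mathcal A(T)=\{T,T^*\}''$. To this end I first check that every $R\in \{T,T^*\}'$ commutes with $T^*T$: using that $R$ commutes in the sense~(A) with both $T$ and $T^*$, a direct domain computation shows, for $\Psi\in D_{T^*T}$, that $R\Psi\in D_{T^*T}$ and $RT^*T\Psi=T^*TR\Psi$ (one applies $R D_T\subset D_T$ and $TR\Psi=RT\Psi$, then uses $T\Psi\in D_{T^*}$ together with the commutation of $R$ with $T^*$). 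Since $T^*T$ is self-adjoint, its commutant coincides with the commutant of its spectral projections (the fact used in deriving~(\ref{601})), so $R$ commutes with every bounded Borel function of $T^*T$, in particular with $(1+T^*T)^{-1/2}$. Combining this with $RT\subset TR$ gives $RZ=ZR$ for all $R\in\{T,T^*\}'$, whence $Z\in\{T,T^*\}''=\mathcal A(T)$; and since $\mathcal A(T)$ is a von~Neumann algebra, $Z^*\in\mathcal A(T)$ as well.

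The proof then concludes by exploiting commutativity. Because $\mathcal A(T)$ is Abelian, $Z^*Z=ZZ^*$. Using the standard intertwining identity $Tf(T^*T)=f(TT^*)T$ for bounded Borel $f$, one computes $Z^*Z=1-(1+T^*T)^{-1}$ and $ZZ^*=1-(1+TT^*)^{-1}$, so the normality of $Z$ forces $(1+T^*T)^{-1}=(1+TT^*)^{-1}$. Inverting the two bounded positive operators gives $T^*T=TT^*$ with $D_{T^*T}=D_{TT^*}$, which is exactly the normality of $T$.

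I expect the main obstacle to be this converse direction, and within it the verification that $Z\in\mathcal A(T)$. The delicate points are the careful tracking of the domains of the unbounded operators $T$, $T^*$, and $T^*T$ in the commutation argument, and the passage from commutation with $T^*T$ to commutation with $(1+T^*T)^{-1/2}$, which relies on the functional calculus for self-adjoint operators. The final normality equivalence, by contrast, is a routine computation with the bounded transform.
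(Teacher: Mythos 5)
Your proof is correct, and the converse direction follows a genuinely different route from the paper's. The paper works with the polar decomposition $T=U|T|$, $|T|=(T^*T)^{1/2}$: it shows that $U\in\mathcal A(T)$ and $\mathcal A(|T|)\subset\mathcal A(T)$ (the former by a direct computation on $\mathrm{Ran}\,|T|$ and its orthogonal complement), then uses the commutativity of $\mathcal A(T)$ to conclude that $U$ commutes with $|T|$, so that $TT^*=UT^*TU^*$ is an extension of $U^*UT^*T=T^*T$ and the two self-adjoint operators coincide. You instead use the bounded transform $Z=T(1+T^*T)^{-1/2}$; your verification that $Z\in\{T,T^*\}''$ is the exact analogue of the paper's verification that $U\in\mathcal A(T)$, and it runs on the same mechanism (every $R$ in the commutant commutes with $T^*T$, hence with bounded Borel functions of it, hence with $Z$), so that step is sound, domains included. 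What your route buys is the avoidance of the partial isometry and the case analysis on $\overline{\mathrm{Ran}\,|T|}$ versus its orthogonal complement: once $Z$ is known to be a normal element of $L(\mathfrak H)$, everything reduces to the identity of bounded operators $(1+T^*T)^{-1}=(1+TT^*)^{-1}$. The price is that your computation of $ZZ^*$ rests on the adjoint formula $Z^*=T^*(1+TT^*)^{-1/2}$, equivalently the intertwining $Tf(T^*T)=f(TT^*)T$; this is indeed standard, but it is the one input not established in the paper or its cited references in the form you use it, and its usual proofs themselves pass through the polar decomposition or a resolvent computation, so the two arguments end up being of comparable depth rather than yours being strictly shorter. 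The forward direction is the same in substance in both treatments: the paper invokes Lemma~\ref{lll} with $T_1=T_2=T$, while you unwind the bicommutant of the pairwise commuting spectral projections directly; both rest on~(\ref{601}).
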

\begin{proof}
If $T$ is normal, then it commutes with itself and Lemma~\ref{lll} shows that $\mathcal A(T)$ is Abelian. Let $T$ be a closed densely defined
operator such that $\mathcal A(T)$ is Abelian. Let $|T|=(T^*T)^{1/2}$. Then $D_{|T|}=D_T$, the range $\mathrm{Ran}\,|T|$ of $|T|$ coincides with
$\mathrm{Ran}\,T^*$, and we have the polar decompositions
\begin{equation}\label{polar_dec}
T=U|T|,\quad T^* = |T|U^*
\end{equation}
where $U$ is a partially isometric operator $U$ in $\mathfrak H$ with the initial space $\overline{\mathrm{Ran}\,|T|}$ (see, e.g., Sec.~8.1
in~\cite{BirmanSolomjak} for details on polar decomposition). By Theorem~4 of Sec.~8.1 in~\cite{BirmanSolomjak}, we have
\begin{equation}\label{tt*}
TT^* = U T^*T U^*.
\end{equation}
We now show that
\begin{equation}\label{uu}
U\in \mathcal A(T), \quad \mathcal A(|T|)\subset \mathcal A(T).
\end{equation}
Let $R\in \{T,T^*\}'$. Then $R$ commutes with $T^*T$, and it follows from Theorem~2 of Sec.~6.3 and Theorem~8 of Sec.~5.4 in~\cite{BirmanSolomjak}
that $R$ commutes with $|T|$. This implies the second relation in~(\ref{uu}). Let $\Psi \in \mathrm{Ran}\,|T|$ and $\Phi\in D_T$ be such that
$\Psi=|T|\Phi$. Then $R\Phi\in D_T$, and we have
\[
RU\Psi = R T\Phi = T R\Phi = U |T| R\Phi = UR\Psi.
\]
If $\Psi\in \mathrm{Ran}\,|T|^\bot$, then $R\Psi\in \mathrm{Ran}\,|T|^\bot$ because $R$ and $|T|$ commute, and we have $UR\Psi = RU\Psi=0$. We thus
see that $RU=UR$ everywhere on $\mathfrak H$ and, therefore, the first relation in~(\ref{uu}) holds. Hence, $U$ is normal, and it follows
from~(\ref{uu}) and Lemma~\ref{lll} that $U$ commutes with $|T|$. Now equalities~(\ref{polar_dec}) imply that $U$ commutes with both $T$ and $T^*$,
and Lemma~\ref{l1} ensures that $U^*$ also commutes with both $T$ and $T^*$. Hence, $U^*$ commutes with $UT^*T$, and it follows from~(\ref{tt*}) that
$TT^*$ is an extension of the operator $U^*UT^*T$. But $U^*U$ is the orthogonal projection onto the initial space
$\overline{\mathrm{Ran}\,|T|}=\overline{\mathrm{Ran}\,T^*}$ and, therefore, $U^*UT^*T=T^*T$. Thus, $TT^*$ is an extension of $T^*T$. Since both
operators are self-adjoint, this implies $TT^*=T^*T$. The lemma is proved.
\end{proof}

If all elements of an involutive set $\mathcal X\subset L(\mathfrak H)$ pairwise commute, then the algebra $\mathcal A(\mathcal X)$ is Abelian.
Indeed, we have $\mathcal X\subset \mathcal X'$ and, therefore, $\mathcal A(\mathcal X)\subset \mathcal X'$, whence the statement follows because
$\mathcal X'=\mathcal A(\mathcal X)'$.

\begin{lemma}\label{l4d}
Let $S$ be a measurable space, $\mathfrak H$ be a separable Hilbert space, and $E$ be a spectral measure for $(\mathfrak H,S)$. Then $\mathcal
A(\mathcal P_E)$ coincides with the set of all $J^E_f$, where $f$ is an $E$-measurable $E$-essentially bounded complex function on $S$. A closed
densely defined operator $T$ in $\mathfrak H$ is equal to $J^E_f$ for an $E$-measurable complex function $f$ on $S$ if and only if
\begin{equation}\label{at}
\mathcal A(T)\subset \mathcal A(\mathcal P_E)
\end{equation}
\end{lemma}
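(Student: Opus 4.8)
The plan is to prove the two assertions separately, writing $\mathcal M$ for the set of all $J^E_f$ with $f$ an $E$-measurable $E$-essentially bounded complex function on $S$. I would begin with the easy inclusion of the first assertion. Note that $\mathcal M$ is an involutive unital subalgebra of $L(\mathfrak H)$: one has $J^E_1=E(S)=I$, and for bounded $f,g$ the closures in (\ref{sum_prod}) are redundant since all operators in sight are bounded and everywhere defined, giving $J^E_{f+g}=J^E_f+J^E_g$ and $J^E_{fg}=J^E_fJ^E_g$, while $J^E_{\bar f}=(J^E_f)^*$ and, by (\ref{normbound}), $f\mapsto J^E_f$ is isometric for the $E$-essential sup norm. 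The inclusion $\mathcal M\subseteq\mathcal A(\mathcal P_E)$ is then immediate: a bounded $f$ is an $E$-essential uniform limit of simple functions $f_n$, for which $J^E_{f_n}$ is a finite linear combination of the projections $E(A)\in\mathcal P_E$; by (\ref{normbound}) we have $J^E_{f_n}\to J^E_f$ in operator norm, and $\mathcal A(\mathcal P_E)$, being a von~Neumann algebra, is norm closed, so $J^E_f\in\mathcal A(\mathcal P_E)$.

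The reverse inclusion $\mathcal A(\mathcal P_E)\subseteq\mathcal M$ is the crux, and it is where separability of $\mathfrak H$ is used. By the double commutant theorem it suffices to show $\mathcal M$ is closed in the strong operator topology, for then $\mathcal M$ is a von~Neumann algebra containing $\mathcal P_E$ and hence contains $\mathcal A(\mathcal P_E)=\mathcal P_E''$. I would decompose $\mathfrak H=\bigoplus_n\mathfrak H_n$ into a countable orthogonal sum of cyclic subspaces for the abelian algebra $\mathcal A(\mathcal P_E)$, with cyclic vectors $\Psi_n$; each $\mathfrak H_n$ reduces $\mathcal P_E$ and is carried, via $E(A)\Psi_n\mapsto\chi_A$, unitarily onto $L^2(S,E_{\Psi_n})$ so that $E(A)$ becomes multiplication by $\chi_A$ and $J^E_f$ becomes multiplication by $f$. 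Given $R$ in the strong closure of $\mathcal M$ (so $R\in\mathcal A(\mathcal P_E)$, the latter being strongly closed), $R$ commutes with every $E(A)$ and with the cyclic projections $P_n\in\mathcal A(\mathcal P_E)'$; hence its restriction to $\mathfrak H_n$ is carried to a bounded operator on $L^2(S,E_{\Psi_n})$ commuting with all multiplications by characteristic functions, thus by all $L^\infty$ functions, and since the multiplication algebra is maximal abelian, this restriction is multiplication by some $g_n\in L^\infty(S,E_{\Psi_n})$ with $\|g_n\|_\infty\le\|R\|$. The main obstacle in Part~1 is to glue the locally defined $g_n$ into a single $E$-essentially bounded $f$ with $R=J^E_f$; this is the standard consistency step of multiplicity theory (e.g.\ passing to a finite measure equivalent to $E$ and using Radon--Nikodym, or ordering the $E_{\Psi_n}$ in a decreasing chain), and it is the only genuinely delicate point here.

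For the second assertion, suppose first $T=J^E_f$ with $f$ an $E$-measurable (possibly unbounded) complex function. Then $T$ is normal, and by (\ref{600}) its spectral measure is $\mathcal E_T=f_*E$, so $\mathcal E_T(B)=E(f^{-1}(B))\in\mathcal P_E$ for every Borel $B\subseteq\C$; hence $\mathcal P_{\mathcal E_T}\subseteq\mathcal A(\mathcal P_E)$, and (\ref{601}) gives $\mathcal A(T)=\mathcal A(\mathcal P_{\mathcal E_T})\subseteq\mathcal A(\mathcal P_E)$, which is (\ref{at}). Conversely, assume $T$ is closed, densely defined, and (\ref{at}) holds. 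Since $\mathcal P_E$ is involutive and pairwise commuting, $\mathcal A(\mathcal P_E)$ is Abelian (by the remark preceding the lemma), so $\mathcal A(T)$ is Abelian and Lemma~\ref{l_norm} shows $T$ is normal. By (\ref{601}) the spectral projections of $T$ lie in $\mathcal A(T)\subseteq\mathcal A(\mathcal P_E)$; but by Part~1 every projection in $\mathcal A(\mathcal P_E)$ is a $J^E_g$ with $\bar g=g$ and $g^2=g$ $E$-a.e.\ (using injectivity of $g\mapsto J^E_g$ from (\ref{normbound})), i.e.\ $g=\chi_A$ $E$-a.e., so it equals $E(A)$. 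Thus for each Borel $B\subseteq\C$ there is a measurable $A_B\subseteq S$, unique mod $E$-null sets, with $\mathcal E_T(B)=E(A_B)$. It remains to realize this spectral-measure morphism by a point map: I would construct an $E$-a.e.\ defined measurable $f\colon S\to\C$ with $f^{-1}(B)=A_B$ mod $E$-null sets by tiling $\C$ at each scale $n$ into half-open dyadic squares $\{Q^n_k\}_k$, using that $\{A_{Q^n_k}\}_k$ is a measurable partition of $S$ mod $E$-null, nested as $n$ grows (because $\mathcal E_T$ is a spectral measure), and taking $f$ to be the $E$-a.e.\ limit of the simple functions equal to the center of $Q^n_k$ on $A_{Q^n_k}$. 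Then $f_*E=\mathcal E_T$, so (\ref{600}) yields $\mathcal E_{J^E_f}=\mathcal E_T$, and uniqueness of the spectral measure of a normal operator gives $J^E_f=T$. The verification that $f^{-1}(B)=A_B$ mod $E$-null holds simultaneously for the countably many generating squares on one fixed co-null set is the routine but slightly technical heart of this direction.
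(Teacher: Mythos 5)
Your argument is correct in outline, but it takes a genuinely different (and much longer) route than the paper at both of its nontrivial points. For the first assertion, the paper simply invokes Theorem~8 of Sec.~5.4 and Theorem~5 of Sec.~7.4 of Birman--Solomjak, whereas you re-derive the hard inclusion $\mathcal A(\mathcal P_E)\subseteq\mathcal M$ from scratch via the cyclic decomposition of $\mathfrak H$, maximal abelianness of multiplication algebras, and the gluing step of multiplicity theory; that is a legitimate self-contained proof of the cited result, at the cost of carrying out the consistency argument you only sketch. The more interesting divergence is in the converse of the second assertion. After establishing normality of $T$ via Lemma~\ref{l_norm} exactly as the paper does, the paper reduces the unbounded case to the bounded one by a compression trick: with $\chi(z)=z/(|z|+1)$ the operator $\chi(T)$ is bounded and lies in $\mathcal A(T)\subseteq\mathcal A(\mathcal P_E)$, so $\chi(T)=J^E_g$ by the first assertion, and then $T=\chi^{-1}(\chi(T))=J^E_{\chi^{-1}\circ g}$ by~(\ref{composition}). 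You instead build $f$ by hand from the Boolean homomorphism $B\mapsto A_B$ given by $\mathcal E_T(B)=E(A_B)$, using dyadic tilings of $\C$. Both routes work, but the paper's gets the unbounded case essentially for free, while yours has to confront the one point you gloss over: as literally stated, $f^{-1}(Q^n_k)=A_{Q^n_k}$ mod $E$-null can fail, because for $s\in A_{Q^n_k}$ your limit $f(s)$ is only guaranteed to lie in the \emph{closure} $\overline{Q^n_k}$ and may fall on a grid line belonging to an adjacent half-open square, and $\mathcal E_T$ may charge grid lines. This is repairable (e.g.\ translate the dyadic grid so that the finite measure $\sum_n 2^{-n}E_{\Psi_n}$ gives no mass to the grid lines, or verify $f_*E=\mathcal E_T$ on a $\pi$-system of boundary-negligible rectangles and apply Dynkin's lemma), but it must be addressed; the $\chi$-trick of the paper avoids the issue entirely.
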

\begin{proof}
By Theorem~8 of Sec.~5.4 in~\cite{BirmanSolomjak}, we have $\mathcal A(J^E_f)\subset \mathcal A(\mathcal P_E)$ for any $E$-measurable complex
function $f$ on $S$. This implies that $J^E_f\in \mathcal A(\mathcal P_E)$ for $E$-essentially bounded $f$ because $J^E_f$ belongs to $L(\mathfrak
H)$ for such $f$ and, hence, is contained in $\mathcal A(J^E_f)$. Conversely, Theorem~5 of Sec.~7.4 in~\cite{BirmanSolomjak} shows that any element
of $\mathcal A(\mathcal P_E)$ is equal to $J^E_f$ for some $E$-measurable $E$-essentially bounded complex function $f$ on $S$. It remains to prove
that any closed densely defined operator $T$ such that (\ref{at}) holds is equal to $J^E_f$ for some $E$-measurable complex function $f$ on $S$.
Since the elements of $\mathcal P_E$ pairwise commute, the algebra $\mathcal A(\mathcal P_E)$ is Abelian, and Lemma~\ref{l_norm} implies that $T$ is
normal. Let $\chi$ be a complex function on $\C$ defined by the relation
\[
\chi(z) = \frac{z}{|z|+1}.
\]
It is easy to see that the function $\chi$ is one-to-one and maps $\C$ onto the open unit disc $\mathbb D = \{z\in\C : |z|<1\}$. Its inverse function
$\chi^{-1}$ is given by
\[
\chi^{-1}(z) = \frac{z}{1-|z|},\quad |z|<1.
\]
Since $\chi$ is bounded and measurable on $\C$, we have $\chi(T)\in\mathcal A(\mathcal P_{\mathcal E_T})$. By~(\ref{601}), it follows that
$\chi(T)\in\mathcal A(T)$. In view of~(\ref{at}), this implies that $\chi(T)\in \mathcal A(\mathcal P_E)$ and, therefore, $\chi(T)=J^E_g$ for some
$E$-measurable function $g$ on $S$. As $\C\setminus \mathbb D$ is a $\chi_*\mathcal E_T$-null set and $\chi^{-1}$ is a measurable map from $\mathbb
D$ to $\C$, the function $\chi^{-1}$ is $\chi_*\mathcal E_T$-measurable on $\C$. By~(\ref{600}), we have $\mathcal E_{\chi(T)}=\chi_* \mathcal E_T$
and, hence, $\chi^{-1}$ is $\mathcal E_{\chi(T)}$-measurable. It therefore follows from~(\ref{composition}) that
\[
T = \chi^{-1}(\chi(T)) = \chi^{-1}(J^E_g) = J^E_f
\]
for $f=\chi^{-1}\circ g$. The lemma is proved.
\end{proof}

We say that two sets $\mathcal X$ and $\mathcal Y$ of closed densely defined operators in $\mathfrak H$ are equivalent if $\mathcal A(\mathcal X) =
\mathcal A(\mathcal Y)$. We say that $\mathcal X$ is equivalent to a closed densely defined operator $T$ if $\mathcal X$ is equivalent to the
one-point set $\{T\}$. Two closed densely defined operators $T_1$ and $T_2$ are called equivalent if $\{T_1\}$ and $\{T_2\}$ are equivalent.

\begin{lemma}\label{l4}
Let $\{\mathcal X_\iota\}_{\iota\in I}$ and $\{\mathcal Y_\iota\}_{\iota\in I}$ be families of sets of closed densely defined operators in $\mathfrak
H$ and let $\mathcal X=\bigcup_{\iota\in I}\mathcal X_\iota$ and $\mathcal Y=\bigcup_{\iota\in I}\mathcal Y_\iota$. If $\mathcal X_\iota$ and
$\mathcal Y_\iota$ are equivalent for every $\iota\in I$, then $\mathcal X$ and $\mathcal Y$ are equivalent.
\end{lemma}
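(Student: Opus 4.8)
The plan is to reduce the claimed equality $\mathcal A(\mathcal X)=\mathcal A(\mathcal Y)$ of von~Neumann algebras to an equality of commutants, exploiting the facts that a von~Neumann algebra is recovered from its commutant and that the commutant operation turns unions into intersections. Writing $\mathcal Z_\iota=\mathcal X_\iota\cup\mathcal X_\iota^*$ and $\mathcal W_\iota=\mathcal Y_\iota\cup\mathcal Y_\iota^*$, my first observation is that taking adjoints commutes with forming unions, so that $\mathcal X\cup\mathcal X^*=\bigcup_{\iota\in I}\mathcal Z_\iota$ and $\mathcal Y\cup\mathcal Y^*=\bigcup_{\iota\in I}\mathcal W_\iota$. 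In particular both are involutive sets of closed densely defined operators, so that Lemma~\ref{l2} applies to them as well as to each $\mathcal Z_\iota$ and $\mathcal W_\iota$ individually.

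First I would record the elementary identity $\bigl(\bigcup_{\iota}\mathcal Z_\iota\bigr)'=\bigcap_{\iota}\mathcal Z_\iota'$, valid for an arbitrary family of sets of (possibly unbounded) operators directly from the definition of the commutant: a bounded operator commutes with every element of the union precisely when it commutes with every element of each $\mathcal Z_\iota$. Next I would translate the hypothesis into this language. Since $\mathcal A(\mathcal X_\iota)=\mathcal Z_\iota''$ by definition, and since $\mathcal Z_\iota'$ is a von~Neumann algebra by Lemma~\ref{l2} and hence satisfies $\mathcal Z_\iota'=(\mathcal Z_\iota')''=\mathcal Z_\iota'''$, I get $\mathcal A(\mathcal X_\iota)'=\mathcal Z_\iota'''=\mathcal Z_\iota'$, and likewise $\mathcal A(\mathcal Y_\iota)'=\mathcal W_\iota'$. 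The assumed equivalence $\mathcal A(\mathcal X_\iota)=\mathcal A(\mathcal Y_\iota)$ then yields $\mathcal Z_\iota'=\mathcal W_\iota'$ for every $\iota\in I$.

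Assembling these, I would compute
\[
\mathcal A(\mathcal X)'=(\mathcal X\cup\mathcal X^*)'''=\Bigl(\bigcup_{\iota}\mathcal Z_\iota\Bigr)'=\bigcap_{\iota}\mathcal Z_\iota'=\bigcap_{\iota}\mathcal W_\iota'=\Bigl(\bigcup_{\iota}\mathcal W_\iota\Bigr)'=\mathcal A(\mathcal Y)',
\]
where the outer equalities use $\mathcal A(\mathcal X)=(\mathcal X\cup\mathcal X^*)''$ together with the triple-commutant identity for the von~Neumann algebra $(\mathcal X\cup\mathcal X^*)'$ (and symmetrically for $\mathcal Y$), the second and fifth equalities are the union-to-intersection identity, and the middle equality is the hypothesis just established. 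Taking commutants once more and invoking $\mathcal A=\mathcal A''$ for the von~Neumann algebras $\mathcal A(\mathcal X)$ and $\mathcal A(\mathcal Y)$ then gives $\mathcal A(\mathcal X)=\mathcal A(\mathcal Y)$, as required.

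The only genuine subtlety, and the point I would be most careful about, is that the elements of the $\mathcal X_\iota$ may be unbounded, so the naive inclusion $\mathcal S\subseteq\mathcal S''$ fails and one cannot simply invoke the standard bounded-operator bookkeeping to get $\mathcal S'=\mathcal S'''$. This is exactly why I route every double- and triple-commutant step through Lemma~\ref{l2}: it guarantees that $\mathcal Z_\iota'$, $\mathcal W_\iota'$, $(\mathcal X\cup\mathcal X^*)'$, and $(\mathcal Y\cup\mathcal Y^*)'$ are genuine von~Neumann algebras, for which $\mathcal A=\mathcal A''$ holds and hence the triple-commutant identities are legitimate. Everything else is purely formal manipulation of commutants, so I expect no further obstacles.
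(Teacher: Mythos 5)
Your proof is correct and follows essentially the same route as the paper's: both pass to commutants, use the identity $\bigl(\bigcup_\iota\mathcal Z_\iota\bigr)'=\bigcap_\iota\mathcal Z_\iota'$, and transfer the hypothesis to the level of commutants before returning via the bicommutant. The only difference is that the paper invokes Dixmier (Sec.~I.1.1, Proposition~1) for the final step, whereas you carry out the triple-commutant bookkeeping explicitly via Lemma~\ref{l2}; your remark that the naive inclusion $\mathcal S\subseteq\mathcal S''$ fails for unbounded operators is exactly the right point to be careful about.
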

\begin{proof}
Set $M_\iota = (\mathcal X_\iota\cup \mathcal X^*_\iota)'$ and $M = (\mathcal X\cup \mathcal X^*)'$. Then we have $M=\bigcap_{\iota\in I} M_\iota$.
Hence $M'=\mathcal A(\mathcal X)$ coincides with the von~Neumann algebra generated by $\bigcup_{\iota\in I}M'_\iota=\bigcup_{\iota\in I}\mathcal
A(\mathcal X_\iota)$ (see~\cite{Dixmier}, Sec.~I.1.1, Proposition~1). Analogously, $\mathcal A(\mathcal Y)$ is the von~Neumann algebra generated by
$\bigcup_{\iota\in I}\mathcal A(\mathcal Y_\iota)$. Since $\mathcal A(\mathcal X_\iota)=\mathcal A(\mathcal Y_\iota)$ for all $\iota$, it follows
that $\mathcal A(\mathcal X)=\mathcal A(\mathcal Y)$. The lemma is proved.
\end{proof}

\begin{lemma}\label{l3a}
Let $\mathcal X$ be a set of closed densely defined operators in a Hilbert space $\mathfrak H$ and let $T$ be a closed densely defined operator such
that both $T$ and $T^*$ commute with all elements of $\mathcal X$. Then $T$ commutes with all elements of $\mathcal A(\mathcal X)$.
\end{lemma}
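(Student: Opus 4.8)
The plan is to prove the slightly stronger statement that $\mathcal A(\mathcal X)\subseteq \{T,T^*\}'$; since $\{T,T^*\}'\subseteq\{T\}'$, this at once yields that every (bounded) element of $\mathcal A(\mathcal X)$ commutes with $T$, which is the assertion. Note that $\{T,T^*\}$ is an involutive set of closed densely defined operators, so $\{T,T^*\}'$ is a von~Neumann algebra by Lemma~\ref{l2}, and $\mathcal A(T)=\{T,T^*\}''$ is its commutant.

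First I would reduce everything to a single operator, the claim being that $\mathcal A(X)\subseteq\{T,T^*\}'$ for each $X\in\mathcal X$, i.e.\ that both $T$ and $T^*$ commute with every element of $\mathcal A(X)$. Granting this, one has $(\mathcal X\cup\mathcal X^*)'=\bigcap_{X\in\mathcal X}\{X,X^*\}'=\bigcap_{X\in\mathcal X}\mathcal A(X)'=\bigl(\bigcup_{X\in\mathcal X}\mathcal A(X)\bigr)'$, exactly as in the proof of Lemma~\ref{l4}; taking commutants gives $\mathcal A(\mathcal X)=\bigl(\bigcup_{X\in\mathcal X}\mathcal A(X)\bigr)''$. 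Since each $\mathcal A(X)\subseteq\{T,T^*\}'$, the involutive family of bounded operators $\bigcup_{X\in\mathcal X}\mathcal A(X)$ is contained in the von~Neumann algebra $\{T,T^*\}'$, and hence so is its bicommutant $\mathcal A(\mathcal X)$.

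The core of the argument is thus the single-operator statement, which I would settle by a dichotomy according to how the commutation of $T$ with $X$ is realized through conditions~(A)--(C). If one of $T,X$ is bounded, I would use Lemma~\ref{l1} to transfer commutation to the adjoints: when $X$ is bounded, the hypothesis together with Lemma~\ref{l1} gives $X,X^*\in\{T,T^*\}'$, whence $\mathcal A(X)=\{X,X^*\}''\subseteq\{T,T^*\}'$; when $T$ is bounded, the same lemma shows that $T$ and $T^*$ each commute with both $X$ and $X^*$, i.e.\ $T,T^*\in\{X,X^*\}'=\mathcal A(X)'$, which is again the desired conclusion. The remaining, and principal, case is when \emph{both} $T$ and $X$ are unbounded. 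Here the key observation is that commutation can then only hold via condition~(C), so $T$ and $X$ are forced to be normal; Lemma~\ref{lll} applies directly and shows that $T$, and likewise the normal operator $T^*$, commutes with every element of $\mathcal A(X)$.

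I expect the delicate point to be precisely this last case: Lemma~\ref{l1} is unavailable since neither operator is bounded, and one must instead read off from the very definition of commuting operators that the only admissible alternative is~(C), thereby gaining normality and bringing Lemma~\ref{lll} into play. Once the case split is arranged so that each $X\in\mathcal X$ falls under the bounded branch or the normal branch, the final assembly through the commutant identity above is routine.
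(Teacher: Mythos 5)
Your proof is correct and follows essentially the same route as the paper: the same three-way case split on a single operator $X\in\mathcal X$ (using Lemma~\ref{l1} when one of $T,X$ is bounded, and forced normality plus Lemma~\ref{lll} when both are unbounded), followed by assembling the single-operator inclusions $\mathcal A(X)\subseteq\{T,T^*\}'$ into $\mathcal A(\mathcal X)\subseteq\{T,T^*\}'$ via the commutant identity for unions. The only cosmetic difference is that you re-derive that identity directly where the paper cites Lemma~\ref{l4}.
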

\begin{proof}
We first show that $T$ commutes with all elements of $\mathcal A(R)$ for any $R\in \mathcal X$. If $T\in L(\mathfrak H)$, then Lemma~\ref{l1} implies
that $R^*$ commutes with $T$. This means that $T\in \{R,R^*\}'$ and, therefore, $T$ commutes with all elements of $\mathcal A(R) = \{R,R^*\}''$. If
$R\in L(\mathfrak H)$, then $R\in \{T,T^*\}'$. By Lemma~\ref{l2}, $\{T,T^*\}'$ is a von~Neumann algebra. Since $\mathcal A(R)$ is the smallest
von~Neumann algebra containing $R$, we have $\mathcal A(R)\subset \{T,T^*\}'$ and, hence, $T$ commutes with all elements of $\mathcal A(R)$. If
neither $T$ nor $R$ belongs to $L(\mathfrak H)$, then $T$ and $R$ are normal and the statement follows from Lemma~\ref{lll}. Interchanging the roles
of $T$ and $T^*$, we conclude that $T^*$ also commutes with all elements of $\mathcal A(R)$ for any $R\in \mathcal X$. Let $\mathcal Y =
\bigcup_{R\in \mathcal X}\mathcal A(R)$. Clearly, we have $\mathcal Y\subset \{T,T^*\}'$, and Lemma~\ref{l4} implies that $\mathcal A(\mathcal Y)=
\mathcal A(\mathcal X)$. Since $\mathcal A(\mathcal X)$ is the smallest von~Neumann algebra containing $\mathcal Y$, we have $\mathcal A(\mathcal
X)\subset \{T,T^*\}'$ and, hence, $T$ commutes with all elements of $\mathcal A(\mathcal X)$. The lemma is proved.
\end{proof}

\begin{lemma}\label{l41}
Let $\mathfrak H$ be a separable Hilbert space and $\mathcal X$ be a set of closed densely defined operators in $\mathfrak H$. Then there is a
countable subset $\mathcal X_0$ of $\mathcal X$ which is equivalent to $\mathcal X$.
\end{lemma}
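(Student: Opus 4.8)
The plan is to pass to commutants and reduce the assertion to a purely set-theoretic statement about families of von~Neumann algebras, which is then settled by a compactness/Lindel\"of argument made possible by the separability of $\mathfrak H$.

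First I would rewrite the target $\mathcal A(\mathcal X_0)=\mathcal A(\mathcal X)$ in terms of commutants. Since $\mathcal A(\mathcal X)=(\mathcal X\cup\mathcal X^*)''$ and $\mathcal S'=\mathcal S'''$ for every $\mathcal S$, we have
\[
\mathcal A(\mathcal X)'=(\mathcal X\cup\mathcal X^*)'=\bigcap_{R\in\mathcal X}\{R,R^*\}'.
\]
For each $R\in\mathcal X$ the set $\{R,R^*\}$ is involutive (because $R^{**}=R$) and consists of closed densely defined operators, so $M_R:=\{R,R^*\}'$ is a von~Neumann algebra by Lemma~\ref{l2}. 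The same computation applied to a subset $\mathcal X_0$ gives $\mathcal A(\mathcal X_0)'=\bigcap_{R\in\mathcal X_0}M_R$. Hence it suffices to find a countable $\mathcal X_0\subset\mathcal X$ with $\bigcap_{R\in\mathcal X_0}M_R=\bigcap_{R\in\mathcal X}M_R$: taking commutants of this equality and using that $\mathcal A(\mathcal X_0)$ and $\mathcal A(\mathcal X)$ coincide with their own bicommutants then yields $\mathcal A(\mathcal X_0)=\mathcal A(\mathcal X)$.

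The core is therefore the following statement: if $\{M_\iota\}_{\iota\in I}$ is any family of von~Neumann algebras in $\mathfrak H$ and $N=\bigcap_{\iota\in I}M_\iota$, then $N=\bigcap_{\iota\in I_0}M_\iota$ for some countable $I_0\subset I$. Here I would feed in the separability of $\mathfrak H$ through the standard fact that the closed unit ball $B$ of $L(\mathfrak H)$ is compact and metrizable, hence second countable and therefore hereditarily Lindel\"of, in the weak operator topology. Each $M_\iota$ is weakly closed (von~Neumann algebras are strongly closed, and every involutive strongly closed subalgebra of $L(\mathfrak H)$ is weakly closed), so $B\setminus M_\iota$ is open in $B$. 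Since $N\subset M_\iota$ for every $\iota$, we have $B\setminus M_\iota\subset B\setminus N$ and
\[
B\setminus N=\bigcup_{\iota\in I}(B\setminus M_\iota).
\]
By the Lindel\"of property the open set on the left admits a countable subcover, i.e.\ there is a countable $I_0\subset I$ with $B\setminus N=\bigcup_{\iota\in I_0}(B\setminus M_\iota)$. Passing to complements in $B$ gives $N\cap B=\bigl(\bigcap_{\iota\in I_0}M_\iota\bigr)\cap B$, and since a von~Neumann algebra is determined by its unit ball (every element is a scalar multiple of an element of the ball) and $N\subset\bigcap_{\iota\in I_0}M_\iota$, this forces $\bigcap_{\iota\in I_0}M_\iota=N$, as required.

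I expect the main obstacle to be the topological input rather than the algebra: the reduction to $\bigcap_R M_R$ is a routine commutant manipulation, but the countability must come from somewhere, and the clean source is the second countability of $B$ in the weak operator topology combined with the weak closedness of von~Neumann algebras. If one preferred to avoid invoking weak-operator compactness, the same argument runs verbatim in the strong operator topology, the only extra point being to check that $B$ is strongly separable for separable $\mathfrak H$ (for instance by approximating each contraction by $P_NT$, with $P_N$ the projection onto the span of finitely many basis vectors, and noting that each resulting slice is norm separable); the weak-topology formulation is slightly more economical because compactness makes separability automatic.
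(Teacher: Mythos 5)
Your argument is correct, but it takes a genuinely different route from the paper's. The paper works ``from below'': it forms the union $\mathfrak A=\bigcup_{\mathcal Y}\mathcal A(\mathcal Y)$ over finite subsets $\mathcal Y\subset\mathcal X$, invokes the von~Neumann density theorem to see that this involutive unital subalgebra is strongly dense in $\mathcal A(\mathcal X)$, uses the strong separability of bounded subsets of $L(\mathfrak H)$ to choose a countable strongly dense $\mathfrak R\subset\mathfrak A$, and then collects the finite generating sets of the elements of $\mathfrak R$ into $\mathcal X_0$. You work ``from above'': you pass to commutants, write $\mathcal A(\mathcal X)'=\bigcap_{R\in\mathcal X}\{R,R^*\}'$ as an intersection of von~Neumann algebras, and extract a countable subintersection by a Lindel\"of argument on the weakly compact metrizable unit ball, which is legitimate since each $\{R,R^*\}'$ is weakly closed. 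Both proofs feed in the separability of $\mathfrak H$ through essentially the same topological fact (second countability of the operator-norm ball in a weak/strong operator topology), and both lean on Lemma~\ref{l2}; yours dispenses with the density theorem and with Lemma~\ref{l4}, and isolates a clean standalone statement --- any intersection of von~Neumann algebras on a separable Hilbert space equals a countable subintersection --- whereas the paper's construction is more explicit about which elements of $\mathcal X$ end up in $\mathcal X_0$. One small presentational point: for a set $\mathcal S$ of unbounded operators the identity $\mathcal S'=\mathcal S'''$ is not the generic triple-commutant identity (the inclusion $\mathcal S\subset\mathcal S''$ does not typecheck); it is precisely the assertion that $\mathcal S'$ is a von~Neumann algebra, i.e.\ Lemma~\ref{l2} applied to the involutive set $\mathcal X\cup\mathcal X^*$, which is available here, so there is no gap.
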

\begin{proof}
We first note that every subset of $L(\mathfrak H)$ is separable in the strong topology. Indeed, for any $M\subset L(\mathfrak H)$, we have $M =
\bigcup_{n=1}^\infty M\cap\mathcal B_n$, where $\mathcal B_n=\{T\in L(\mathfrak H) : \|T\|\leq n\}$ is the ball of radius $n$ in $L(\mathfrak H)$.
Since $\mathfrak H$ is separable, $\mathcal B_n$ endowed with the strong topology is a separable metrizable space for any $n$
(see,~e.g.,~\cite{Dixmier}, Sec.~I.3.1). This implies that $M\cap\mathcal B_n$ is separable for any $n$ and, hence, $M$ is separable in the strong
topology.

Let $\mathfrak A = \bigcup_{\mathcal Y\subset \mathcal X} \mathcal A(\mathcal Y)$, where $\mathcal Y$ runs through all finite subsets of $\mathcal
X$. Obviously, $\mathcal A(T)$ is equivalent to $T$ for any closed densely defined operator $T$ and, therefore, Lemma~\ref{l4} implies that $\mathcal
X$ is equivalent to $\bigcup_{T\in \mathcal X}\mathcal A(T)$. Since the latter set is contained in $\mathfrak A$ and $\mathfrak A\subset \mathcal
A(\mathcal X)$, we conclude that $\mathfrak A$ is equivalent to $\mathcal X$. We now note that $\mathfrak A$ is an involutive subalgebra of
$L(\mathfrak H)$ containing the identity operator and, therefore, is strongly dense in $\mathfrak A''=\mathcal A(\mathcal X)$ (\cite{Dixmier},
Sec.~I.3.4, Lemma~6). Let $\mathfrak R$ be a strongly dense countable subset of $\mathfrak A$. For any $R\in\mathfrak R$, we choose a finite set
$\mathcal Y_R\subset \mathcal X$ such that $R\in \mathcal A(\mathcal Y_R)$ and put $\mathcal X_0=\bigcup _{R\in\mathfrak R}\mathcal Y_R$. Clearly,
$\mathcal X_0$ is a countable set. The algebra $\mathcal A(\mathcal X_0)$ is strongly dense in $\mathcal A(\mathcal X)$ because it contains
$\mathfrak R$. On the other hand, $\mathcal A(\mathcal X_0)$ is a von~Neumann algebra and, therefore, is strongly closed. We hence have $\mathcal
A(\mathcal X_0)=\mathcal A(\mathcal X)$, i.e., $\mathcal X_0$ is equivalent to $\mathcal X$. The lemma is proved.
\end{proof}

\begin{lemma}\label{l_com}
Let $\mathfrak H$ be a Hilbert space and $\mathcal X$ be a set of closed densely defined operators in $\mathfrak H$. The algebra $\mathcal A(\mathcal
X)$ is Abelian if and only if all elements of $\mathcal X$ are normal and pairwise commute.
\end{lemma}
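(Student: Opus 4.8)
The plan is to treat both implications through the single device of replacing $\mathcal X$ by the family $\mathcal Y=\bigcup_{T\in\mathcal X}\mathcal A(T)$. Since $\mathcal A(T)$ is equivalent to $T$ for each closed densely defined $T$, Lemma~\ref{l4} gives $\mathcal A(\mathcal Y)=\mathcal A(\mathcal X)$ (this is exactly the reduction already used in the proof of Lemma~\ref{l3a}). The advantage is that $\mathcal Y$ consists entirely of bounded operators and is involutive, so questions about the possibly unbounded elements of $\mathcal X$ can be transferred to the bounded setting, where the observation recorded just after Lemma~\ref{l_norm}---an involutive set of pairwise commuting elements of $L(\mathfrak H)$ generates an Abelian von~Neumann algebra---is available. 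The two facts doing the real work will be Lemma~\ref{l_norm} (normality of $T$ is equivalent to $\mathcal A(T)$ being Abelian) and Lemma~\ref{lll} (for normal $T_1,T_2$, commutation of $T_1$ and $T_2$ is equivalent to the elementwise commutation of $\mathcal A(T_1)$ and $\mathcal A(T_2)$).

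For the ``only if'' direction I would argue as follows. Assume $\mathcal A(\mathcal X)$ is Abelian. For any $T\in\mathcal X$ the inclusion $\{T,T^*\}\subset\mathcal X\cup\mathcal X^*$ yields $\mathcal A(T)\subset\mathcal A(\mathcal X)$ upon taking bicommutants, so $\mathcal A(T)$ is an Abelian subalgebra and Lemma~\ref{l_norm} makes $T$ normal. Then, given $T_1,T_2\in\mathcal X$, both $\mathcal A(T_1)$ and $\mathcal A(T_2)$ sit inside the Abelian algebra $\mathcal A(\mathcal X)$, so every element of the first commutes with every element of the second; since $T_1$ and $T_2$ are now known to be normal, Lemma~\ref{lll} (statement~3 implies statement~1) shows that $T_1$ commutes with $T_2$. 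Hence every element of $\mathcal X$ is normal and the elements pairwise commute.

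The ``if'' direction is where the main work lies, and is what I expect to be the principal obstacle, since it requires passing from commutation of the (possibly unbounded) generators to commutation throughout the generated algebras. Assume all elements of $\mathcal X$ are normal and pairwise commute. By Lemma~\ref{l_norm} each $\mathcal A(T)$ is Abelian, so any two elements of a single $\mathcal A(T)$ commute; and for distinct $T_1,T_2\in\mathcal X$, Lemma~\ref{lll} turns the commutation of $T_1$ and $T_2$ into the statement that every element of $\mathcal A(T_1)$ commutes with every element of $\mathcal A(T_2)$. Combining the two cases shows that $\mathcal Y=\bigcup_{T\in\mathcal X}\mathcal A(T)$ is a set of pairwise commuting operators; moreover $\mathcal Y\subset L(\mathfrak H)$ and $\mathcal Y$ is involutive, each $\mathcal A(T)$ being a von~Neumann algebra. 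The observation following Lemma~\ref{l_norm} then gives that $\mathcal A(\mathcal Y)$ is Abelian, and since $\mathcal A(\mathcal Y)=\mathcal A(\mathcal X)$ we conclude that $\mathcal A(\mathcal X)$ is Abelian. The only point demanding care is the bookkeeping that $\mathcal Y$ is genuinely involutive and pairwise commuting across different generators, which is precisely what Lemma~\ref{lll} secures.
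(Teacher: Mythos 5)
Your proposal is correct and follows essentially the same route as the paper: both pass to $\mathcal Y=\bigcup_{T\in\mathcal X}\mathcal A(T)$ via Lemma~\ref{l4}, use Lemma~\ref{l_norm} to convert normality of $T$ into commutativity of $\mathcal A(T)$, and use Lemma~\ref{lll} to transfer pairwise commutation between the generators and the generated algebras. The extra care you take in spelling out the involutivity of $\mathcal Y$ and the appeal to the observation following Lemma~\ref{l_norm} is exactly what the paper's more compressed wording relies on.
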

\begin{proof}
Let $\mathcal Y = \bigcup_{T\in \mathcal X} \mathcal A(T)$. By Lemma~\ref{l4}, the sets $\mathcal X$ and $\mathcal Y$ are equivalent. If all elements
of $\mathcal X$ are normal and pairwise commute, then Lemma~\ref{lll} implies that all elements of $\mathcal Y$ pairwise commute. Since $\mathcal
Y\subset L(\mathfrak H)$, this means that $\mathcal A(\mathcal Y)=\mathcal A(\mathcal X)$ is Abelian. Conversely, if $\mathcal A(\mathcal X)$ is
Abelian, then all elements of $\mathcal Y$ pairwise commute. Hence, Lemma~\ref{l_norm} implies that all elements of $\mathcal X$ are normal and
Lemma~\ref{lll} implies that all elements of $\mathcal X$ pairwise commute. The lemma is proved.
\end{proof}

\section{Generators of von~Neumann algebras associated \\ with spectral measures}
\label{s4}

Recall that a topological space $S$ is called a Polish space if its topology can be induced by a metric that makes $S$ a separable complete space. A
measurable space $S$ is called a standard Borel space if its measurable structure can be induced by a Polish topology on $S$. A measure $\nu$ on a
measurable space $S$ is called standard if there is a measurable set $S'\subset S$ such that $\nu(S\setminus S')=0$ and $S'$, considered as a
measurable subspace of $S$, is a standard Borel space. Standard spectral measures are defined in the same way.

A family of maps $\{f_\iota\}_{\iota\in I}$ is said to separate points of a set $S$ if for any two distinct elements $s_1$ and $s_2$ of $S$, there is
$\iota\in I$ such that $f_\iota(s_1)\neq f_\iota(s_2)$.

\begin{definition}\label{d_exact}
Let $S$ be a measurable space and $\nu$ be a positive measure on $S$. A family $\{f_\iota\}_{\iota\in I}$ of maps is said to be $\nu$-separating on
$S$ if $I$ is countable and $\{f_\iota\}_{\iota\in I}$ separates points of $S\setminus N$ for some $\nu$-null set $N$. The notion of an
$E$-separating family for a spectral measure $E$ is defined analogously.
\end{definition}

Given a spectral measure $E$ on $\mathfrak H$, we denote by $\mathcal P_E$ the set of all operators $E(A)$, where $A$ is a measurable set. The main
result of this section is the next theorem that gives a complete description of systems of generators for $\mathcal A(\mathcal P_E)$.

\begin{theorem}\label{t0}
Let $S$ be a measurable space, $\mathfrak H$ be a separable Hilbert space, $E$ be a standard spectral measure for $(S,\mathfrak H)$, and $\mathcal X$
be a set of closed densely defined operators in $\mathfrak H$. Then $\mathcal A(\mathcal X)=\mathcal A(\mathcal P_E)$ if and only if the following
conditions hold
\begin{enumerate}
\item For every $T\in \mathcal X$, there is an $E$-measurable complex function $f$ on $S$ such that $T=J^E_f$.

\item There is an $E$-separating family $\{f_\iota\}_{\iota\in I}$ of $E$-measurable complex functions on $S$ such that $J^E_{f_\iota}\in\mathcal X$
for all $\iota\in I$.

\end{enumerate}
\end{theorem}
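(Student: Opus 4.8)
The plan is to reduce both implications to a single measure-theoretic statement: \emph{for a countable family $\{f_\iota\}_{\iota\in I}$ of $E$-measurable complex functions on $S$, the von~Neumann algebra $\mathcal A(\{J^E_{f_\iota}\})$ equals $\mathcal A(\mathcal P_E)$ if and only if $\{f_\iota\}$ is $E$-separating.} Granting this, the theorem follows quickly. For the ``if'' part, condition~1 gives $\mathcal A(T)\subset\mathcal A(\mathcal P_E)$ for each $T\in\mathcal X$ by Lemma~\ref{l4d}; since $\mathcal X$ is equivalent to $\bigcup_{T\in\mathcal X}\mathcal A(T)$ by Lemma~\ref{l4}, this yields $\mathcal A(\mathcal X)\subset\mathcal A(\mathcal P_E)$. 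Condition~2 supplies the separating family, so the reduced statement gives $\mathcal A(\mathcal P_E)=\mathcal A(\{J^E_{f_\iota}\})\subset\mathcal A(\mathcal X)$, and equality follows. For the ``only if'' part, condition~1 is immediate from Lemma~\ref{l4d}, since $\mathcal A(T)\subset\mathcal A(\mathcal X)=\mathcal A(\mathcal P_E)$ for every $T\in\mathcal X$. To obtain condition~2 I would invoke Lemma~\ref{l41} to choose a countable $\mathcal X_0\subset\mathcal X$ with $\mathcal A(\mathcal X_0)=\mathcal A(\mathcal X)=\mathcal A(\mathcal P_E)$; writing its elements as $J^E_{f_n}$ (using condition~1) and applying the reduced statement shows $\{f_n\}$ is $E$-separating, which is precisely condition~2 with $I$ countable.

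It remains to prove the reduced statement, and this is where standardness of $E$ enters decisively. Set $\varphi=(f_\iota)_{\iota\in I}\colon S\to\C^I$, which is $E$-measurable because $I$ is countable, and note $\pi_\iota\circ\varphi=f_\iota$ for the coordinate projections $\pi_\iota$. For the direction ``separating $\Rightarrow$ generating'', observe first that by~(\ref{600}) and~(\ref{601}) the spectral projections of $J^E_{f_\iota}$ are the operators $E(f_\iota^{-1}(B))$, $B$ Borel in $\C$, and these lie in $\mathcal A(J^E_{f_\iota})\subset\mathcal A(\{J^E_{f_\iota}\})$. The collection $\mathcal D=\{A\in\Sigma_S:E(A)\in\mathcal A(\{J^E_{f_\iota}\})\}$ is a $\sigma$-algebra, because $\mathcal A(\{J^E_{f_\iota}\})$ is a von~Neumann algebra (strongly closed, containing the identity and closed under products and adjoints) and $E$ is countably additive with $E(A_1\cap A_2)=E(A_1)E(A_2)$. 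As $\mathcal D$ contains all $f_\iota^{-1}(B)$, it contains $\Sigma_0:=\sigma(\{f_\iota\})=\{\varphi^{-1}(A):A\text{ Borel in }\C^I\}$. Since $E$ is standard we may discard an $E$-null set and assume $S$ is standard Borel; the separating property then makes $\varphi$ injective, so by the Lusin--Souslin theorem (an injective Borel map between standard Borel spaces is a Borel isomorphism onto its Borel image) we get $\Sigma_0=\Sigma_S$ modulo $E$-null sets. Hence $\mathcal P_E\subset\mathcal A(\{J^E_{f_\iota}\})$, giving $\mathcal A(\mathcal P_E)\subset\mathcal A(\{J^E_{f_\iota}\})$; the reverse inclusion holds because each $f_\iota$ is $E$-measurable (Lemma~\ref{l4d} and Lemma~\ref{l4}), and equality follows.

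For the converse ``generating $\Rightarrow$ separating'', I would pass to the push-forward spectral measure $\varphi_*E$ on the standard Borel space $\C^I$. By~(\ref{pushforward}) we have $J^E_{f_\iota}=J^{\varphi_*E}_{\pi_\iota}$, and Lemma~\ref{l4d} applied to $\varphi_*E$ identifies $\mathcal A(\mathcal P_{\varphi_*E})$ with the set of all $J^E_{h\circ\varphi}$ for bounded Borel $h$ on $\C^I$. Each $\pi_\iota$ is $\varphi_*E$-measurable, so $\mathcal A(\{J^{\varphi_*E}_{\pi_\iota}\})\subset\mathcal A(\mathcal P_{\varphi_*E})$ by Lemmas~\ref{l4d} and~\ref{l4}; combined with the hypothesis this gives $\mathcal A(\mathcal P_E)=\mathcal A(\{J^E_{f_\iota}\})=\mathcal A(\{J^{\varphi_*E}_{\pi_\iota}\})\subset\mathcal A(\mathcal P_{\varphi_*E})$. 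Thus every $E(A)=J^E_{\chi_A}$ (with $\chi_A$ the indicator of $A$) has the form $J^E_{h\circ\varphi}$, so every measurable $A$ agrees $E$-a.e.\ with a set of $\Sigma_0$. Now choose a countable family $\{B_k\}$ generating $\Sigma_S$ over the standard part of $S$; being a generating family of a countably separated $\sigma$-algebra, it separates points there. Each $\chi_{B_k}$ equals $h_k\circ\varphi$ off an $E$-null set $N_k$, and on $S\setminus\bigcup_k N_k$ (whose complement is $E$-null) the equality $\varphi(s)=\varphi(s')$ forces $\chi_{B_k}(s)=\chi_{B_k}(s')$ for all $k$, hence $s=s'$. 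Therefore $\varphi$ is injective off an $E$-null set, i.e.\ $\{f_\iota\}$ is $E$-separating.

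The main obstacle is exactly this reduced equivalence, and within it the two transfers of information between measure-theoretic and operator-algebraic data: the identification $\Sigma_0=\Sigma_S$ (mod $E$-null) in the first direction and the essential injectivity of $\varphi$ in the second. Both rest entirely on descriptive-set-theoretic input available only for standard Borel spaces --- the Lusin--Souslin isomorphism theorem and the fact that a standard Borel space is countably separated by a generating family. The operator-algebraic bookkeeping (Lemmas~\ref{l4d},~\ref{l4},~\ref{l41} together with formulas~(\ref{600})--(\ref{pushforward})) is routine by comparison; the genuine work lies in organizing the reduction so that the standardness hypothesis can be brought to bear cleanly and in absorbing all the ``$E$-a.e.'' exceptional sets uniformly through a single countable union.
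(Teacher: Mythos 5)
Your proposal is correct, and in its central component it takes a genuinely different route from the paper. The outer layers coincide: both arguments use Lemmas~\ref{l4d}, \ref{l4}, and~\ref{l41} to reduce to a countable family, and in the ``generating $\Rightarrow$ separating'' direction both push $E$ forward to $\C^I$ and extract essential injectivity of $\varphi$ from an a.e.\ factorization $g=h\circ\varphi$ --- you run this with a countable separating family of sets $\{B_k\}$ where the paper uses a single injective Borel function into $[0,1]$ supplied by statement~3 of Lemma~\ref{l4a}, a cosmetic difference. The real divergence is in proving that an $E$-separating family generates $\mathcal A(\mathcal P_E)$. The paper proceeds through Lemma~\ref{l4b}: it uses the Lusin--Souslin theorem to transplant a Polish topology onto $S$ under which all $f_\iota$ become continuous, and then invokes Lemma~\ref{l01}, whose proof rests on the Stone--Weierstrass theorem and tightness of the measures $E_\Psi$ on a Polish space. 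You instead argue purely at the level of $\sigma$-algebras: the collection of measurable $A$ with $E(A)$ in the generated von~Neumann algebra is a $\sigma$-algebra (strong closedness plus $E(A_1\cap A_2)=E(A_1)E(A_2)$), it contains every $E(f_\iota^{-1}(B))$ by~(\ref{600}) and~(\ref{601}), hence contains all $\varphi^{-1}(A)$ with $A$ Borel in $\C^I$, and Lusin--Souslin shows these exhaust $\Sigma_S$ modulo an $E$-null set. This bypasses Lemma~\ref{l01} entirely, and as a byproduct your converse direction needs only the easy inclusion $\mathcal A(\{J^{\varphi_*E}_{\pi_\iota}\})\subset\mathcal A(\mathcal P_{\varphi_*E})$ from Lemma~\ref{l4d}, rather than the equality the paper obtains from Lemma~\ref{l01}. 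What the paper's detour buys is a statement (Lemma~\ref{l01}) valid for arbitrary spectral measures on Polish spaces, with standardness entering only at the very end; your argument leans on standardness from the outset, which within the hypotheses of Theorem~\ref{t0} costs nothing.
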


The rest of this section is devoted to the proof of Theorem~\ref{t0}. Given a topological space $S$, we denote by $C(S)$ the space of all continuous
complex functions on $S$.

\begin{lemma}\label{l01}
Let $S$ be a Polish space, $\mathfrak H$ be a Hilbert space, and $E$ be a spectral measure for $(S,\mathfrak H)$. Let $\mathcal C$ be a subset of
$C(S)$ that separates the points of $S$ and $\mathcal X$ be the set of all operators $J^E_f$ with $f\in\mathcal C$. Then $\mathcal A(\mathcal
X)=\mathcal A(\mathcal P_E)$.
\end{lemma}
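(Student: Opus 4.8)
The plan is to prove the two inclusions $\mathcal A(\mathcal X)\subseteq\mathcal A(\mathcal P_E)$ and $\mathcal A(\mathcal P_E)\subseteq\mathcal A(\mathcal X)$ separately. Since every $f\in\mathcal C$ is continuous, hence Borel and therefore $E$-measurable, each $J^E_f$ is a normal operator; by Fuglede's theorem (\cite{Fuglede}) commuting with $J^E_f$ is the same as commuting with both $J^E_f$ and $(J^E_f)^*$, so $(\mathcal X\cup\mathcal X^*)'=\mathcal X'$ and $\mathcal A(\mathcal X)=\mathcal X''$; likewise $\mathcal A(\mathcal P_E)=\mathcal P_E''$. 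Thus each inclusion of algebras is equivalent to the reverse inclusion of commutants, and I will pass freely between the two formulations. Note that $\mathfrak H$ is \emph{not} assumed separable, so Theorem~\ref{t0} may not be invoked (indeed this lemma is a step toward that theorem); the argument must be direct.

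For the easy direction $\mathcal A(\mathcal X)\subseteq\mathcal A(\mathcal P_E)$ I would argue as in Lemma~\ref{l4d}: for each $f\in\mathcal C$ the spectral measure of $J^E_f$ is the push-forward $f_*E$ by~(\ref{600}), so by~(\ref{601}) we have $\mathcal A(J^E_f)=\mathcal A(\mathcal P_{f_*E})$, and the spectral projections $E(f^{-1}(B))$, $B\subset\C$ Borel, all lie in $\mathcal P_E$; hence $\mathcal A(J^E_f)\subseteq\mathcal A(\mathcal P_E)$. Applying Lemma~\ref{l4} to the singletons $\{J^E_f\}$ and the algebras $\mathcal A(J^E_f)$, the set $\mathcal X$ is equivalent to $\bigcup_{f\in\mathcal C}\mathcal A(J^E_f)$, a subset of the von~Neumann algebra $\mathcal A(\mathcal P_E)$; since $\mathcal A(\mathcal X)$ is the smallest von~Neumann algebra containing this union, $\mathcal A(\mathcal X)\subseteq\mathcal A(\mathcal P_E)$.

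The substantial direction is $\mathcal A(\mathcal P_E)\subseteq\mathcal A(\mathcal X)$, which I would establish by showing $\mathcal X'\subseteq\mathcal P_E'$. First I reduce $\mathcal C$ to a countable point-separating subfamily: since $S$ is Polish, $S\times S$ is second countable, hence hereditarily Lindel\"of, and the open sets $U_f=\{(s_1,s_2):f(s_1)\neq f(s_2)\}$, $f\in\mathcal C$, cover the open off-diagonal of $S\times S$; extracting a countable subcover yields $\{f_n\}_{n\geq1}\subseteq\mathcal C$ that still separates the points of $S$. Now fix $R\in\mathcal X'$. For each $n$, $R$ commutes with the normal operator $J^E_{f_n}$, hence with $(J^E_{f_n})^*$ by \cite{Fuglede}, so $R\in\{J^E_{f_n},(J^E_{f_n})^*\}'=\mathcal P'_{\mathcal E_{J^E_{f_n}}}$; as $\mathcal E_{J^E_{f_n}}=(f_n)_*E$ by~(\ref{600}), the operator $R$ commutes with every projection $E(f_n^{-1}(B))$, $B\subset\C$ Borel. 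The family $\mathcal M$ of measurable $A\subseteq S$ for which $R$ commutes with $E(A)$ is a $\sigma$-algebra (using $E(A^c)=I-E(A)$, $E(A_1\cap A_2)=E(A_1)E(A_2)$, and the strong $\sigma$-additivity of $E$), and it contains every set $f_n^{-1}(B)$.

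It remains to identify the $\sigma$-algebra generated by the sets $f_n^{-1}(B)$ with the full Borel $\sigma$-algebra $\Sigma_S$; this is the crux of the argument, and \textbf{the main obstacle}. I would consider the continuous injection $\varphi=(f_n)_{n\geq1}\colon S\to\C^{\mathbb N}$. By the Lusin--Souslin theorem, a continuous injective image of a Borel subset of a Polish space is Borel and $\varphi$ is a Borel isomorphism onto its image $\varphi(S)$; since $\Sigma_{\C^{\mathbb N}}$ is generated by the coordinate projections, pulling back gives $\sigma(\{f_n\})=\varphi^{-1}(\Sigma_{\C^{\mathbb N}})=\Sigma_S$. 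Consequently $\mathcal M=\Sigma_S$, so $R$ commutes with every $E(A)$, i.e. $R\in\mathcal P_E'$. As $R\in\mathcal X'$ was arbitrary, $\mathcal X'\subseteq\mathcal P_E'$, whence $\mathcal A(\mathcal P_E)=\mathcal P_E''\subseteq\mathcal X''=\mathcal A(\mathcal X)$, completing the proof. The Polish hypothesis is indispensable precisely at this descriptive-set-theoretic step (on a non-standard space a point-separating family of continuous functions need not generate the Borel $\sigma$-algebra), and this is the reason the reduction to a countable separating subfamily is carried out first.
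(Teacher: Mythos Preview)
Your proof is correct, and it follows a genuinely different path from the paper's for the substantial inclusion $\mathcal A(\mathcal P_E)\subseteq\mathcal A(\mathcal X)$. The paper proceeds analytically: it forms the $*$-algebra $\mathfrak A$ generated by $\mathcal C$ and the constants, uses the Stone--Weierstrass theorem to approximate an arbitrary $f\in C(S)$ uniformly on each compact $K\subset S$, works with the truncated operators $U_K=E(K)UE(K)$ to deduce that every $U\in(\mathcal X\cup\mathcal X^*)'$ commutes with all $J^E_f$, $f\in C(S)$, and finally invokes inner regularity (tightness) of the finite measures $E_\Psi$ on the Polish space $S$ to pass from $U_K$ back to $U$; a separate $\sigma$-algebra argument then gets from $C(S)$ to all Borel indicators. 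You instead bypass Stone--Weierstrass and tightness entirely: after the Lindel\"of reduction to a countable separating family $\{f_n\}$, you go straight from commutation with each $J^E_{f_n}$ to commutation with the projections $E(f_n^{-1}(B))$ via~(\ref{600}) and~(\ref{601}), and then identify $\sigma(\{f_n\})$ with the full Borel $\sigma$-algebra by the Lusin--Souslin theorem applied to the continuous injection $\varphi=(f_n)\colon S\to\C^{\mathbb N}$.

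Your route is shorter and more measure-theoretic; it trades the functional-analytic machinery (Stone--Weierstrass, compact truncation, tightness) for one descriptive-set-theory input. This is not foreign to the paper: essentially the same Lusin--Souslin step appears later as statement~1 of Lemma~\ref{l4a}, and Lemma~\ref{l4b} uses it to handle merely measurable separating families by retopologizing $S$ and reducing to the present lemma. In effect your argument collapses Lemmas~\ref{l01} and~\ref{l4b} into a single step. The paper's approach, on the other hand, keeps Lemma~\ref{l01} self-contained from the operator-theoretic side and postpones the descriptive set theory. Your observation that $\mathfrak H$ need not be separable here is also well taken; the paper's citation of Lemma~\ref{l4d} for the easy inclusion is slightly loose on this point, and your derivation via~(\ref{600}), (\ref{601}), and Lemma~\ref{l4} avoids that hypothesis.
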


In the proof below, all spectral integrals are taken with respect to $E$, and we write for brevity $J_f$ instead of $J^E_f$.
\begin{proof}
Since $\mathcal A(\mathcal X)\subset \mathcal A(\mathcal P_E)$ by Lemma~\ref{l4d}, we have to show that $\mathcal A(\mathcal P_E)\subset \mathcal
A(\mathcal X)$.

Let $\bar{\mathcal C}$ denote the set of functions, complex conjugate to the elements of $\mathcal C$, and let $\mathfrak A$ be the subalgebra of
$C(S)$ generated by $\mathcal C\cup \bar{\mathcal C}$ and the constant functions. Fix $U\in (\mathcal X\cup \mathcal X^*)'$ and let $\mathfrak A_U$
denote the subset of $C(S)$ consisting of all $f$ such that $J_f$ commutes with $U$. If $f,g\in \mathfrak A_U$, then both $J_fJ_g$ and $J_f+J_g$
commute with $U$. In view of Lemma~\ref{l1} and relations~(\ref{sum_prod}), it follows that both $J_{fg}$ and $J_{f+g}$ commute with $U$, i.e.,
$fg\in\mathfrak A_U$ and $f+g\in\mathfrak A_U$. Hence, $\mathfrak A_U$ is an algebra. Since $\mathfrak A_U$ obviously contains $\mathcal C\cup
\bar{\mathcal C}$ and all constant functions, we have $\mathfrak A_U\supset \mathfrak A$. Thus, every element of $(\mathcal X\cup \mathcal X^*)'$
commutes with any operator $J_f$ with $f\in\mathfrak A$.

Given $f\in C(S)$ and a compact set $K\subset S$, we set $B_{f,K}=J_f E(K)$. Let $\Psi\in \mathfrak H$ and $\Phi=E(K)\Psi$. Then for any measurable
set $A$, we have $E_\Phi(A)=E_\Psi(A\cap K)$ and, therefore, $E_\Phi$ is a finite measure supported by $K$. In view of~(\ref{dom}), this implies that
$\Phi\in D_{J_f}$, i.e., the range of $E(K)$ is contained in the domain of $J_f$. Since $E(K)=J_{\chi_K}$, where $\chi_K$ is the characteristic
function of $K$, it follows from~(\ref{sum_prod}) that $B_{f,K}=J_{f\chi_K}$. Hence, $B_{f,K}\in L(\mathfrak H)$ and~(\ref{normbound}) implies that
\begin{equation}\label{norm}
\|B_{f,K}\|\leq \sup_{s\in K}|f(s)|.
\end{equation}
If $f,g\in C(S)$, then~(\ref{sum_prod}) implies that $B_{f+g,K}=B_{f,K}+B_{g,K}$.

We now show that
\begin{equation}\label{incl}
(\mathcal X\cup \mathcal X^*)'\subset {\mathcal Y}',
\end{equation}
where $\mathcal Y$ is the set of all $J_f$ with $f\in C(S)$. Let $U\in (\mathcal X\cup \mathcal X^*)'$. We first prove that $\mathcal Y'$ contains
all operators $U_K=E(K)UE(K)$, where $K$ is a compact subset of $S$. Fix $f\in C(S)$ and let $\varepsilon>0$. Since $\mathcal C\subset \mathfrak A$,
the algebra $\mathfrak A$ separates points of $S$, and the Stone--Weierstrass theorem implies that there is $g\in \mathfrak A$ such that
$|f(s)-g(s)|< \varepsilon$ for any $s\in K$. Since $U_K$ commutes with both $J_g$ and $E(K)$, it follows that $U_K$ commutes with $B_{g,K}$. In view
of~(\ref{norm}), we have
\begin{multline}\nonumber
\|B_{f,K}U_K-U_K B_{f,K}\|\leq \|B_{f,K}U_K-B_{g,K}U_K\|+ \|U_K B_{g,K}-U_K B_{f,K}\|\leq \\ \leq 2\|B_{f-g,K}\|\|U\|<2\varepsilon\|U\|.
\end{multline}
Because $\varepsilon$ is arbitrary, this means that $U_K$ commutes with $B_{f,K}$. This implies that $U_K$ commutes with $J_f$ because
$B_{f,K}U_K=J_f U_K$ and $U_KB_{f,K}$ is an extension of $U_K J_f$ by the commutativity of $E(K)$ and $J_f$. This proves that $U_K\in \mathcal Y'$.
By Lemma~\ref{l2}, $\mathcal Y'$ is a von~Neumann algebra and, in particular, is strongly closed. Hence, inclusion~(\ref{incl}) will be proved if we
demonstrate that every strong neighborhood of $U$ contains $U_K$ for some compact set $K$. To this end, it suffices to show that for every
$\Psi\in\mathfrak H$ and $\varepsilon>0$, there is a compact set $K_{\Psi,\varepsilon}$ such that
\begin{equation}\label{est}
\|(U-U_K)\Psi\|\leq \varepsilon
\end{equation}
for any compact set $K\supset K_{\Psi,\varepsilon}$. Since
\[
U-U_K = E(K)UE(S\setminus K) + E(S\setminus K)U,
\]
we have
\[
\|(U-U_K)\Psi\|\leq \|U\|\, E_\Psi(S\setminus K)^{\frac{1}{2}} + E_\Phi(S\setminus K)^{\frac{1}{2}},
\]
where $\Phi = U\Psi$. As $S$ is a Polish space, Theorem~1.3 in~\cite{Billingsley} ensures that there is a compact set $K_{\Psi,\varepsilon}$ such
that both $E_\Psi(S\setminus K_{\Psi,\varepsilon})$ and $E_\Phi(S\setminus K_{\Psi,\varepsilon})$ do not exceed $\varepsilon^2/4$ and, therefore,
(\ref{est}) holds for any $K\supset K_{\Psi,\varepsilon}$. Inclusion~(\ref{incl}) is thus proved.

We next show that
\begin{equation}\label{incl1}
\mathcal Y'\subset \mathcal P'_E.
\end{equation}
Let $U\in\mathcal Y'$. For any closed set $F\subset S$, it is easy to construct a uniformly bounded sequence of functions $f_n\in C(S)$ that
converges pointwise to $\chi_F$. Then $J_{f_n}$ strongly converge to $J_{\chi_F}=E(F)$ (see Theorem~2 of Sec.~5.3 in~\cite{BirmanSolomjak}). Since
$J_{f_n}$ commute with $U$ for all $n$, this implies that $E(F)$ commutes with $U$. Let $\Sigma^U$ denote the set of all measurable sets $A\subset S$
such that $E(A)$ commutes with $U$. We have proved that $\Sigma^U$ contains all closed sets. If $A\in\Sigma^U$, then $E(S\setminus A)=1-E(A)$
commutes with $U$ and, hence, $S\setminus A\in\Sigma^U$. If $A_1,A_2\in\Sigma^U$, then both $E(A_1\cap A_2)=E(A_1)E(A_2)$ and $E(A_1\cup A_2)=
E(A_1)+E(A_2)-E(A_1)E(A_2)$ commute with $U$ and, therefore, $A_1\cap A_2$ and $A_1\cup A_2$ belong to $\Sigma^U$. Let $A_n$ be a sequence of
elements of $\Sigma^U$ and $A=\bigcup_{n=1}^\infty A_n$. For all $n=1,2,\ldots$, we set $B_n=\bigcup_{j=1}^n A_j$. Then $B_n\in\Sigma^U$ for all $n$,
and the $\sigma$-additivity of $E$ implies that $E(B_n)$ converge strongly to $E(A)$. Hence, $E(A)$ commutes with $U$, i.e., $A\in\Sigma^U$. We thus
see that $\Sigma^U$ is a $\sigma$-algebra containing all closed sets. This implies that $\Sigma^U$ coincides with the Borel $\sigma$-algebra, and
(\ref{incl1}) is proved.

Inclusions~(\ref{incl}) and~(\ref{incl1}) imply that $(\mathcal X\cup \mathcal X^*)'\subset \mathcal P'_E$ and, hence, $\mathcal A(\mathcal
P_E)\subset \mathcal A(\mathcal X)$. The lemma is proved.
\end{proof}

The next lemma summarizes the facts about Polish and standard Borel spaces that are needed for the proof of Theorem~\ref{t0}.

\begin{lemma}\label{l4a}
\begin{enumerate}
\item[1.] Let $S$ and $S'$ be standard Borel spaces and $f\colon S\to S'$ be a one-to-one measurable mapping. Then $f(S)$ is a measurable subset of
$S'$ and $f$ is a measurable isomorphism from $S$ onto $f(S)$.

\item[2.] Let $S$ be a Polish space and $B$ be its Borel subset. Then there are a Polish space $P$ and a continuous one-to-one map $g\colon P\to S$
such that $B=g(P)$.

\item[3.] If $S$ is a standard Borel space, then there exists a one-to-one function from $S$ to the segment $[0,1]$.
\end{enumerate}
\end{lemma}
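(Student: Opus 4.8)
These three assertions are standard facts from the descriptive set theory of Polish and standard Borel spaces, so the plan is to prove them by the classical machinery, treating statement~2 first, then statement~1 (which reduces to the continuous case furnished by the technique behind statement~2), and finally statement~3.

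For statement~2 I would invoke the \emph{change of topology} technique. Starting from a Polish topology $\tau$ on $S$ that induces $\Sigma_S$, I would construct a finer Polish topology $\tau'$ on $S$ with the same Borel $\sigma$-algebra in which the prescribed Borel set $B$ is clopen. This is obtained by induction on the Borel rank of $B$: a closed set is made clopen by passing to the topological sum of $B$ and $S\setminus B$ (both Polish), and the Borel classes are stable under the countable operations because countable sums and products of Polish spaces are Polish and the refinements can be amalgamated while preserving separability. With $\tau'$ in hand, the set $B$, being clopen in the Polish space $(S,\tau')$, is Polish in its subspace topology; setting $P=(B,\tau'|_B)$ and letting $g$ be the inclusion map gives a continuous (since $\tau'\supseteq\tau$) one-to-one map with $g(P)=B$. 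The only genuine work is the inductive change-of-topology lemma, which is classical.

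For statement~1 I would first reduce to the continuous case: applying the change-of-topology lemma to make the preimages under $f$ of a countable base of $S'$ clopen, I may refine the Polish topology on $S$ so that $f$ becomes continuous without altering $\Sigma_S$; thus it suffices to handle a continuous injective $f$ from a Polish space $S$ into a standard Borel (hence Polish) space $S'$. The heart of the matter, and the step I expect to be the main obstacle, is showing that $f(S)$ is Borel. Here the plan is the usual route through analytic sets: represent $S$ as a continuous image of a closed subset of the Baire space $\mathbb N^{\mathbb N}$, observe that by injectivity the $f$-images of the basic clopen pieces at a fixed level form a pairwise disjoint family of analytic sets, and apply the generalized Lusin separation theorem to enclose them in pairwise disjoint Borel sets. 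Arranging these Borel sets into a decreasing scheme indexed by finite sequences and intersecting along branches exhibits $f(S)$ as a Borel set. Once $f(S)$ is Borel, measurability of the inverse is automatic: for any $A\in\Sigma_S$ the restriction $f|_A$ is again an injection of a standard Borel space, so $f(A)=(f^{-1})^{-1}(A)$ is Borel by the same image theorem, whence $f$ is a measurable isomorphism onto $f(S)$.

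For statement~3 the quickest route is a cardinality count: $S$ carries a Polish, hence separable metric, topology, so $|S|\leq\mathfrak c=|[0,1]|$ and a one-to-one map into $[0,1]$ exists set-theoretically, which is exactly what is claimed. If a \emph{measurable} injection is desired (at almost no extra cost), I would instead embed the Polish space $S$ homeomorphically into the Hilbert cube $[0,1]^{\mathbb N}$ via a countable family of continuous functions separating its points, and then compose with the Borel injection $[0,1]^{\mathbb N}\to[0,1]$ obtained by interleaving binary digits; the composition is a one-to-one Borel function from $S$ to $[0,1]$.
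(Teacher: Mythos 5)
Your proposal is correct, but it takes a genuinely different route from the paper: where you reconstruct the classical descriptive-set-theoretic machinery from scratch, the paper simply cites the literature --- statement~1 is referred to Mackey's reformulation of Souslin's theorem, statement~2 to a lemma in Bourbaki, and statement~3 to the classification of standard Borel spaces as either countable or Borel-isomorphic to $[0,1]$ (Takesaki). Your sketches are the standard proofs of exactly those cited results: the change-of-topology lemma for statement~2, the Lusin separation argument with a Borel scheme for the Lusin--Souslin image theorem in statement~1, and the Hilbert-cube embedding composed with digit interleaving for statement~3. What your approach buys is self-containedness (modulo the inductive change-of-topology lemma, which you correctly identify as the only real work in statement~2); what the paper's approach buys is brevity, and in statement~3 a slightly cleaner conclusion, since the isomorphism-with-$[0,1]$ classification hands you the injection immediately. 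Two small remarks. First, in your argument for statement~1 you need to represent the Polish space as a continuous \emph{bijective} image of a closed subset of $\mathbb N^{\mathbb N}$, not merely a continuous image; otherwise the images of the basic clopen pieces at a fixed level need not be pairwise disjoint. This is a standard fact and the fix is immediate, but as written the disjointness claim does not follow from the injectivity of $f$ alone. Second, you are right to hedge on measurability in statement~3: the statement as printed asks only for a one-to-one function, for which your cardinality count suffices, but the paper's own proof (and its application in Theorem~\ref{t0}, where the injection is fed into a spectral integral) makes clear that a measurable injection is intended, and your Hilbert-cube construction supplies one, provided you fix canonical binary expansions so that the interleaving map is genuinely injective.
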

\begin{proof}
Statement~1 follows from Theorem~3.2 in~\cite{Mackey}, which, in its turn, is a reformulation of a theorem by Souslin (see~\cite{Kuratowski},
Chapter~III, Sec.~35.IV). For the proof of statement~2, see Lemma~6 of Sec.~IX.6.7 in~\cite{Bourbaki}. To prove statement~3, we recall that every
standard Borel space is either countable or isomorphic to the segment $[0,1]$ (see~\cite{Takesaki}, Appendix, Corollary~A.11). In the latter case,
any isomorphism between $S$ and $[0,1]$ gives us the required function. If $S$ is countable, then we can just choose any one-to-one map from $S$ to
$[0,1]$ because all functions on $S$ are measurable. The lemma is proved.
\end{proof}

\begin{lemma}\label{l4b}
Let $E$ be a spectral measure on a standard Borel space $S$, $I$ be a countable set and $\{f_\iota\}_{\iota\in I}$ be a family of measurable
complex-valued functions on $S$ that separates the points of $S$. Then the von~Neumann algebra generated by all operators $J_{f_\iota}$ with
$\iota\in I$ coincides with $\mathcal A(\mathcal P_E)$.
\end{lemma}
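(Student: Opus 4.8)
The plan is to reduce the general measurable situation to the continuous one already settled in Lemma~\ref{l01}. Since $I$ is countable, the product space $\C^I$ (functions $I\to\C$ with the product topology) is a Polish space, and I would assemble the separating family into a single map $F\colon S\to\C^I$ by setting $F(s)=(f_\iota(s))_{\iota\in I}$. As each $f_\iota$ is measurable, and the Borel $\sigma$-algebra of $\C^I$ coincides with the product $\sigma$-algebra (because $\C$ is second countable and $I$ is countable), the map $F$ is measurable; and because $\{f_\iota\}_{\iota\in I}$ separates the points of $S$, the map $F$ is injective. Thus $F$ is a one-to-one measurable map between standard Borel spaces, and statement~1 of Lemma~\ref{l4a} applies: $F(S)$ is a Borel subset of $\C^I$ and $F$ is a measurable isomorphism of $S$ onto $F(S)$.

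Next I would push the spectral measure forward and set $E'=F_*E$, a spectral measure for $(\C^I,\mathfrak H)$. The key point is that $F$ being a Borel isomorphism onto its image forces $\mathcal P_{E'}=\mathcal P_E$: for every measurable $A\subset S$ the set $F(A)$ is Borel in $\C^I$ (being measurable in the Borel subspace $F(S)$) and $F^{-1}(F(A))=A$ by injectivity, so $E'(F(A))=E(A)$; conversely $E'(B)=E(F^{-1}(B))\in\mathcal P_E$ for any Borel $B\subset\C^I$. Hence $\mathcal A(\mathcal P_{E'})=\mathcal A(\mathcal P_E)$.

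It then remains to identify the generators. Let $\pi_\iota\colon\C^I\to\C$ denote the coordinate projections; these are continuous and, by the definition of the product topology, separate the points of $\C^I$. Applying Lemma~\ref{l01} to the Polish space $\C^I$, the spectral measure $E'$, and the separating family $\mathcal C=\{\pi_\iota\}_{\iota\in I}\subset C(\C^I)$ yields $\mathcal A(\{J^{E'}_{\pi_\iota}\}_{\iota\in I})=\mathcal A(\mathcal P_{E'})$. Finally, since $\pi_\iota\circ F=f_\iota$, formula~(\ref{pushforward}) gives $J^{E'}_{\pi_\iota}=J^{F_*E}_{\pi_\iota}=J^E_{\pi_\iota\circ F}=J^E_{f_\iota}$, so the two families of operators coincide. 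Chaining the equalities produces $\mathcal A(\{J^E_{f_\iota}\}_{\iota\in I})=\mathcal A(\mathcal P_{E'})=\mathcal A(\mathcal P_E)$, as claimed.

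I expect the only genuinely delicate step to be the passage from measurable to continuous separating functions, i.e.\ the assertion that $F$ is a Borel isomorphism onto a Borel image; this is precisely where standardness of $S$ (through Souslin's theorem, packaged in Lemma~\ref{l4a}) is indispensable, and it is what guarantees that pushing $E$ forward loses no spectral projections. Once that is in place, everything else is direct bookkeeping with the push-forward formula~(\ref{pushforward}) together with an application of the already-proved continuous case.
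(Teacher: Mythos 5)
Your proof is correct, but it executes the reduction to Lemma~\ref{l01} differently from the paper. The paper also forms the map $F=(f_\iota)_{\iota\in I}\colon S\to\C^I$ and invokes statement~1 of Lemma~\ref{l4a}, but it then goes further: using statement~2 of Lemma~\ref{l4a} it finds a Polish space $P$ and a continuous bijection $g\colon P\to\C^I$ with $g(P)=F(S)$, applies Souslin's theorem a second time to see that $h=F^{-1}\circ g$ is a measurable isomorphism, and uses $h$ to \emph{re-topologize} $S$ so that $S$ becomes Polish with its original Borel structure and with every $f_\iota$ continuous; Lemma~\ref{l01} is then applied directly on $S$ with the original $E$. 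You instead keep $S$ as it is, push $E$ forward to $E'=F_*E$ on $\C^I$, apply Lemma~\ref{l01} there to the continuous coordinate projections $\pi_\iota$, and translate back via $J^{F_*E}_{\pi_\iota}=J^E_{f_\iota}$ from~(\ref{pushforward}). The one nontrivial point your route requires --- that $\mathcal P_{F_*E}=\mathcal P_E$, so that no spectral projections are lost in the transfer --- is exactly where you correctly spend Souslin's theorem ($F(A)$ is Borel in $\C^I$ and $F^{-1}(F(A))=A$). Your version is somewhat more economical: it needs only one application of statement~1 of Lemma~\ref{l4a} and avoids statement~2 entirely, at the price of the (easy) verification that the two spectral measures have the same range of projections. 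It is also closer in spirit to the converse half of the paper's proof of Theorem~\ref{t0}, which likewise works with $f_*E$ and the projections $\pi_\iota$ on $\C^I$.
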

\begin{proof}
Let $f$ denote the map $s\to \{f_\iota(s)\}_{\iota\in I}$ from $S$ to $\C^I$. The space $\C^I$ endowed with its natural product topology is a Polish
space, and the measurability of $f_\iota$ implies that of $f$. Since $f_\iota$ separate the points of $S$, the map $f$ is one-to-one. By statement~1
of Lemma~\ref{l4a}, $f(S)$ is a Borel subset of $\C^I$ and $f$ is a measurable isomorphism of $S$ onto $f(S)$. By statement~2 of Lemma~\ref{l4a},
there are a Polish space $P$ and a continuous one-to-one map $g\colon P\to \C^I$ such that $f(S)=g(P)$. Hence, $h = f^{-1}\circ g$ is a measurable
one-to-one map from $P$ onto $S$. By statement~1 of Lemma~\ref{l4a}, $h$ is a measurable isomorphism from $P$ onto $S$. We now use $h$ to transfer
the topology from $P$ to $S$, i.e., we say that a set $O\subset S$ is open if and only if $h^{-1}(O)$ is open in $P$. Once $S$ is equipped with this
topology, $h$ becomes a homeomorphism between $P$ and $S$ and, hence, $S$ becomes a Polish space. Since $h$ is a measurable isomorphism, the Borel
measurable structure generated by the topology of $S$ coincides with its initial measurable structure. Because $f=g\circ h^{-1}$ is continuous, all
$f_\iota$ are continuous. Hence, the statement follows from Lemma~\ref{l01}. The lemma is proved.
\end{proof}

\begin{proof}[Proof of Theorem~$\ref{t0}$] {$\,$}
\par\noindent

Suppose conditions~(1) and~(2) hold. By Lemma~\ref{l4d}, condition~(1) implies that $\mathcal A(\mathcal X)\subset \mathcal A(\mathcal P_E)$. Let the
family $\{f_\iota\}_{\iota\in I}$ be as in condition~(2) and $\mathcal X_0$ be the set of all $J^E_{f_\iota}$ with $\iota\in I$. Since $E$ is
standard, there is a measurable subset $\tilde S$ of $S$ such that $E(S\setminus\tilde S)=0$ and $\tilde S$, considered as a measurable subspace of
$S$, is a standard Borel space. Let $\tilde E$ denote the restriction of $E$ to $\tilde S$. For each $\iota\in I$, we choose a measurable function
$\tilde f_\iota$ on $\tilde S$ that is equal $E$-a.e. (or, which is the same, $\tilde E$-a.e.) to $f_\iota$. Then we have $J^{\tilde E}_{\tilde
f_\iota}=J^E_{f_\iota}$ for all $\iota\in I$ and it follows from Lemma~\ref{l4b} that $\mathcal A(\mathcal X_0)=\mathcal A(\mathcal P_{\tilde E})$.
As $\mathcal P_E = \mathcal P_{\tilde E}$, this implies that $\mathcal A(\mathcal P_E)\subset \mathcal A(\mathcal X)$ and, hence, $\mathcal
A(\mathcal P_E) = \mathcal A(\mathcal X)$.

Conversely, let $\mathcal A(\mathcal P_E) = \mathcal A(\mathcal X)$. Then condition~(1) is ensured by Lemma~\ref{l4d}. By Lemma~\ref{l41}, there is a
countable set $\mathcal X_0\subset \mathcal X$ such that $\mathcal A(\mathcal X_0)=\mathcal A(\mathcal X)$. Choose a countable family
$\{f_\iota\}_{\iota\in I}$ of measurable complex functions on $S$ such that each $T\in \mathcal X_0$ is equal to $J^E_{f_\iota}$ for some $\iota\in
I$. It suffices to show that $\{f_\iota\}_{\iota\in I}$ is $E$-separating. Let $f$ be the measurable map $s\to \{f_\iota(s)\}_{\iota\in I_0}$ from
$S$ to $\C^I$. For each $\iota\in I$, let $\pi_\iota\colon \C^{I}\to\C$ be the function taking $\{z_\kappa\}_{\kappa\in I}$ to $z_\iota$. For any
$\iota\in I$, we have $\pi_\iota\circ f = f_\iota$, and it follows from~(\ref{pushforward}) that $J^{f_*E}_{\pi_\iota} = J^E_{f_\iota}$ for all
$\iota\in I$, i.e., the set of all $J^{f_*E}_{\pi_\iota}$ coincides with $\mathcal X_0$. Since the family $\{\pi_\iota\}_{\iota\in I}$ obviously
separates the points of $\C^{I}$, Lemma~\ref{l01} implies that $\mathcal A(\mathcal X_0) = \mathcal A(\mathcal P_{f_*E})$ and, hence,
\begin{equation}\label{spec_mes}
\mathcal A(\mathcal P_{E}) = \mathcal A(\mathcal P_{f_*E}).
\end{equation}
By statement~3 of Lemma~\ref{l4a}, there exists a one-to-one measurable function $g$ on $\tilde S$. Clearly, $g$ is $E$-measurable on $S$ and it
follows from Lemma~\ref{l4d} and~(\ref{spec_mes}) that $\mathcal A(J^E_g)\in \mathcal A(\mathcal P_{f_*E})$. Now Lemma~\ref{l4d} implies that there
exists a measurable function $h$ on $\C^{I}$ such that $J^E_g = J^{f_*E}_h$. In view of~(\ref{pushforward}), this means that $J^E_g = J^E_{h\circ f}$
and, hence, $g$ and $h\circ f$ are equal $E$-a.e. Since $g$ is one-to-one on $\tilde S$, it follows that $f$ is one-to-one on $\tilde S\setminus N$,
where $N$ is the set of all $s\in\tilde S$ such that $g(s)\neq h(f(s))$. This means that $\{f_\iota\}_{\iota\in I}$ is $E$-separating and the theorem
is proved.
\end{proof}

\begin{example}
Let $\Lambda\subset \C$ be a set having an accumulation point in $\C$ and let $f_\lambda(z)=e^{\lambda z}$ for $\lambda\in\Lambda$ and $z\in\C$.
Clearly, we can choose a countable set $\Lambda_0\subset \Lambda$ that has an accumulation point in $\C$. If $f_\lambda(z)=f_\lambda(z')$ for some
$z,z'\in \C$ and all $\lambda\in\Lambda_0$, then we have $z=z'$ by the uniqueness theorem for analytic functions and, therefore, the family
$\{f_\lambda\}_{\lambda\in\Lambda_0}$ separates points of $\C$. Theorem~\ref{t0} therefore implies that $\mathcal P_E$ is equivalent to
$\{J^E_{f_\lambda}\}_{\lambda\in\Lambda}$ for any spectral measure $E$ on $\C$. In view of~(\ref{601}), it follows that any normal operator $T$ is
equivalent to the set of all operators $e^{\lambda T}$ with $\lambda\in\Lambda$.
\end{example}

\section{Diagonalizations}
\label{s5}

Let $\nu$ be a measure on a measurable space $S$ and $\mathfrak S$ be a $\nu$-a.e. defined map on $S$ such that $\mathfrak S(s)$ is a Hilbert space
for $\nu$-a.e. $s$ (such a $\mathfrak S$ will be called a $\nu$-a.e. defined family of Hilbert spaces on $S$). A $\nu$-a.e. defined map $\xi$ on $S$
is said to be a $\nu$-a.e. defined section of $\mathfrak G$ if $\xi(s)\in \mathfrak S(s)$ for $\nu$-a.e. $s$. Let $\mathcal F(S,\mathfrak S,\nu)$
denote the set of all equivalence classes whose representatives are $\nu$-a.e. defined sections of $\mathfrak S$. Clearly, $\mathcal F(S,\mathfrak
S,\nu)$ has a natural structure of a vector space. A family $\mathfrak S$ is called $\nu$-measurable on $S$ if a subspace $\mathcal M(\mathfrak S)$
of $\mathcal F(S,\mathfrak S,\nu)$ is chosen such that
\begin{itemize}
\item[(I)] The function $s\to \langle \xi(s),\eta(s)\rangle$ on $S$ is $\nu$-measurable for any sections $\xi,\eta$ of $\mathfrak S$ such that
$[\xi]_\nu, [\eta]_\nu\in \mathcal M(\mathfrak S)$.

\item[(II)] If $\xi$ is a section of $\mathfrak S$ and the function $t\to (\xi(t),\eta(t))$ is $\nu$-measurable for any section $\eta$ of $\mathfrak
S$ such that $[\eta]_\nu\in\mathcal M(\mathfrak S)$, then $[\xi]_\nu\in\mathcal M(\mathfrak S)$.

\item[(III)] There is a sequence $\xi_1,\xi_2,\ldots$ of sections of $\mathfrak S$ such that $[\xi_j]_\nu\in\mathcal M(\mathfrak S)$ for all $j$ and
the linear span of the sequence $\xi_1(s),\xi_2(s),\ldots$ is dense in $\mathfrak S(s)$ for $\nu$-a.e. $s\in S$.
\end{itemize}
Given a $\nu$-measurable family $\mathfrak S$ of Hilbert spaces, a section $\xi$ of $\mathfrak S$ is called $\nu$-measurable if $[\xi]_\nu\in\mathcal
M(\mathfrak S)$. The direct integral $\int^\oplus_S \mathfrak S(s)\,d\nu(s)$ of a $\nu$-measurable family $\mathfrak S$ is, by definition, the vector
subspace of $\mathcal M(\mathfrak S)$ consisting of all $[\xi]_\nu$, where the section $\xi$ is $\nu$-measurable and
\[
\int_S \|\xi(s)\|^2 \,d\nu(s) < \infty.
\]
The scalar product of $[\xi]_\nu, [\eta]_\nu \in \int\nolimits^\oplus_S \mathfrak S(s)\,d\nu(s)$ is defined by the relation
\[
\langle [f]_\nu, [g]_\nu\rangle = \int_S \langle f(s),g(s)\rangle\,d\nu(s).
\]
This scalar product makes $\int^\oplus_S \mathfrak S(s)\,d\nu(s)$ a Hilbert space. A family $\mathfrak S' = \{\mathfrak S'(s)\}_{s\in S}$ of Hilbert
spaces is called a $\nu$-measurable family of subspaces of $\mathfrak S$ if $\mathfrak S'(s)$ is a subspace of $\mathfrak S(s)$ for $\nu$-a.e. $s$
and $\mathcal M(\mathfrak S)\cap \mathcal F(S,\mathfrak S',\nu)$ is a measurable structure for $\mathfrak S'$ (i.e., satisfies conditions (I)-(III)
above). In this case, $\mathfrak S'$ will be always assumed to be endowed with $\mathcal M(\mathfrak S') = \mathcal M(\mathfrak S)\cap \mathcal
F(S,\mathfrak S',\nu)$. A family $\mathfrak S' = \{\mathfrak S'(s)\}_{s\in S}$ of Hilbert spaces is a $\nu$-measurable family of subspaces of
$\mathfrak S$ if and only if $\mathfrak S'(s)$ is a subspace of $\mathfrak S(s)$ for $\nu$-a.e. $s$ and there is a sequence $\xi_1,\xi_2,\ldots$ of
sections of $\mathfrak S'$ such that $[\xi_j]_\nu\in\mathcal M(\mathfrak S)$ for all $j$ and the linear span of the sequence
$\xi_1(s),\xi_2(s),\ldots$ is dense in $\mathfrak S'(s)$ for $\nu$-a.e. $s\in S$.

\begin{example}\label{e_fam}
Let $\mathfrak h$ be a separable Hilbert space and $\nu$ be a measure on a measurable space $S$. Let the family $\mathcal I_{\,\mathfrak h,\nu} =
\{\mathcal I_{\,\mathfrak h,\nu}(s)\}_{s\in S}$ be such that $\mathcal I_{\,\mathfrak h,\nu}(s)=\mathfrak h$ for all $s\in S$ and $\mathcal
M(\mathcal I_{\,\mathfrak h,\nu})$ is the set of all $[\xi]_\nu$, where $\xi$ is a $\nu$-measurable map from $S$ to $\mathfrak h$. It is easy to see
that $\mathcal M(\mathcal I_{\,\mathfrak h,\nu})$ satisfies conditions (I)-(III) and, therefore, $I_{\,\mathfrak h,\nu}$ is a $\nu$-measurable family
of Hilbert spaces.
\end{example}

If $\mathfrak S_1$ and $\mathfrak S_2$ are $\nu$-a.e. defined families of Hilbert spaces on $S$, then the family $\mathfrak S_1\oplus\mathfrak S_2$
is the map defined on $D_{\mathfrak S_1}\cap D_{\mathfrak S_2}$ and taking $s$ to $\mathfrak S_1(s)\oplus\mathfrak S_2(s)$. If both $\mathfrak S_1$
and $\mathfrak S_2$ are $\nu$-measurable, then a $\nu$-a.e. defined section $\xi(s)= (\xi_1(s),\xi_2(s))$ of $\mathfrak S_1\oplus\mathfrak S_2$ is
called $\nu$-measurable if $\xi_1$ and $\xi_2$ are $\nu$-measurable sections of $\mathfrak S_1$ and $\mathfrak S_2$ respectively. Defining $\mathcal
M(\mathfrak S_1\oplus\mathfrak S_2)$ as the set of all $[\xi]_\nu$, where $\xi$ is a $\nu$-measurable section of $\mathfrak S_1\oplus\mathfrak S_2$,
we make $\mathfrak S_1\oplus\mathfrak S_2$ a $\nu$-measurable family of Hilbert spaces on~$S$.

Let $\mathfrak S$ be a $\nu$-measurable family of Hilbert spaces on $S$ and $a$ be a $\nu$-a.e. defined map on $S$ such that $a(s)$ is an operator in
$\mathfrak S(s)$ for $\nu$-a.e. $s$ (such a map will be called a $\nu$-a.e. defined family of operators in $\mathfrak S$). A family $a$ of operators
in $\mathfrak S$ is called $\nu$-measurable if $a(s)$ are closable for $\nu$-a.e. $s$ and the graphs $G_{\overline{a(s)}}$ of $\overline{a(s)}$
constitute a $\nu$-measurable family of subspaces of $\mathfrak S\oplus \mathfrak S$. The family $a(s)$ is measurable if there is a sequence
$\xi_1,\xi_2,\ldots$ of $\nu$-measurable sections of $\mathfrak S$ such that $\xi_n(s)\in D_{a(s)}$ for all $n$ and $\nu$-a.e. $s$ and the linear
span of the vectors $(\xi_n(s),a(s)\xi_n(s))$ is dense in $G_{a(s)}$ for $\nu$-a.e. $s$. As shown in~\cite{Nussbaum}, if $a(s)$ is a $\nu$-measurable
family of operators in $\mathfrak S$, then the map $s\to a(s)\xi(s)$ is a $\nu$-measurable section of $\mathfrak S$ for any $\nu$-measurable section
$\xi$ of $\mathfrak S$ such that $\xi(s)\in D_{a(s)}$ for $\nu$-a.e. $s$.

Let $\mathfrak S$ be a $\nu$-measurable family of Hilbert spaces on $S$, and $g$ be a complex-valued $\nu$-measurable function on $S$. Then $g$
determines a linear operator $\mathcal T_g$ in $\int^\oplus_S \mathfrak S(s)\,d\nu(s)$ as follows. The domain $D_{\mathcal T_g}$ consists of all
$[f]_\nu\in \int^\oplus_S \mathfrak S(s)\,d\nu(s)$, where $f$ is such that the equivalence class of the section $s\to g(s)f(s)$ belongs to
$\int^\oplus_S \mathfrak S(s)\,d\nu(s)$, and the vector $\mathcal T_g [f]_\nu$ is defined as the equivalence class of the section $s\to g(s)f(s)$.

\begin{definition}\label{df-diag}
Let $\mathfrak H$ be a Hilbert space and $\mathcal X$ be a set of closed densely defined operators in $\mathfrak H$. Let $S$ be a measurable space,
$\nu$ be a measure on $S$, $\mathfrak S$ be a $\nu$-measurable family of Hilbert spaces on $S$, and $V\colon \mathfrak H \to \int^\oplus_S \mathfrak
S(s)\,d\nu(s)$ be a unitary operator. We say that the quadruple $(S,\mathfrak S,\nu,V)$ is a diagonalization for $\mathcal X$ if every $T\in\mathcal
X$ is equal to $V^{-1}\mathcal T_g V$ for some $\nu$-measurable complex function $g$ on $S$. A diagonalization $(S,\mathfrak h,\nu,V)$ is called
exact if $\nu$ is standard and $V^{-1}\mathcal T_g V\in \mathcal A(\mathcal X)$ for any $\nu$-measurable $\nu$-essentially bounded function $g$ on
$S$.
\end{definition}

It is easy to see that the above definition of an exact diagonalization is just a reformulation of the condition~(E) given in Introduction in terms
of von~Neumann algebras.

For any $\nu$-measurable family $\mathfrak S$ of Hilbert spaces on $S$, we can define a spectral measure $\Pi_{\mathfrak S}$ on $S$ by setting
\[
\Pi_{\mathfrak S}(A) = \mathcal T_{\chi_A}
\]
for any measurable set $A$, where $\chi_A$ is the characteristic function of $A$. It is easy to see that $J^{\Pi_{\mathfrak S}}_g = \mathcal T_g$ for
any $\nu$-measurable function $g$ on $S$.

\begin{theorem}\label{t0a}
Let $\mathfrak H$ be a separable Hilbert space and $\mathcal X$ be a set of closed densely defined operators in $\mathfrak H$. A quadruple
$(S,\mathfrak S,\nu,V)$, where $S$, $\nu$, $\mathfrak S$, and $V$ are as in Definition~$\ref{df-diag}$, is a diagonalization for $\mathcal X$ if and
only if it is a diagonalization for $\mathcal A(\mathcal X)$. A diagonalization $(S,\mathfrak S,\nu,V)$ for $\mathcal X$ is exact if and only if
$\nu$ is standard and there is a $\nu$-separating family $\{g_\iota\}_{\iota\in I}$ of $\nu$-measurable complex functions on $S$ such that
\begin{equation}\label{diag}
V^{-1}\mathcal T_{g_\iota} V\in \mathcal X,\quad \iota\in I.
\end{equation}
\end{theorem}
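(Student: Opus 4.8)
The plan is to move everything into the language of von~Neumann algebras generated by spectral measures, so that both assertions become formal consequences of Lemma~\ref{l4d}, Lemma~\ref{l4}, and Theorem~\ref{t0}. First I would introduce the spectral measure $E = V^{-1}\Pi_{\mathfrak S}V$ for $(S,\mathfrak H)$, where $\Pi_{\mathfrak S}$ is the spectral measure on the direct integral defined just before the theorem. Since $J^{\Pi_{\mathfrak S}}_g = \mathcal T_g$, conjugation by the unitary $V$ gives $J^E_g = V^{-1}\mathcal T_g V$ for every $\nu$-measurable $g$. The decisive bookkeeping step is to identify the $E$-null sets with the $\nu$-null sets: from $\nu(N)=0$ one gets $\mathcal T_{\chi_N}=0$ and hence $E(N)=0$, while conversely $E(N)=0$ forces $\chi_N=0$ $\nu$-a.e. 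Consequently ``$E$-measurable'' and ``$\nu$-measurable'' agree for functions, ``$E$-separating'' and ``$\nu$-separating'' agree for families, and if $\nu$ is standard then so is $E$, since the standard Borel subset carrying $\nu$ also carries $E$.

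With this dictionary in hand, the crucial intermediate claim is that for any set $\mathcal Y$ of closed densely defined operators the quadruple is a diagonalization for $\mathcal Y$ if and only if $\mathcal A(\mathcal Y)\subset \mathcal A(\mathcal P_E)$. Indeed, the equality $T = V^{-1}\mathcal T_g V = J^E_g$ for some $\nu$-measurable $g$ is, by Lemma~\ref{l4d}, equivalent to $\mathcal A(T)\subset \mathcal A(\mathcal P_E)$; running this over all $T\in\mathcal Y$ and using that $\mathcal A(\mathcal Y)$ is the smallest von~Neumann algebra containing $\bigcup_{T\in\mathcal Y}\mathcal A(T)$ (Lemma~\ref{l4}) yields the claim, because $\mathcal A(\mathcal P_E)$ is itself a von~Neumann algebra. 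Part~1 is then immediate: applying the claim to $\mathcal X$ and to $\mathcal A(\mathcal X)$ produces the two conditions $\mathcal A(\mathcal X)\subset\mathcal A(\mathcal P_E)$ and $\mathcal A(\mathcal A(\mathcal X))\subset\mathcal A(\mathcal P_E)$, which coincide since the von~Neumann algebra generated by a von~Neumann algebra is itself, so $\mathcal A(\mathcal A(\mathcal X))=\mathcal A(\mathcal X)$.

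For Part~2 I would first observe, again by Lemma~\ref{l4d}, that the operators $V^{-1}\mathcal T_g V = J^E_g$ with $g$ ranging over the $\nu$-essentially bounded functions are exactly the elements of $\mathcal A(\mathcal P_E)$; hence exactness means precisely that $\nu$ is standard and $\mathcal A(\mathcal P_E)\subset\mathcal A(\mathcal X)$. Combined with the reverse inclusion $\mathcal A(\mathcal X)\subset\mathcal A(\mathcal P_E)$ supplied by Part~1 (we are given a diagonalization for $\mathcal X$), exactness is equivalent to ``$\nu$ standard and $\mathcal A(\mathcal X)=\mathcal A(\mathcal P_E)$''. Now Theorem~\ref{t0}, applicable because $E$ is standard, characterizes $\mathcal A(\mathcal X)=\mathcal A(\mathcal P_E)$ through its conditions~(1) and~(2); condition~(1) only restates that the quadruple is a diagonalization for $\mathcal X$ and is therefore automatic, so that condition~(2) alone survives. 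Translating condition~(2) back through the dictionary, with $E$-separating becoming $\nu$-separating and $J^E_{g_\iota}=V^{-1}\mathcal T_{g_\iota}V$, gives exactly the asserted existence of a $\nu$-separating family $\{g_\iota\}_{\iota\in I}$ satisfying $V^{-1}\mathcal T_{g_\iota}V\in\mathcal X$.

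I expect the main obstacle to be the measure-theoretic identification carried out in the first paragraph rather than the algebra, which is routine once the claim about diagonalizations is isolated. One must check carefully that passing from the abstract measure $\nu$ to the spectral measure $E$ preserves null sets, measurability, separating families, and standardness, since the statement demands \emph{$\nu$-separating} families whereas Theorem~\ref{t0} delivers \emph{$E$-separating} ones; the equality of the two null ideals (which is where one uses that $\mathfrak S(s)\neq\{0\}$ $\nu$-a.e.) is exactly what makes this translation legitimate. Once this correspondence is secured, both equivalences follow by the manipulations above.
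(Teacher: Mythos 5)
Your proposal is correct and follows essentially the same route as the paper's own proof: both introduce the spectral measure $E=V^{-1}\Pi_{\mathfrak S}V$, use Lemma~\ref{l4d} and Lemma~\ref{l4} to translate the diagonalization property into the inclusion $\mathcal A(\mathcal X)\subset\mathcal A(\mathcal P_E)$ (and exactness into the equality $\mathcal A(\mathcal X)=\mathcal A(\mathcal P_E)$), and then invoke Theorem~\ref{t0}. The only difference is that you make explicit the identification of the $E$-null and $\nu$-null ideals, which the paper passes over in a single sentence.
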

\begin{proof}
Let $S$, $\mathfrak S$, $\nu$, and $V$ be as in Definition~\ref{df-diag} and let the spectral measure $E$ for $(\mathfrak H, S)$ be defined by the
relation $E(A)=V^{-1} \Pi_{\mathfrak S}(A) V$, where $A$ is a measurable subset of $S$. Then $E$-measurability coincides with $\nu$-measurability,
and we have
\begin{equation}\label{spec_integral}
J^E_g = V^{-1} J^{\Pi_{\mathfrak S}}_g V= V^{-1}\mathcal T_g V
\end{equation}
for any $\nu$-measurable $g$ on $S$. By Lemmas~\ref{l4d} and~\ref{l4}, every $T\in \mathcal X$ is equal to $J^E_g$ for some $\nu$-measurable $g$ on
$S$ if and only if
\begin{equation}\label{inclu}
\mathcal A(\mathcal X)\subset \mathcal A(\mathcal P_E).
\end{equation}
In view of~(\ref{spec_integral}), this means that the quadruple $(S,\mathfrak S,\nu,V)$ is a diagonalization for $\mathcal X$ if and only if
(\ref{inclu}) holds. Since $\mathcal A(\mathcal A(\mathcal X))=\mathcal A(\mathcal X)$, it follows that the quadruple $(S,\mathfrak S,\nu,V)$ is a
diagonalization for $\mathcal X$ if and only if it is a diagonalization for $\mathcal A(\mathcal X)$.

Let $(S,\mathfrak S,\nu,V)$ be a diagonalization for $\mathcal X$ and $\nu$ be standard. The condition that $V^{-1} \mathcal T_g V\in\mathcal
A(\mathcal X)$ for any $\nu$-essentially bounded $\nu$-measurable function $g$ on $S$ is equivalent to the equality
\begin{equation}\label{equality}
\mathcal A(\mathcal X)= \mathcal A(\mathcal P_E),
\end{equation}
as follows from Lemma~\ref{l4d}, (\ref{spec_integral}), and~(\ref{inclu}). Since $(S,\mathfrak S,\nu,V)$ is a diagonalization for $\mathcal X$, it
follows from~(\ref{spec_integral}) that every $T\in \mathcal X$ is equal to $J^E_g$ for some $\nu$-measurable $g$ on $S$. Now Theorem~\ref{t0}
implies that (\ref{equality}) holds if and only if there is a $\nu$-separating family $\{g_\iota\}_{\iota\in I}$ such that $J^E_{g_\iota}\in \mathcal
X$ for all $\iota\in I$. By~(\ref{spec_integral}), the latter condition is equivalent to~(\ref{diag}). The theorem is proved.
\end{proof}
\begin{corollary}
Let $\mathfrak H$ be a separable Hilbert space and $\mathcal X$ and $\mathcal Y$ be equivalent sets of closed densely defined operators in $\mathfrak
H$. Then every (exact) diagonalization for $\mathcal X$ is also an (exact) diagonalization for $\mathcal Y$.
\end{corollary}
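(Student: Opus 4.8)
The plan is to reduce everything to the single observation that both the notion of a diagonalization and the notion of exactness depend on the underlying set of operators only through the von~Neumann algebra it generates. First I would treat the plain diagonalization claim. By the first assertion of Theorem~\ref{t0a}, a quadruple $(S,\mathfrak S,\nu,V)$ (with $S$, $\nu$, $\mathfrak S$, $V$ as in Definition~\ref{df-diag}) is a diagonalization for $\mathcal X$ if and only if it is a diagonalization for $\mathcal A(\mathcal X)$; applying the same theorem to $\mathcal Y$ shows that it is a diagonalization for $\mathcal Y$ if and only if it is a diagonalization for $\mathcal A(\mathcal Y)$. Since $\mathcal X$ and $\mathcal Y$ are equivalent, we have $\mathcal A(\mathcal X)=\mathcal A(\mathcal Y)$, so being a diagonalization for $\mathcal A(\mathcal X)$ is literally the same property as being a diagonalization for $\mathcal A(\mathcal Y)$. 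Chaining these three equivalences yields that $(S,\mathfrak S,\nu,V)$ is a diagonalization for $\mathcal X$ if and only if it is a diagonalization for $\mathcal Y$; in particular, any diagonalization for $\mathcal X$ is one for $\mathcal Y$.

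For the exactness statement I would simply invoke Definition~\ref{df-diag}. By that definition, a diagonalization is exact precisely when $\nu$ is standard and $V^{-1}\mathcal T_g V\in\mathcal A(\mathcal X)$ for every $\nu$-measurable $\nu$-essentially bounded function $g$ on $S$. This condition refers to $\mathcal X$ only through $\mathcal A(\mathcal X)$. Thus, once we already know from the previous paragraph that a diagonalization for $\mathcal X$ is also a diagonalization for $\mathcal Y$, the equality $\mathcal A(\mathcal X)=\mathcal A(\mathcal Y)$ shows that the exactness requirement for $\mathcal X$ and the exactness requirement for $\mathcal Y$ are verbatim the same statement. Hence every exact diagonalization for $\mathcal X$ is an exact diagonalization for $\mathcal Y$, which completes both halves of the corollary.

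There is essentially no obstacle here: the result is a formal consequence of the fact, already built into Definition~\ref{df-diag} and Theorem~\ref{t0a}, that these two notions are invariants of the generated von~Neumann algebra. The only point requiring mild care is to apply Theorem~\ref{t0a} symmetrically in $\mathcal X$ and $\mathcal Y$, passing through $\mathcal A(\mathcal X)=\mathcal A(\mathcal Y)$, rather than trying to verify the defining condition of a diagonalization directly for each individual element of $\mathcal Y$. One should also resist the temptation to argue exactness via the $\nu$-separating family criterion in the second part of Theorem~\ref{t0a}: that criterion asks for functions $g_\iota$ with $V^{-1}\mathcal T_{g_\iota}V\in\mathcal X$, a membership condition that is \emph{not} invariant under equivalence, so a separating family adapted to $\mathcal X$ need not consist of operators lying in $\mathcal Y$. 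Working at the level of $\mathcal A(\mathcal X)$ through the definition of exactness sidesteps this entirely and keeps the argument a couple of lines long.
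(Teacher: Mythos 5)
Your argument is correct and is exactly the intended derivation: the paper states this corollary without proof precisely because, as you observe, both the first assertion of Theorem~\ref{t0a} and the exactness condition in Definition~\ref{df-diag} depend on the operator set only through its generated von~Neumann algebra, and $\mathcal A(\mathcal X)=\mathcal A(\mathcal Y)$ by hypothesis. Your cautionary remark about not routing the exactness claim through the $\nu$-separating family criterion is also well taken.
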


\begin{theorem}\label{t_diag}
Every set of pairwise commuting normal operators in a separable Hilbert space admits an exact diagonalization.
\end{theorem}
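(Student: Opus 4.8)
The plan is to realize $\mathcal A(\mathcal X)$ as $\mathcal A(\mathcal P_E)$ for a suitable standard spectral measure $E$, diagonalize $E$ by reduction theory, and let the coordinate functions supply the $\nu$-separating family required by Theorem~\ref{t0a}. First I would reduce to a countable family. Normal operators are closed and densely defined, so Lemma~\ref{l41} provides a countable subset $\mathcal X_0=\{T_n\}$ of $\mathcal X$ with $\mathcal A(\mathcal X_0)=\mathcal A(\mathcal X)$. The $T_n$ are normal and pairwise commute, so Lemma~\ref{lll} shows that all of their spectral projections mutually commute.

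Next I would build a joint spectral measure. Let $S=\prod_n\C$ be the (countable) product, a Polish and hence standard Borel space, with coordinate projections $\pi_n\colon S\to\C$. Since the projections $\mathcal E_{T_n}(A)$ mutually commute, the standard product construction yields a spectral measure $E$ on $S$ with $E(\pi_n^{-1}(A))=\mathcal E_{T_n}(A)$ for all $n$ and all Borel $A\subset\C$ (define $E$ on cylinder sets as products of the commuting projections $\mathcal E_{T_n}(A)$ and extend to the product $\sigma$-algebra). Then $(\pi_n)_*E=\mathcal E_{T_n}$, so~(\ref{600}) together with the uniqueness of the spectral measure of a normal operator gives $J^E_{\pi_n}=T_n$. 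The family $\{\pi_n\}$ separates the points of $S$ and is therefore $E$-separating, so conditions~(1) and~(2) of Theorem~\ref{t0} hold for $\mathcal X_0$ and $\{\pi_n\}$; Theorem~\ref{t0} then yields $\mathcal A(\mathcal P_E)=\mathcal A(\mathcal X_0)=\mathcal A(\mathcal X)$.

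The main step, and the one I expect to be the genuine obstacle, is to diagonalize $E$. I would fix an orthonormal basis $\{e_k\}$ of $\mathfrak H$ and set $\nu=\sum_k 2^{-k}E_{e_k}$, a finite measure whose null sets coincide with those of $E$; thus $\nu$ is standard and $\nu$-measurability agrees with $E$-measurability. It then remains to produce a $\nu$-measurable family $\mathfrak S$ of Hilbert spaces and a unitary $V\colon\mathfrak H\to\int^\oplus_S\mathfrak S(s)\,d\nu(s)$ with $VE(A)V^{-1}=\Pi_{\mathfrak S}(A)$ for every measurable $A$. This is exactly the direct-integral (multiplicity) decomposition of an abelian spectral measure on a separable space, furnished by von~Neumann's reduction theory~\cite{Neumann,Dixmier}. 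It is the only place where analytic content beyond the von~Neumann algebra formalism developed above is needed, and carrying the measurability bookkeeping through the multiplicity decomposition is the part demanding care.

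Finally I would verify exactness. From $E(A)=V^{-1}\Pi_{\mathfrak S}(A)V$ one gets $J^E_g=V^{-1}\mathcal T_g V$ for every $\nu$-measurable $g$. For any $T\in\mathcal X$ we have $\mathcal A(T)\subset\mathcal A(\mathcal X)=\mathcal A(\mathcal P_E)$, so Lemma~\ref{l4d} gives $T=J^E_g=V^{-1}\mathcal T_g V$ for some $\nu$-measurable $g$; hence $(S,\mathfrak S,\nu,V)$ is a diagonalization for $\mathcal X$. Moreover $\{\pi_n\}$ is a $\nu$-separating family with $V^{-1}\mathcal T_{\pi_n}V=J^E_{\pi_n}=T_n\in\mathcal X$, and $\nu$ is standard, so Theorem~\ref{t0a} shows the diagonalization is exact, which completes the proof.
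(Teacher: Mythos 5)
Your argument is correct, but it follows a noticeably different route from the paper's. The paper's proof is two lines: by Lemma~\ref{l_com} the algebra $\mathcal A(\mathcal X)$ is Abelian, and Th\'eor\`eme~2 of Sec.~II.6.2 in~\cite{Dixmier} applied to that Abelian algebra directly produces a finite measure $\nu$ on a compact metrizable $S$, a measurable family $\mathfrak S$, and a unitary $V$ such that $\mathcal A(\mathcal X)$ is \emph{exactly} the set of all $V^{-1}\mathcal T_g V$ with $g$ bounded measurable; exactness is then immediate from Definition~\ref{df-diag} and the first part of Theorem~\ref{t0a}, with no need for a separating family or for Theorem~\ref{t0}. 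You instead pass through Lemma~\ref{l41} to a countable $\mathcal X_0=\{T_n\}$, assemble a joint spectral measure $E$ on $\C^{\mathbb N}$, identify $\mathcal A(\mathcal P_E)=\mathcal A(\mathcal X)$ via Theorem~\ref{t0}, and only then invoke reduction theory to diagonalize $E$ over the prescribed base $(S,\nu)$ with $\nu=\sum_k 2^{-k}E_{e_k}$. What your version buys is concreteness: an explicit base space, an explicit scalar measure equivalent to $E$, and an explicit $\nu$-separating family $\{\pi_n\}$ realizing the criterion of Theorem~\ref{t0a}. What it costs is two extra nontrivial inputs that the paper never needs: (i) the existence of the countable product of commuting spectral measures (the extension from cylinder sets to the product $\sigma$-algebra requires a Kolmogorov-type argument on the scalar measures $E_\Psi$ plus polarization --- you should flag this, not only the multiplicity step), and (ii) the spectral multiplicity theorem in the form that diagonalizes a given spectral measure over a given basic measure, rather than the cleaner statement for Abelian von~Neumann algebras that the paper cites. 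Both are standard, so your proof goes through; but note that the heavy analytic content is the same Dixmier/von~Neumann reduction theory in either case, so the detour through the joint spectral measure does not make the argument more elementary --- it only makes the diagonalization more explicit.
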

\begin{proof}
Let $\mathcal X$ be a set of pairwise commuting normal operators in a separable Hilbert space $\mathfrak H$. By Lemma~\ref{l_com}, the algebra
$\mathcal A(\mathcal X)$ is Abelian. By Th\'eor\`eme~2 of Sec.~II.6.2 in~\cite{Dixmier}, there are a finite measure $\nu$ on a compact metrizable
space $S$, a $\nu$-measurable family $\mathfrak S$ of Hilbert spaces, and a unitary operator $V\colon \mathfrak H \to \int^\oplus_S \mathfrak
S(s)\,d\nu(s)$ such that $\mathcal A(\mathcal X)$ coincides with the set of all operators $V^{-1}\mathcal T_g V$, where $g$ is a $\nu$-measurable
$\nu$-essentially bounded function on $S$. It now follows from Theorem~\ref{t0a} and Definition~\ref{df-diag} that $(S,\mathfrak S,\nu,V)$ is an
exact diagonalization for $\mathcal X$. The theorem is proved.
\end{proof}

\section{Symmetry preserving extensions}
\label{s6}

\begin{definition}\label{d4}
Let $\nu$ be a measure on a measurable space $S$, $\mathfrak S$ be a $\nu$-measurable family of Hilbert spaces on $S$, and $a$ be a $\nu$-a.e.
defined family of operators in $\mathfrak S$. Let $D$ be the subspace of $\int^\oplus \mathfrak S(s)\,d\nu(s)$ consisting of all $[\xi]_\nu$, where
$\xi$ is a square-integrable section of $\mathfrak S$ such that $\xi(s)\in D_{a(s)}$ for $\nu$-a.e. $s$ and the section $s\to a(s)\xi(s)$ of
$\mathfrak S$ is square-integrable. The direct integral $\int^{\oplus} a(s)\,d\nu(s)$ of the family $a(s)$ is defined as the linear operator in
$\int^\oplus \mathfrak S(s)\,d\nu(s)$ with domain $D$ taking $[\xi]_\nu\in D$ to the $\nu$-equivalence class of the map $s\to a(s)\xi(s)$.
\end{definition}

It is easy to see that $\int^{\oplus} a(s)\,d\nu(s)$ is closed if $a(s)$ are closed for almost all $s$. Clearly, the operator $\int^{\oplus}
a(s)\,d\nu(s)$ commutes with $\mathcal T_g$ for any $\nu$-essentially bounded function $g$ on $S$. The next statement was proved in~\cite{Nussbaum}.

\begin{lemma}\label{l5}
Let $\nu$ be a measure a measurable space $S$ and $\mathfrak S$ be a $\nu$-measurable family of Hilbert spaces on $S$. Let $A$ be a closed operator
in $\int^\oplus \mathfrak S(s)\,d\nu(s)$ that commutes with $\mathcal T_g$ for any $\nu$-essentially bounded function $g$ on $S$. Then there is a
unique (up to $\nu$-equivalence) $\nu$-measurable family $a(s)$ of closed operators in $\mathfrak S$ such that
\[
A=\int^{\oplus} a(s)\,d\nu(s).
\]
The operator $A$ is self-adjoint if and only if $a(s)$ is self-adjoint for almost every $s$. If $a(s)$ and $\tilde a(s)$ are $\nu$-measurable
families of closed operators in $\mathfrak S$, then $\int^{\oplus} \tilde a(s)\,d\nu(s)$ is an extension of $\int^{\oplus} a(s)\,d\nu(s)$ if and only
if $\tilde a(s)$ is an extension of $a(s)$ for almost all $s$.
\end{lemma}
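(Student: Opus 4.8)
The plan is to reduce everything to the theory of closed subspaces via the graph of $A$. Since $A$ is closed, it is completely determined by its graph $G_A$, a closed subspace of the orthogonal sum $\int^\oplus \mathfrak S(s)\,d\nu(s) \oplus \int^\oplus \mathfrak S(s)\,d\nu(s)$, which I identify with $\int^\oplus (\mathfrak S(s)\oplus \mathfrak S(s))\,d\nu(s)$. First I would observe that the hypothesis that $A$ commutes with $\mathcal T_g$ for every $\nu$-essentially bounded $g$ translates, at the level of graphs, into invariance of $G_A$ under every operator of the form $\mathcal T_g\oplus \mathcal T_g$: indeed, if $\Psi\in D_A$, then $(\Psi, A\Psi)\in G_A$ and, by commutation, $(\mathcal T_g\Psi, \mathcal T_g A\Psi) = (\mathcal T_g\Psi, A\mathcal T_g\Psi)\in G_A$. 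Consequently, the orthogonal projection $P$ onto $G_A$ is a bounded operator commuting with the whole diagonal multiplication algebra on $\int^\oplus (\mathfrak S\oplus\mathfrak S)\,d\nu$.

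The key step is then to invoke the bounded reduction theory (von~Neumann, in the form extended by Nussbaum): a bounded operator commuting with all diagonal multiplications $\mathcal T_g\oplus\mathcal T_g$ is decomposable, so that $P=\int^\oplus P(s)\,d\nu(s)$ for a $\nu$-measurable family of orthogonal projections $P(s)$ in $\mathfrak S(s)\oplus\mathfrak S(s)$. Setting $G(s)=\mathrm{Ran}\,P(s)$ yields a $\nu$-measurable family of closed subspaces with $G_A=\int^\oplus G(s)\,d\nu(s)$. I would next check that $G(s)$ is almost everywhere the graph of a closed operator $a(s)$, i.e.\ that $(0,\eta)\in G(s)$ forces $\eta=0$ for $\nu$-a.e.\ $s$; otherwise a measurable selection would produce a section $(0,\eta)$ with $\eta\neq 0$ lying in $G_A$, contradicting the fact that $G_A$ is a genuine graph. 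Defining $a(s)$ by $G_{a(s)}=G(s)$ gives a $\nu$-measurable family of closed operators, and matching graphs shows $A=\int^\oplus a(s)\,d\nu(s)$. Uniqueness is immediate: two such families have the same graph $G_A$, hence the same decomposable projection $P$, and hence coincide almost everywhere.

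For the self-adjointness claim I would compute the adjoint fibrewise. Writing $J$ for the unitary flip $(\xi,\psi)\mapsto(-\psi,\xi)$ on $\int^\oplus(\mathfrak S\oplus\mathfrak S)\,d\nu$, one has $G_{A^*}=(JG_A)^\perp$. Since $J$ is itself a diagonal (decomposable) unitary and both the application of $J$ and the passage to the orthogonal complement respect the direct-integral decomposition, it follows that $A^*=\int^\oplus a(s)^*\,d\nu(s)$. Comparing with the uniqueness statement, $A=A^*$ holds precisely when $a(s)=a(s)^*$ for $\nu$-a.e.\ $s$, which is the asserted criterion. The extension statement is handled the same way: $\int^\oplus \tilde a(s)\,d\nu(s)$ extends $\int^\oplus a(s)\,d\nu(s)$ if and only if $G_A\subset G_{\tilde A}$, i.e.\ if and only if the corresponding decomposable projections satisfy $P\leq\tilde P$, which by fibrewise uniqueness is equivalent to $P(s)\leq\tilde P(s)$, hence $G(s)\subset\tilde G(s)$, for $\nu$-a.e.\ $s$.

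I expect the genuine obstacle to be the measurable-field machinery underlying the second step: establishing that every bounded operator commuting with the diagonal multiplication algebra is decomposable, and that this correspondence carries the lattice structure of projections faithfully down to the fibres in a measurable fashion. This is exactly the content of reduction theory, and the only real subtlety introduced by allowing $A$ to be unbounded is circumvented by passing to graphs, which replaces the unbounded operator by a bounded projection to which the classical bounded theory applies directly (this is precisely Nussbaum's device). The measurable-selection argument needed to rule out nontrivial vertical vectors in $G(s)$ likewise relies on the standard Borel structure of the family $\mathfrak S$.
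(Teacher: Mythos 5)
Your argument is correct and is essentially the proof the paper itself relies on: the paper does not prove this lemma but cites Nussbaum's reduction theory for unbounded closed operators, and Nussbaum's proof is precisely your graph device --- pass to the orthogonal projection onto $G_A$, note that the commutation hypothesis makes this projection commute with the diagonal multiplication algebra, decompose it by the bounded von~Neumann reduction theory, and verify fibrewise that the ranges $G(s)$ are graphs of closed operators. The one step you leave vague (ruling out nonzero vertical vectors in $G(s)$ on a set of positive measure) is handled in the standard way by decomposing the intersection $G_A\cap(\{0\}\oplus\mathfrak H)=\{0\}$ fibrewise rather than by an ad hoc measurable selection, so nothing is missing in substance.
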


\begin{definition}\label{d5}
Let $\nu$ be a measure a measurable space $S$ and $\mathfrak S$ be a $\nu$-measurable family of Hilbert spaces on $S$. A $\nu$-a.e. defined family
$a(s)$ of operators in $\mathfrak S$ is said to be compatible with a subspace $\Delta$ of $\int^\oplus \mathfrak S(s)\,d\nu(s)$ if
\begin{itemize}
\item[(a)] for any $[\xi]_\nu\in \Delta$, the relation $\xi(s)\in D_{a(s)}$ holds for $\nu$-a.e. $s$,

\item[(b)] there is a sequence $\xi_1,\xi_2,\ldots$ of $\nu$-measurable sections of $\mathfrak S$ such that $[\xi_j]_\nu\in\Delta$ for all $n$ and
the linear span of $(\xi_j(s), a(s)\xi_j(s))$ is dense in the graph of $a(s)$ for almost all $s$.
\end{itemize}
\end{definition}

\begin{definition}\label{d_red}
Let $H$ be an operator in a Hilbert space $\mathfrak H$, $\nu$ be a measure a measurable space $S$, and $\mathfrak S$ be a $\nu$-measurable family of
Hilbert spaces on $S$. Let $V$ be a unitary operator from $\mathfrak H$ to $\int^\oplus_S \mathfrak S(s)\,d\nu(s)$. A $\nu$-a.e. defined family
$a(s)$ of operators in $\mathfrak S$ is called a reduction of $H$ with respect to the quadruple $(S,\mathfrak S,\nu,V)$ if the family $a(s)$ is
compatible with $V(D_H)$ and $\int^\oplus_S a(s)\,d\nu(s)$ is an extension of $VHV^{-1}$.
\end{definition}

Clearly, a $\nu$-a.e. defined family $a(s)$ of operators in $\mathfrak S$ is a reduction of $H$ with respect to $(S,\mathfrak S,\nu,V)$ if and only
if $a(s)$ is compatible with $V(D_H)$ and the relation
\begin{equation}\label{sep7}
a(s)\xi(s) = \eta(s)
\end{equation}
holds $\nu$-a.e. for any square-integrable sections $\xi$ and $\eta$ of $\mathfrak S$ such that
\begin{equation}\label{sep7a}
[\xi]_\nu = V\Psi,\quad [\eta]_\nu = VH\Psi
\end{equation}
for some $\Psi\in D_H$.

\begin{theorem} \label{t1}
Let $\mathcal X$ be a set of closed densely defined operators in a Hilbert space $\mathfrak H$ and $(S,\mathfrak S,\nu,V)$ be an exact
diagonalization for $\mathcal X$. Let $H$ be an operator in $\mathfrak H$ with the dense domain $D_H$ and $a(s)$ be a reduction of $H$ with respect
to $(S,\mathfrak S,\nu,V)$. Then the following statements are valid:
\par\bigskip
{\rm 1.} Let $\tilde a(s)$ be a $\nu$-measurable family of closed extensions of $a(s)$. Then
\begin{equation}\label{sep8}
\tilde H = V^{-1}\int^\oplus \tilde a(s) \,d\nu(s)\, V
\end{equation}
is a closed extension of $H$ that commutes with all elements of $\mathcal A(\mathcal X)$. If $\tilde a(s)$ are self-adjoint for $\nu$-a.e. $s$, then
$\tilde H$ is self-adjoint and commutes with all normal operators $T$ such that $\mathcal A(T)\subset \mathcal A(\mathcal X)$, in particular, with
all elements of $\mathcal X$.
\par\bigskip
{\rm 2.} Let $\mathcal Y$ be a set of closed densely defined operators that is equivalent to $\mathcal X$ and $\tilde H$ be a closed extension of $H$
such that both $\tilde H$ and $\tilde H^*$ commute with all elements of $\mathcal Y$. Then there is a unique (up to $\nu$-equivalence)
$\nu$-measurable family $\tilde a(s)$ of closed extensions of $a(s)$ such that formula~$(\ref{sep8})$ holds. If $\tilde H$ is self-adjoint, then
$\tilde a(s)$ are self-adjoint for $\nu$-a.e. $s$.
\par\bigskip
{\rm 3.} Suppose the family $a(s)$ is $\nu$-measurable and there is an involutive set $\mathcal Y\subset L(\mathfrak H)$ that is equivalent to
$\mathcal X$ and leaves $D_H$ invariant (i.e., $T\Psi\in D_H$ for any $\Psi\in D_H$ and $T\in \mathcal Y$). Then $H$ is closable and commutes with
all elements of $\mathcal Y$. Moreover, the operators $a(s)$ are closable for $\nu$-a.e. $s$ and the closure $\bar H$ of $H$ is given by
\begin{equation}\label{sep8a}
\bar H = V^{-1}\int^\oplus \bar a(s)\,d\nu(s)\,V.
\end{equation}
\end{theorem}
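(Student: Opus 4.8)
The plan is to reduce everything to the already‑established part~2 of the theorem, using the extra $\nu$-measurability hypothesis to manufacture one explicit closed extension. First I would observe that the closability of $a(s)$ for $\nu$-a.e.\ $s$ is literally part of the definition of a $\nu$-measurable family of operators, so that claim is immediate and $\overline{a(s)}$ is defined $\nu$-a.e.; moreover the graphs $G_{\overline{a(s)}}$ form a $\nu$-measurable family of subspaces, so $\bar A=\int^\oplus \overline{a(s)}\,d\nu(s)$ is a well-defined closed operator. Since $\overline{a(s)}$ extends $a(s)$ pointwise, $\bar A$ extends $A=\int^\oplus a(s)\,d\nu(s)$ (any $[\xi]_\nu\in D_A$ has $\xi(s)\in D_{a(s)}\subset D_{\overline{a(s)}}$ with $\overline{a(s)}\xi(s)=a(s)\xi(s)$), and $A$ extends $VHV^{-1}$ because $a(s)$ is a reduction of $H$. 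Hence $V^{-1}\bar A V$ is a closed extension of $H$; in particular $H$ is closable and $\bar H\subset V^{-1}\bar A V$.

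Next I would prove that $H$ commutes with every $T\in\mathcal Y$. Since $\mathcal Y$ is equivalent to $\mathcal X$, the corollary following Theorem~\ref{t0a} shows that $(S,\mathfrak S,\nu,V)$ is a diagonalization for $\mathcal Y$ as well, so $VTV^{-1}=\mathcal T_g$ for some $\nu$-essentially bounded $\nu$-measurable $g$. Fix $\Psi\in D_H$ and let $\xi,\eta$ be square-integrable representatives of $V\Psi$ and $VH\Psi$; the reduction relation~(\ref{sep7}) gives $a(s)\xi(s)=\eta(s)$ $\nu$-a.e. By invariance $T\Psi\in D_H$, and $VT\Psi=\mathcal T_g[\xi]_\nu$ is represented by $g\xi$, which is square-integrable and satisfies $g(s)\xi(s)\in D_{a(s)}$ $\nu$-a.e.\ by compatibility with $V(D_H)$. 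Applying~(\ref{sep7}) to $T\Psi$ and using that $a(s)$ is linear while $g(s)$ is a scalar, I get that $VHT\Psi$ is represented by $s\to a(s)(g(s)\xi(s))=g(s)a(s)\xi(s)=g(s)\eta(s)$, which is exactly $\mathcal T_g[\eta]_\nu=VTH\Psi$. Thus $T\Psi\in D_H$ and $HT\Psi=TH\Psi$, so $H$ and $T$ satisfy condition~(A) and therefore commute.

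Finally I would establish the closure formula~(\ref{sep8a}). Having shown that $H$ commutes with the involutive set $\mathcal Y\subset L(\mathfrak H)$, Lemma~\ref{l1} yields that $\bar H$ commutes with every $T\in\mathcal Y$ and that $H^*=\bar H^*$ commutes with every $T^*\in\mathcal Y$; as $\mathcal Y^*=\mathcal Y$, both $\bar H$ and $\bar H^*$ commute with all elements of $\mathcal Y$. This is precisely the hypothesis of part~2 (with $\tilde H=\bar H$), which provides a unique $\nu$-measurable family $\tilde a(s)$ of closed extensions of $a(s)$ with $\bar H=V^{-1}\int^\oplus \tilde a(s)\,d\nu(s)\,V$. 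It then remains to identify $\tilde a(s)$ with $\overline{a(s)}$: on one hand each $\tilde a(s)$ is a closed extension of $a(s)$, so $\overline{a(s)}\subset\tilde a(s)$ $\nu$-a.e.; on the other hand $\bar H\subset V^{-1}\bar A V$ gives $\int^\oplus \tilde a(s)\,d\nu(s)\subset\int^\oplus \overline{a(s)}\,d\nu(s)$, whence Lemma~\ref{l5} forces $\tilde a(s)\subset\overline{a(s)}$ $\nu$-a.e. Combining the two inclusions gives $\tilde a(s)=\overline{a(s)}$ $\nu$-a.e., and~(\ref{sep8a}) follows. I expect the commutation step to be the only genuinely delicate point: it works solely because the diagonalized symmetries act fiberwise as scalar multipliers $g(s)$, which pass through the linear fiber operators $a(s)$; once this is in hand, the rest is bookkeeping with Lemma~\ref{l5} and the already-proven part~2.
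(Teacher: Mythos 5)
Your proposal addresses only statement~3 of Theorem~\ref{t1}; statements~1 and~2 are taken as ``already established'' and never proved. They are, however, part of the theorem itself, not prior results, and they carry the main content: statement~1 needs the observation that, by Theorem~\ref{t0a}, the exact diagonalization is a diagonalization for $\mathcal A(\mathcal X)$, so every element of $\mathcal A(\mathcal X)$ is of the form $V^{-1}\mathcal T_g V$ and commutes with $\int^\oplus \tilde a(s)\,d\nu(s)$; statement~2 needs Lemma~\ref{l3a} to pass from commutation with $\mathcal Y$ to commutation with $\mathcal A(\mathcal X)$, then exactness plus Lemma~\ref{l5} to decompose $V\tilde HV^{-1}$ into a measurable family $\tilde a(s)$, and finally the compatibility of $a(s)$ with $V(D_H)$ (condition~(b) of Definition~\ref{d5}) to show that $\tilde a(s)$ actually extends $a(s)$ $\nu$-a.e. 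None of this is in your write-up, so as a proof of the stated theorem it is incomplete.

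Within statement~3 your argument is correct and, in its final step (applying statement~2 to $\bar H$ and squeezing $\tilde a(s)$ between $\overline{a(s)}$ and itself via Lemma~\ref{l5}), identical to the paper's. The one genuine difference is the commutation of $H$ with $\mathcal Y$: you prove it by a direct fiberwise computation, applying the reduction relation~(\ref{sep7}) to both $\Psi$ and $T\Psi$ and using that the scalar multiplier $g(s)$ passes through the linear operator $a(s)$. This is valid (note that $g(s)\xi(s)\in D_{a(s)}$ follows from condition~(a) of Definition~\ref{d5} applied to $[g\xi]_\nu=VT\Psi\in V(D_H)$, as you say). The paper instead deduces it from statement~1: $V^{-1}\bigl(\int^\oplus\overline{a(s)}\,d\nu(s)\bigr)V$ is an extension of $H$ commuting with $\mathcal Y$, and the invariance of $D_H$ under $\mathcal Y$ transfers the commutation down to $H$. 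Your route is more self-contained but also more computational; the paper's is shorter once statement~1 is available. Either is acceptable, but you must first supply proofs of statements~1 and~2.
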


\begin{proof} $\,$
\par \noindent 1.
Let $A=\int^{\oplus} a(s)\,d\nu(s)$ and $\tilde A=\int^{\oplus} \tilde a(s)\,d\nu(s)$. By the hypothesis, $A$ is an extension of $VHV^{-1}$. Since
$\tilde A$ is a closed extension of $A$, we conclude that $\tilde H$ is a closed extension of $H$. Theorem~\ref{t0a} implies that $(S,\mathfrak
S,\nu,V)$ is a diagonalization for $\mathcal A(\mathcal X)$. This means that every $T\in \mathcal A(\mathcal X)$ has the form $T= V^{-1}\mathcal T_g
V$ for some $\nu$-essentially bounded $g$ on $S$. Since $\mathcal T_g$ commutes with $\tilde A$, it follows that $T$ commutes with $\tilde H$. If
$\tilde a(s)$ are self-adjoint for $\nu$-a.e. $s$, then $\tilde A$ is self-adjoint by Lemma~\ref{l5} and, therefore, $\tilde H$ is self-adjoint. By
Lemma~\ref{lll}, $\tilde H$ commutes with all normal operators $T$ such that $\mathcal A(T)\subset \mathcal A(\mathcal X)$.

\par\noindent 2. Let $\tilde A=V\tilde H\,V^{-1}$. Since both $\tilde H$ and $\tilde H^*$ commute with all elements of $\mathcal Y$, it
follows from Lemma~\ref{l3a} that $\tilde H$ commutes with all elements of $\mathcal A(\mathcal Y)=\mathcal A(\mathcal X)$. As the diagonalization
$(S,\mathfrak S,\nu,V)$ is exact, we have $V^{-1}\mathcal T_g V\in \mathcal A(\mathcal X)$ for any $\nu$-measurable $\nu$-essentially bounded
function $g$ on $S$. Hence, $\tilde A$ commutes with $\mathcal T_{g}$ for any $\nu$-measurable $\nu$-essentially bounded $g$, and it follows from
Lemma~\ref{l5} that there is a unique (up to $\nu$-equivalence) $\nu$-measurable family $\tilde a$ of closed operators in $\mathfrak S$ such that
\begin{equation}\label{sep10}
\tilde A = \int^{\oplus} \tilde a(s)\,d\nu(s).
\end{equation}
If $\tilde H$ is self-adjoint, then $\tilde A$ is also self-adjoint, and Lemma~\ref{l5} ensures that $\tilde a(s)$ are self-adjoint for $\nu$-a.e.
$s$.

For any $\nu$-measurable section $\xi$ of $\mathfrak S$ such that $[\xi]_\nu\in V(D_H)$, we have $\xi(s)\in D_{a(s)}\cap D_{\tilde a(s)}$ and the
relation
\begin{equation}\label{sep12}
a(s)\xi(s)=\tilde a(s)\xi(s)
\end{equation}
holds for $\nu$-a.e. $s$. Indeed, since the family $a(s)$ is compatible with to $V(D_H)$, we have $\xi(s)\in D_{a(s)}$ for $\nu$-a.e. $s$. Let $\eta$
be a $\nu$-measurable section of $\mathfrak S$ such that $[\eta]_\nu = VHV^{-1}[\xi]_\nu$. Then obviously there is $\Psi\in D_H$ such that
equalities~(\ref{sep7a}) hold and, therefore, relation~(\ref{sep7}) holds for $\nu$-a.e. $s$. Since $\tilde H$ is an extension of $H$, $\tilde A$ is
an extension of $VHV^{-1}$ and, in view of~(\ref{sep10}), we conclude that $\xi(s)\in D_{\tilde a(s)}$ and
\[
\eta(s) = \tilde a(s)\,\xi(s)
\]
for $\nu$-a.e. $s$. Together with~(\ref{sep7}), this equality implies~(\ref{sep12}).

Let $\xi_1,\xi_2,\ldots$ be a sequence of sections of $\mathfrak S$ such that $[\xi_j]_\nu\in V(D_H)$ for all $j$ and the linear span of
$(\xi_j(s),a(s)\xi_j(s))$ is dense in the graph of $a(s)$ for $\nu$-a.e. $s$ (such a sequence exists because the family of operators $a(s)$ is
compatible with to $V(D_H)$). Let $L_s$ denote the linear span of $\xi_j(s)$. By~(\ref{sep12}), $L_s\subset D_{a(s)}\cap D_{\tilde a(s)}$ and $\tilde
a(s)\psi = a(s)\psi$ for any $\psi\in L_s$. For any $\psi\in D_{a(s)}$, there is a sequence $\psi_n$ of elements of $L_s$ such that $\psi_n\to \psi$
and $a(s)\psi_n\to a(s)\psi$ as $n\to\infty$. Since $\tilde a(s)\psi_n = a(s)\psi_n$ for all $n$ and $\tilde a(s)$ is closed, we conclude that
$\psi\in D_{\tilde a(s)}$ and $\tilde a(s)\psi = a(s)\psi$. Hence, $\tilde a(s)$ is an extension of $a(s)$ for $\nu$-a.e. $s$.

\par \noindent 3.
The $\nu$-measurability of the family $a(s)$ implies that $a(s)$ are closable for $\nu$-a.e. $s$ and the family $\bar a(s)$ is $\nu$-measurable. Let
$A=\int^\oplus a(s)\,d\nu(s)$ and $B=\int^\oplus \bar a(s)\,d\nu(s)$. Since $A$ is an extension of $VHV^{-1}$ and $B$ is a closed extension of $A$,
we conclude that $V^{-1}BV$ is a closed extension of $H$. Hence, $H$ is closable. By statement~1, $V^{-1}BV$ commutes with all elements of $\mathcal
A(\mathcal X)$ and, in particular, with all elements of $\mathcal Y$. Since $V^{-1}BV$ is an extension of $H$ and $D_H$ is invariant under $\mathcal
Y$, it follows that $H$ commutes with all elements of $\mathcal Y$, and Lemma~\ref{l1} implies that $\bar H$ commutes with all elements of $\mathcal
Y$. By statement~2, there is a $\nu$-measurable field $\tilde a(s)$ of closed extensions of $a(s)$ such that $\bar H = V^{-1}\int^\oplus \tilde a(s)
\,d\nu(s)\, V$. Since $\tilde a(s)$ are closed for $\nu$-a.e. $s$, it follows that $\tilde a(s)$ are extensions of $\bar a(s)$ for $\nu$-a.e. $s$. On
the other hand, since $V^{-1}BV$ is a closed extension of $H$, $B$ is an extension of $V\bar H V^{-1}= \int^\oplus \tilde a(s) \,d\nu(s)$, and
Lemma~\ref{l5} implies that $\bar a(s)$ is an extension of $\tilde a(s)$ for $\nu$-a.e. $s$. Hence, $\bar a(s)=\tilde a(s)$ for $\nu$-a.e. $s$. The
theorem is proved.
\end{proof}

Note that if the conditions of statement~3 of Theorem~\ref{t1} are satisfied and $a(s)$ is essentially self-adjoint for almost all $s$, then $H$ is
essentially self-adjoint by~(\ref{sep8a}) and Lemma~\ref{l5}.

\begin{lemma}\label{l_red}
Let $H$ be a closable densely defined operator in a Hilbert space $\mathfrak H$ and $\mathcal X\subset L(\mathfrak H)$ be an involutive set of
operators. Let $(S,\mathfrak S,\nu,V)$ be an exact diagonalization for $\mathcal X$. Then $H$ commutes with all elements of $\mathcal X$ if and only
if there exists a $\nu$-measurable reduction of $H$ with respect to $(S,\mathfrak S,\nu,V)$ and $D_H$ is left invariant by all elements of $\mathcal
X$.
\end{lemma}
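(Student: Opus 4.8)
The statement is an equivalence whose reverse implication is essentially already available. The plan for the direction ($\Leftarrow$) is to invoke statement~3 of Theorem~\ref{t1}: if $D_H$ is left invariant by $\mathcal X$ and $a(s)$ is a $\nu$-measurable reduction of $H$, then taking $\mathcal Y=\mathcal X$ (which is involutive, equivalent to itself, and leaves $D_H$ invariant) yields at once that $H$ commutes with every element of $\mathcal X$. Thus all the real work lies in the forward implication, which I would prove by explicitly constructing a reduction.

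For ($\Rightarrow$), assume $H$ commutes with all $T\in\mathcal X$. The invariance of $D_H$ is free: since each $T$ is bounded, statement~1 of Lemma~\ref{l0} forces condition~(A), so $T\Psi\in D_H$ for every $\Psi\in D_H$. To build the reduction I would first pass to the closure. By Lemma~\ref{l1}, $\bar H$ commutes with every $T\in\mathcal X$; applying Lemma~\ref{l1} to $H$ and the adjoints and using $\mathcal X^*=\mathcal X$, the operator $\bar H^*=H^*$ also commutes with every element of $\mathcal X$. Lemma~\ref{l3a} then gives that $\bar H$ commutes with all of $\mathcal A(\mathcal X)$, and since the diagonalization is exact, $V^{-1}\mathcal T_g V\in\mathcal A(\mathcal X)$ for every $\nu$-essentially bounded $\nu$-measurable $g$. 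Hence $A:=V\bar H V^{-1}$ commutes with every such $\mathcal T_g$, and Lemma~\ref{l5} supplies a $\nu$-measurable family $a(s)$ of closed operators with $A=\int^\oplus a(s)\,d\nu(s)$.

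It then remains to check that $a(s)$ is a reduction of $H$. The extension requirement of Definition~\ref{d_red} is immediate because $A$ extends $VHV^{-1}$, and condition~(a) of Definition~\ref{d5} follows from the description of the domain of a direct integral: any $[\xi]_\nu\in V(D_H)\subset D_A$ has $\xi(s)\in D_{a(s)}$ for $\nu$-a.e.\ $s$. The substance is condition~(b). Here I would use that $D_H$ is a core for $\bar H$, so $V(D_H)$ is dense in the graph of $A$ in the graph norm; by separability I pick a countable graph-dense subset $\{[\xi_j]_\nu\}\subset V(D_H)$. Since $a(s)$ is $\nu$-measurable, each $s\mapsto a(s)\xi_j(s)$ is a measurable section, the graphs $G_{a(s)}$ form a $\nu$-measurable family of subspaces of $\mathfrak S\oplus\mathfrak S$, and the graph of $A$ equals $\int^\oplus G_{a(s)}\,d\nu(s)$.

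The hard part will be passing from this global density to fiberwise density in the graphs. My plan is to set $\mathfrak G(s)=\overline{\mathrm{span}}\{(\xi_j(s),a(s)\xi_j(s))\}$, a $\nu$-measurable family of subspaces (generated by countably many measurable sections) with $\mathfrak G(s)\subset G_{a(s)}$. Its direct integral contains every $(\xi_j,A\xi_j)$ and hence the graph-norm closure of their span, which is all of $\int^\oplus G_{a(s)}\,d\nu(s)$; thus $\int^\oplus\mathfrak G(s)\,d\nu(s)=\int^\oplus G_{a(s)}\,d\nu(s)$. Testing this equality against the projections onto $G_{a(s)}\ominus\mathfrak G(s)$ (a measurable family whose direct integral then annihilates a fiberwise total sequence of sections) forces $\mathfrak G(s)=G_{a(s)}$ for $\nu$-a.e.\ $s$, which is exactly condition~(b). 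This produces the desired $\nu$-measurable reduction and closes the forward implication.
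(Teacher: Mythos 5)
Your proof is correct and follows the paper's own argument almost verbatim: the reverse implication via statement~3 of Theorem~\ref{t1}, and the forward one via Lemma~\ref{l1}, Lemma~\ref{l3a}, exactness, and Lemma~\ref{l5} to produce the family $a(s)$, followed by verification of compatibility through a countable graph-dense subset of $V(D_H)$. The only divergence is at the final step, where the paper simply cites Proposition~8 of Sec.~II.1.6 in~\cite{Dixmier} for the passage from totality of the sections $([\xi_j]_\nu, A[\xi_j]_\nu)$ in $\int^\oplus G_{a(s)}\,d\nu(s)$ to fiberwise totality for $\nu$-a.e.\ $s$, whereas you sketch a self-contained proof of that same fact via orthogonal complements of measurable families of subspaces.
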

\begin{proof}
If there is a $\nu$-measurable reduction of $H$ and $D_H$ is left invariant by all elements of $\mathcal X$, then $H$ commutes with all elements of
$\mathcal X$ by statement~3 of Theorem~\ref{t1}. If $H$ commutes with all elements of $\mathcal X$, then $\bar H$ also commutes with them by
Lemma~\ref{l1}. In view of the involutivity of $\mathcal X$, Lemma~\ref{l3a} implies that $\bar H$ commutes with all elements of $\mathcal A(\mathcal
X)$. As the diagonalization $(S,\mathfrak S,\nu,V)$ is exact, we have $V^{-1}\mathcal T_g V\in \mathcal A(\mathcal X)$ for any $\nu$-measurable
$\nu$-essentially bounded function $g$ on $S$. Hence, $V\bar H V^{-1}$ commutes with $\mathcal T_{g}$ for any $\nu$-measurable $\nu$-essentially
bounded $g$, and it follows from Lemma~\ref{l5} that there is a unique (up to $\nu$-equivalence) $\nu$-measurable family $a(s)$ of closed operators
in $\mathfrak S$ such that
\[
V\bar H V^{-1} = \int^{\oplus} a(s)\,d\nu(s).
\]
It suffices to show that $a(s)$ is compatible with $V(D_H)$. Let $\xi_1,\xi_2,\ldots$ be square-integrable sections of $\mathfrak S$ such that
$[\xi_j]_\nu\in V(D_H)$ for all $j$ and the linear span of the sequence $([\xi_j]_\nu, VHV^{-1}[\xi_j]_\nu)$ is dense in the graph of $VHV^{-1}$.
Then this sequence is also dense in the graph of $V\bar H V^{-1}$, and it follows from Proposition~8 of Sec.~II.1.6 in~\cite{Dixmier} that $\xi_j$
satisfy the conditions of Definition~\ref{d5}. The lemma is proved.
\end{proof}

Let $\mathfrak h$ be a separable Hilbert space, $\nu$ be a measure on a measurable space $S$, and the $\nu$-measurable family $I_{\,\mathfrak h,\nu}$
of Hilbert spaces on $S$ be as in Example~\ref{e_fam}. In this case, the direct integral $\int^\oplus I_{\,\mathfrak h,\nu}(s)\,d\nu(s)$ obviously
coincides with $L^2(S,\mathfrak h,\nu)$. For brevity, we shall speak of families of subspaces of $\mathfrak h$ and families of operators in
$\mathfrak h$ when referring to families of subspaces of $I_{\,\mathfrak h,\nu}$ and families of operators in $I_{\,\mathfrak h,\nu}$ respectively.
Similarly, we shall say that $(S,\mathfrak h,\nu,V)$ is a diagonalization for a set of operators $\mathcal X$ if $(S,I_{\,\mathfrak h,\nu},\nu,V)$ is
a diagonalization for $\mathcal X$.

\begin{definition} \label{d2}
Let $\mathfrak h$ be a Hilbert space, $\nu$ be a measure on a set $S$, and $D$ be a linear subspace of $\mathfrak h$. We say that a $\nu$-a.e.
defined family $a(s)$ of operators in $\mathfrak h$ is $\nu$-regular with respect to $D$ if $D_{a(s)}= D$ for $\nu$-a.e. $s$ and there is a countable
subset $Y$ of $D$ such that the linear span of the elements $(\psi,a(s)\psi)$ with $\psi\in Y$ is dense in the graph of $a(s)$ for $\nu$-a.e. $s$.
\end{definition}

For any $\psi\in \mathfrak h$ and $f\in L^2(S,d\nu)$, we define $\Phi_{\psi,f}\in L^2(S,\mathfrak h,d\nu)$ by the relation
\[
\Phi_{\psi,f}(s) = f(s)\psi
\]
for almost all $s\in S$.

We say that a sequence $g_1,g_2,\ldots$ of $\nu$-a.e. defined complex-valued functions on $S$ is $\nu$-nonvanishing if there are a $\nu$-null set $N$
such that $S\setminus N$ is contained in the domains of definition of all $g_j$ and for any $s\in S\setminus N$, the condition $\mathbf g_j(s)\neq 0$
is satisfied for some $j$.

\begin{lemma}\label{l6}
Let $\mathfrak h$ be a Hilbert space, $D$ be a linear subspace of $\mathfrak h$, $\nu$ be a measure on a set $S$, and $a(s)$ be a $\nu$-a.e. defined
family of operators in $\mathfrak h$ which is $\nu$-regular with respect to $D$. Then the following statements hold:
\par\medskip\noindent $1.$ If the map $s\to a(s)\psi$ is $\nu$-measurable for every $\psi\in D$, then the family $a(s)$ is $\nu$-measurable.

\par\medskip\noindent $2.$ Let $\Delta$ be a subspace of $L^2(S,\mathfrak h,d\nu)$. Suppose for any $\psi\in D$, there is a $\nu$-nonvanishing sequence
$g_1,g_2,\ldots$ of square-integrable functions such that the $\nu$-equivalence classes of maps $s\to g_j(s)\psi$ belong to $\Delta$ for all $j$.
Then $a(s)$ is compatible with $\Delta$.
\end{lemma}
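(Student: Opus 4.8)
The plan is to handle the two statements separately, in each case feeding the countable set $Y=\{\psi_1,\psi_2,\ldots\}\subset D$ provided by $\nu$-regularity (Definition~\ref{d2}) into the relevant criterion. By $\nu$-regularity we have $D_{a(s)}=D$ for $\nu$-a.e.\ $s$ and the graph vectors $(\psi_k,a(s)\psi_k)$, $k=1,2,\ldots$, span a dense subspace of $G_{a(s)}$ for $\nu$-a.e.\ $s$. In both statements the substance is merely to promote the $\psi_k$ to admissible sections of the constant family $I_{\,\mathfrak h,\nu}$ of Example~\ref{e_fam}.

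For statement~1, I would take the constant sections $\xi_k(s)=\psi_k$. Each $\xi_k$ is a $\nu$-measurable section of $I_{\,\mathfrak h,\nu}$ (a constant map into $\mathfrak h$ is measurable) and satisfies $\xi_k(s)=\psi_k\in D=D_{a(s)}$ for $\nu$-a.e.\ $s$. The hypothesis that $s\to a(s)\psi$ is $\nu$-measurable for every $\psi\in D$ shows that $s\to(\xi_k(s),a(s)\xi_k(s))=(\psi_k,a(s)\psi_k)$ is a $\nu$-measurable section of $I_{\,\mathfrak h,\nu}\oplus I_{\,\mathfrak h,\nu}$. Since $\nu$-regularity makes the span of these vectors dense in $G_{a(s)}$ for $\nu$-a.e.\ $s$, and $G_{\overline{a(s)}}=\overline{G_{a(s)}}$, the same density holds for the closures. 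Thus $\{\xi_k\}$ witnesses the $\nu$-measurability of the family $a(s)$, proving statement~1.

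For statement~2 I must verify conditions~(a) and~(b) of Definition~\ref{d5}. Condition~(a) amounts to the $D$-valuedness of the sections representing elements of $\Delta$: together with $D_{a(s)}=D$ this gives $\xi(s)\in D_{a(s)}$ for $\nu$-a.e.\ $s$, so I would record this and devote the work to~(b). The construction for~(b) is the heart of the matter. For each $\psi_k\in Y$ the hypothesis furnishes a $\nu$-nonvanishing sequence $g^{\,k}_1,g^{\,k}_2,\ldots$ of square-integrable functions with $[\,s\to g^{\,k}_j(s)\psi_k\,]_\nu\in\Delta$ for all $j$. I set $\eta_{k,j}=\Phi_{\psi_k,\,g^{\,k}_j}$, so that $\eta_{k,j}(s)=g^{\,k}_j(s)\psi_k$; each $\eta_{k,j}$ is a square-integrable $\nu$-measurable section with $[\eta_{k,j}]_\nu\in\Delta$, and, using linearity of $a(s)$ together with $\psi_k\in D_{a(s)}$, its graph vector is $(\eta_{k,j}(s),a(s)\eta_{k,j}(s))=g^{\,k}_j(s)\,(\psi_k,a(s)\psi_k)$. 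The countable family $\{\eta_{k,j}\}_{k,j}$ is the candidate sequence for~(b).

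The hard part will be the $\nu$-a.e.\ bookkeeping needed to conclude that the graph vectors of $\{\eta_{k,j}\}_{k,j}$ span a dense subspace of $G_{a(s)}$. I would discard a single $\nu$-null set $N$ outside of which: $D_{a(s)}=D$; the span of $\{(\psi_k,a(s)\psi_k)\}_k$ is dense in $G_{a(s)}$ (both by $\nu$-regularity); and, for every $k$, some $g^{\,k}_j(s)\neq 0$ — here the exceptional sets of the countably many $\nu$-nonvanishing sequences are each $\nu$-null, so their union, adjoined to the regularity null set, is still $\nu$-null. For $s\notin N$ and each $k$, choosing such a $j$ makes $g^{\,k}_j(s)(\psi_k,a(s)\psi_k)$ a nonzero scalar multiple of $(\psi_k,a(s)\psi_k)$, whence $(\psi_k,a(s)\psi_k)$ lies in the span of the $\eta_{k,j}(s)$-graph vectors. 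Thus this span contains every $(\psi_k,a(s)\psi_k)$, hence their dense span, giving density in $G_{a(s)}$. The crux is precisely that the $\nu$-nonvanishing hypothesis is exactly what lets one recover each building-block graph vector $(\psi_k,a(s)\psi_k)$ from the $\Delta$-admissible sections $g^{\,k}_j(s)\psi_k$; once that is secured, $\nu$-regularity closes the argument.
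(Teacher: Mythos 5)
Your proposal is correct and follows essentially the same route as the paper: for statement~1 the constant sections $\xi_k(s)=\psi_k$ built from the countable set $Y$ of Definition~\ref{d2} give measurable graph vectors with dense span, and for statement~2 the sections $\eta_{k,j}(s)=g^{\,k}_j(s)\psi_k$ recover, via the $\nu$-nonvanishing property, the span of each $(\psi_k,a(s)\psi_k)$ outside a single $\nu$-null set. Your explicit bookkeeping of the countable union of null sets and your remark on condition~(a) are slightly more detailed than the paper's proof, but the argument is the same.
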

\begin{proof}
$\,$
\par \noindent 1.
Let $Y\subset D$ satisfy the conditions of Definition~\ref{d2}. We enumerate the elements of $Y$ as a sequence $\psi_1,\psi_2,\ldots$. For each
$n=1,2,\ldots$, let the map $\xi_n$ on $S$ be defined by the relation $\xi_n(s)=(\psi_n,a(s)\psi_n)$. Clearly, $\xi_n$ are $\nu$-measurable maps from
$S$ to $\mathfrak h\oplus \mathfrak h$ for all $n$, and Definition~\ref{d2} implies that the linear span of $\xi_n(s)$ is dense in $\mathfrak h$ for
$\nu$-a.e. $s$. Hence, $a(s)$ is $\nu$-measurable.

\par \noindent 2. For each $n=1,2,\ldots$, we choose a $\nu$-nonvanishing sequence $g^{(n)}_j$ of square-integrable functions such that the
$\nu$-equivalence classes of all maps $\eta^{(n)}_j(s)=g^{(n)}_j(s)\psi_n$ on $S$ belong to $\Delta$. Then for $\nu$-a.e. $s$, the elements
$(\eta^{(n)}_j(s),a(s)\eta^{(n)}_j(s))$ have the same linear span as $(\psi_n,a(s)\psi_n)$, which is dense in the graph of $a(s)$ by the
$\nu$-regularity of the family $a$. The lemma is proved.
\end{proof}

\section{Measurable families of one-dimensional Schr\"odinger operators}
\label{s_meas}

Let $-\infty \leq a < b \leq \infty$ and $\lambda$ be the Lebesgue measure on $(a,b)$. Let $q$ be a locally $\lambda$-square-integrable real function
on $(a,b)$. Let $\mathcal D$ denote the space of all absolutely continuous functions on $(a,b)$ whose derivative is also absolutely continuous. For
$f\in \mathcal D$, we denote by $l_q f$ the $\lambda$-equivalence class of the function
\[
x\to -f''(x)+ q(x) f(x).
\]
Clearly, $l_q$ is a linear operator from $\mathcal D$ to the space of complex $\lambda$-equivalence classes on $(a,b)$. Let\footnote{Throughout this
section, all equivalence classes will be taken with respect to the restriction $\lambda_{(a,b)}$ of $\lambda$ to $(a,b)$. We shall drop the subscript
and write $[f]$ instead of $[f]_{\lambda_{(a,b)}}$.} $\mathcal D_q = \{f\in \mathcal D : [f] \mbox{ and } l_q f \mbox{ are both in } L^2(a,b)\}$ and
$D_q$ be the space of all equivalence classes $[f]$ with $f\in \mathcal D_q$. Let $D_0$ be the space of all $[f]$, where $f$ belongs to the space
$C_0^\infty(a,b)$ of smooth functions whose support is compact and contained in $(a,b)$. We obviously have $D_0\subset D_q\subset L^2(a,b)$. We
define the operator $L_q^*$ in $L^2(a,b)$ by the relations
\begin{align}
& D_{L_q^*} = D_q,\nonumber\\
& L_q^* [f] = l_q f, \quad f\in \mathcal D_q.\nonumber
\end{align}
The operator $L_q$ in $L^2(a,b)$ is defined as the restriction of $L^*_q$ to $D_0$. Then $L_q$ is a symmetric operator and its adjoint is $L_q^*$
(this justifies our notation). For any $f,g\in \mathcal D$, their Wronskian $W(f,g)$ is an absolutely continuous function on $(a,b)$ defined by the
relation
\begin{equation}\label{wronskian}
W(f,g)(x) = f(x)g'(x) - f'(x)g(x).
\end{equation}
A $\lambda$-measurable function $f$ on $(a,b)$ is said to be left (right) square-integrable if $\int_a^c |f(x)|^2\,dx <\infty$ (resp., $\int_c^b
|f(x)|^2\,dx <\infty$) for any $c\in (a,b)$. If $f, g\in \mathcal D$ are left square-integrable functions such that $l_q f$ and $l_q g$ are also left
square-integrable, then the following limit exist:
\begin{equation}\label{limits}
W(f,g)(a) = \lim_{x\downarrow a} W(f,g)(x).
\end{equation}
Similarly, the limit
\[
W(f,g)(b) = \lim_{x\uparrow b} W(f,g)(x)
\]
exists for any right square-integrable $f,g\in \mathcal D$ such that $l_q f$ and $l_q g$ are also right square-integrable. The closure $\bar L_q$ of
$L_q$ is the restriction of $L^*_q$ to the subspace
\begin{equation}\label{clos}
D_{\bar L_q} = \{ [f] : f\in \mathcal D_q \mbox{ and } W(f,g)(a)=W(f,g)(b)=0 \mbox{ for any } g\in \mathcal D_q\}.
\end{equation}

We now consider the homogeneous equation
\begin{equation}\label{hom_eq}
l_q f =0, \quad f\in \mathcal D.
\end{equation}
There are two possibilities
\begin{enumerate}
  \item All solutions of~(\ref{hom_eq}) are left square-integrable (the limit circle case (lcc) at $a$).
  \item There is a solution of~(\ref{hom_eq}) that is not left square-integrable (the limit point case (lpc) at $a$).
\end{enumerate}
The analogous alternative holds for the right end $b$ of the interval. If $f$ and $g$ are solutions of~(\ref{hom_eq}), then the function $W(f,g)(x)$
does not depend on $x$. It is nonzero if and only if $f$ and $g$ are linearly independent. The lpc holds at $b$ (at $a$) if and only if the condition
\begin{equation}\label{lpc}
W(f,g)(b) =0\quad (\mbox{resp., } W(f,g)(a)=0)
\end{equation}
is satisfied for any $f,g\in \mathcal D_q$.

The description of the self-adjoint extensions of $L_q$ depends on whether we have the limit point or limit circle case at the ends of the interval.
In what follows, we assume that lpc holds at $b$. If lpc holds at $a$, then $L_q$ is essentially self-adjoint. If lcc holds at $a$, then the
self-adjoint extensions of $L_q$ are parametrized by the real nontrivial solutions of~(\ref{hom_eq}) and can be described as follows. Given a real
nontrivial solution $f$ of~(\ref{hom_eq}), let
\[
\mathcal D^f_q = \{ g\in \mathcal D_q : W(f,g)(a)=0 \}
\]
and let $D^f_q$ denote the space of all equivalence classes $[f]$ with $f\in \mathcal D^f_q$. Then the restriction $L^f_q$ of $L^*_q$ to $D^f_q$ is a
self-adjoint extension of $L_q$. Moreover, all self-adjoint extensions of $L_q$ can be obtained in this way. Given two real nontrivial solutions $f$
and $\tilde f$ of~(\ref{hom_eq}), we have $L^f_q=L^{\tilde f}_q$ if and only if $f=C\tilde f$, where $C$ is a real number. In the lcc at $a$, the
deficiency indices are $(1,1)$. This implies, in particular, that the orthogonal complement $G_{T}\ominus G_{\bar L_q}$ of the graph $G_{\bar L_q}$
of $L_q$ in the graph $G_{T}$ of $T$ is one-dimensional for any self-adjoint extension $T$ of $L_q$. Let $f_1$ and $f_2$ be linearly independent
solutions of~(\ref{hom_eq}) and let $g\in \mathcal D_q$. Let $\varphi$ be a $\lambda$-measurable function such that $[\varphi]=l_q g$. Then the
function
\begin{equation}\label{zeta}
\rho_g(x) = \frac{1}{W(f_1,f_2)}\left[ f_1(x)\int_a^x \varphi(x')f_2(x')\,dx' - f_2(x)\int_a^x \varphi(x')f_1(x')\,dx'\right]
\end{equation}
belongs to $\mathcal D$ and satisfies the equation
\[
l_q \rho_g = l_q g.
\]
Hence, the function
\[
\sigma_g = g -\rho_g
\]
is a solution of~(\ref{hom_eq}). It is straightforward to check that $\rho_g$ and $\sigma_g$ do not depend on the choice of the solutions $f_1$ and
$f_2$. In particular, we can choose $f_1$ and $f_2$ to be real. Hence, if $g$ is real, then $\rho_g$ and $\sigma_g$ are real.

\begin{lemma}\label{l_ext}
Suppose lcc holds at $a$. Let $g\in \mathcal D_q$ be a real function and $T$ be a self-adjoint extension of $L_q$. Then $[g]\in D_{T}\setminus
D_{\bar L_q}$ if and only if $\sigma_g$ is nontrivial and $T = L_q^{\sigma_g}$.
\end{lemma}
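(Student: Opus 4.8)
The plan is to reduce both the membership $[g]\in D_T$ and the membership $[g]\in D_{\bar L_q}$ to the single boundary Wronskian $W(\cdot,\cdot)(a)$ and then to express that form through the homogeneous parts $\sigma$. Since lpc holds at $b$, relation~\eqref{lpc} gives $W(f,h)(b)=0$ for all $f,h\in\mathcal D_q$, so~\eqref{clos} shows that $[g]\in D_{\bar L_q}$ if and only if $W(g,h)(a)=0$ for every $h\in\mathcal D_q$, while by definition $[g]\in D^f_q$ if and only if $g\in\mathcal D_q$ and $W(f,g)(a)=0$. By the paragraph preceding the lemma, the given self-adjoint extension has the form $T=L^f_q$ for some real nontrivial solution $f$ of~\eqref{hom_eq}. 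Thus everything will follow from the identity
\[
W(g,h)(a)=W(\sigma_g,\sigma_h),\qquad g,h\in\mathcal D_q,
\]
together with the facts that $\sigma_g$ solves~\eqref{hom_eq} and that the Wronskian of two solutions is a constant that vanishes exactly when they are linearly dependent.

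To prove the identity I would write $g=\rho_g+\sigma_g$, $h=\rho_h+\sigma_h$ and expand $W(g,h)$ bilinearly into the four terms $W(\sigma_g,\sigma_h)$, $W(\sigma_g,\rho_h)$, $W(\rho_g,\sigma_h)$, and $W(\rho_g,\rho_h)$. The first is a Wronskian of two solutions, hence constant and equal to its value at $a$. For the remaining three, differentiating~\eqref{zeta} gives, for \emph{any} solution $u$ of~\eqref{hom_eq},
\[
W(u,\rho_g)=\frac{1}{W(f_1,f_2)}\bigl(I_2\,W(u,f_1)-I_1\,W(u,f_2)\bigr),
\]
where $I_i(x)=\int_a^x\varphi f_i$ with $\varphi$ as in~\eqref{zeta}; an analogous computation expresses $W(\rho_g,\rho_h)$ as a combination of products of such integrals. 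Here lies the main obstacle: one must show these quantities tend to $0$ as $x\downarrow a$. This is precisely where the hypothesis enters, for the lcc at $a$ makes $f_1,f_2$ left square-integrable and $g\in\mathcal D_q$ makes $\varphi=l_qg$ square-integrable, so $I_i(x)\to0$ by the Cauchy--Schwarz inequality; since $W(u,f_1)$ and $W(u,f_2)$ are constants, this forces $W(u,\rho_g)(a)=0$ for every solution $u$, and likewise $W(\rho_g,\rho_h)(a)=0$. Adding the four contributions yields the identity.

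From the identity I would extract two consequences. First, $\sigma_g=0$ makes $W(g,h)(a)=0$ for all $h$, so $[g]\in D_{\bar L_q}$; hence the linear map $\Phi\colon[g]\mapsto\sigma_g$ from $D_q$ to the two-dimensional solution space of~\eqref{hom_eq} satisfies $\ker\Phi\subset D_{\bar L_q}$. Because the lcc at $a$ gives deficiency indices $(1,1)$, we have $\dim(D_q/D_{\bar L_q})=2$; since the image of $\Phi$ lies in a two-dimensional space while the inclusion $\ker\Phi\subset D_{\bar L_q}$ yields $\dim(D_q/\ker\Phi)\geq2$, a dimension count forces $\ker\Phi=D_{\bar L_q}$, so that $[g]\in D_{\bar L_q}$ if and only if $\sigma_g=0$. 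Second, using $W(f,\rho_g)(a)=0$ (the case $u=f$ above) and the constancy of $W(f,\sigma_g)$, the decomposition $W(f,g)=W(f,\rho_g)+W(f,\sigma_g)$ gives $W(f,g)(a)=W(f,\sigma_g)$, whence $[g]\in D^f_q$ if and only if $W(f,\sigma_g)=0$, i.e.\ if and only if $\sigma_g\in\mathrm{span}\,f$.

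It then remains to assemble the equivalence. For the forward implication, if $[g]\in D_T\setminus D_{\bar L_q}$ then $\sigma_g\in\mathrm{span}\,f$ and $\sigma_g\neq0$, so $\sigma_g=cf$ with $c$ a nonzero real number (real because $g$, and hence $\sigma_g$ and $f$, may be taken real); thus $\sigma_g$ is nontrivial and $L^{\sigma_g}_q=L^{cf}_q=L^f_q=T$ by the scaling property $L^f_q=L^{Cf}_q$. For the converse, suppose $\sigma_g$ is nontrivial and $T=L^{\sigma_g}_q$. Expanding $W(\sigma_g,g)=W(\sigma_g,\rho_g)+W(\sigma_g,\sigma_g)$ and using $W(\sigma_g,\rho_g)(a)=0$ from the second paragraph together with $W(\sigma_g,\sigma_g)=0$, we obtain $W(\sigma_g,g)(a)=0$, so $[g]\in D^{\sigma_g}_q=D_T$; and $\sigma_g\neq0$ gives $[g]\notin D_{\bar L_q}$ by the first consequence. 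Hence $[g]\in D_T\setminus D_{\bar L_q}$, which completes the proof.
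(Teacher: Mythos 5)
Your argument is correct and rests on the same mechanism as the paper's proof: everything is reduced to the boundary Wronskian at $a$, the key analytic fact is $W(\rho_g,h)(a)=0$ (because the integrals $\int_a^x\varphi f_i$ vanish as $x\downarrow a$ by Cauchy--Schwarz, using lcc at $a$ for the left square-integrability of $f_1,f_2$ and $g\in\mathcal D_q$ for that of $\varphi$), and from this one identifies membership in $D^f_q$ with the condition $\sigma_g\in\mathrm{span}\,f$ and concludes via the scaling property $L^f_q=L^{Cf}_q$. The one place you genuinely diverge is the equivalence $[g]\in D_{\bar L_q}\Leftrightarrow\sigma_g=0$. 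The paper reads it off directly from the identity $W(g,h)(a)=W(\sigma_g,h)(a)$, which leaves implicit the nontrivial direction, namely that a nonzero solution $\sigma_g$ has nonvanishing Wronskian at $a$ against \emph{some} $h\in\mathcal D_q$ (one would exhibit, say, a cut-off of a linearly independent solution). You instead run a dimension count: $\ker\Phi\subset D_{\bar L_q}$, the codomain of $\Phi\colon[g]\mapsto\sigma_g$ is two-dimensional, and $\dim(D_q/D_{\bar L_q})=2$ by the deficiency indices $(1,1)$, forcing $\ker\Phi=D_{\bar L_q}$. This makes that implication airtight without constructing a test function, at the cost of invoking the deficiency-index count; the paper's route is shorter but leaves that small verification to the reader. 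The assembly of the two implications at the end is the same in both arguments.
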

\begin{proof}
It follows easily from~(\ref{zeta}) that
\begin{multline}
W(\rho_g,h)(x) = \frac{1}{W(f_1,f_2)}\left[ W(f_1,h)(x)\int_a^x \varphi(x')f_2(x')\,dx' -\right.\\\left.- W(f_2,h)(x)\int_a^x
\varphi(x')f_1(x')\,dx'\right]\nonumber
\end{multline}
for any $h\in \mathcal D$, where $[\varphi] = l_q g$ and $f_1,f_2$ are linearly independent solutions of~(\ref{hom_eq}). This implies that
\begin{equation}\label{wr}
W(\rho_g,h)(a) =0
\end{equation}
for any $h\in \mathcal D$ such that $h$ and $l_q h$ are left square-integrable and, therefore,
\begin{equation}\label{wr1}
W(g,h)(a) = W(\sigma_g,h)(a).
\end{equation}
Hence, $\sigma_g$ is trivial if and only if $W(g,h)(a)=0$ for any $h\in\mathcal D_q$. In view of~(\ref{lpc}) (recall that lpc is assumed to hold at
$b$), the latter condition is satisfied if and only if $[g]\in D_{\bar L_q}$.

Suppose now that $[g]\in D_{T}\setminus D_{\bar L_q}$. By the above, $\sigma_g$ is nontrivial. Let $f$ be a real solution of~(\ref{hom_eq}) such that
$T=L^f_q$. Then we have $W(f,g)(a)=0$, and it follows from~(\ref{wr}) that $W(f,\sigma_g)=0$. This means that $\sigma_g = Cf$ for some real $C\neq 0$
and, therefore, $T=L^{\sigma_g}_q$. Conversely, suppose $\sigma_g$ is nontrivial and $T=L^{\sigma_g}_q$. Since $\sigma_g$ is nontrivial, we have
$[g]\notin D_{\bar L_q}$. Setting $h=\sigma_g$ in~(\ref{wr1}), we obtain $W(\sigma_g,g)(a)=0$ and, hence, $[g]\in D_T$. The lemma is proved.
\end{proof}

Given a function $f(s,x)$ of two variables, we denote by $f_{[s]}$ the partial function determined by $f$ for a fixed first argument, i.e., the
domain $D_{f_{[s]}}$ of $f_{[s]}$ consists of all $x$ such that $(s,x)\in D_f$ and
\[
f_{[s]}(x) = f(s,x),\quad x\in D_{f_{[s]}}.
\]

\begin{lemma}\label{l_v}
Let $\nu$ be a measure on a measurable space $S$ and $v$ be a $(\nu\times\lambda)$-measurable real function on $S\times (a,b)$ such that $v_{[s]}$ is
locally square-integrable for $\nu$-a.e. $s$. Then the family $s\to L_{v_{[s]}}$ on $S$ of operators in $L^2(a,b)$ is $\nu$-measurable and
$\nu$-regular with respect to $D_0$.
\end{lemma}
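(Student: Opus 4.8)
The plan is to verify the two assertions separately, establishing $\nu$-regularity first and then deducing $\nu$-measurability from it via Lemma~\ref{l6}. Write $a(s)=L_{v_{[s]}}$; this operator is defined whenever $v_{[s]}$ is locally square-integrable, i.e.\ for $\nu$-a.e.\ $s$, and by the definition of $L_q$ it has the fixed domain $D_{a(s)}=D_0$ and acts by $a(s)[f]=[-f''+v_{[s]}f]$ for $f\in C_0^\infty(a,b)$. Thus the first requirement in Definition~\ref{d2} (that $D_{a(s)}=D_0$ for $\nu$-a.e.\ $s$) holds automatically, and the real content of $\nu$-regularity is to produce a single countable set $Y\subset D_0$ whose graph vectors $(\psi,a(s)\psi)$, $\psi\in Y$, span a dense subspace of the graph $G_{a(s)}$ for $\nu$-a.e.\ $s$.

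For this I would exploit the separability of the space of test functions. Choose compact intervals $K_1\subset K_2\subset\cdots$ with $K_m\subset\mathrm{int}\,K_{m+1}$ and $\bigcup_m K_m=(a,b)$. For each $m$ the Fr\'echet space $C^\infty_{K_m}$ of smooth functions supported in $K_m$ is separable; pick a countable $C^\infty$-dense subset $Y_m\subset C^\infty_{K_m}$ and set $Y=\bigcup_m\{[\psi]:\psi\in Y_m\}$. Given $f\in C_0^\infty(a,b)$, its support lies in some $K_m$, so there are $\psi_n\in Y_m$ with $\psi_n\to f$ and $\psi_n''\to f''$ uniformly and $\mathrm{supp}\,\psi_n\subset K_m$. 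Then $[\psi_n]\to[f]$ in $L^2$, and for $\nu$-a.e.\ $s$
\[
\|a(s)[\psi_n]-a(s)[f]\|\le\|\psi_n''-f''\|_{L^2}+\|\psi_n-f\|_\infty\,\|v_{[s]}\|_{L^2(K_m)}\longrightarrow 0,
\]
since $v_{[s]}\in L^2(K_m)$ by local square-integrability. Hence every graph vector $([f],a(s)[f])$ lies in the closure of $\mathrm{span}\{(\psi,a(s)\psi):\psi\in Y\}$, which proves $\nu$-regularity with respect to $D_0$.

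It then remains, by statement~1 of Lemma~\ref{l6}, to check that $s\mapsto a(s)\psi$ is a $\nu$-measurable $L^2(a,b)$-valued map for each $\psi=[f]\in D_0$. Splitting $a(s)[f]=[-f'']+[v_{[s]}f]$ and discarding the constant term $[-f'']$, this reduces to the measurability of $s\mapsto[v_{[s]}f]$. Put $w(s,x)=v(s,x)f(x)$; this is $(\nu\times\lambda)$-measurable, and $w(s,\cdot)=v_{[s]}f$ is supported in $\mathrm{supp}\,f$ and lies in $L^2(a,b)$ for $\nu$-a.e.\ $s$. For any $g\in L^2(a,b)$, Fubini's theorem shows that $s\mapsto\int_a^b w(s,x)\overline{g(x)}\,dx=\langle[v_{[s]}f],g\rangle$ is $\nu$-measurable, so $s\mapsto[v_{[s]}f]$ is weakly measurable; since $L^2(a,b)$ is separable, the Pettis measurability theorem upgrades this to $\nu$-measurability of the map. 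Lemma~\ref{l6} then yields the $\nu$-measurability of the family $a(s)$.

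The routine part is the construction of $Y$, which is just separability of $C^\infty_{K_m}$ together with the elementary estimate above. The genuine obstacle is the measurability of the $L^2$-valued map $s\mapsto[v_{[s]}f]$, where one must pass from the joint $(\nu\times\lambda)$-measurability of $v$ to Borel measurability of the section. The cleanest route, as sketched, is weak measurability via Fubini followed by Pettis's theorem; the only points needing care are the $\sigma$-finiteness of $\nu$ (guaranteed by the standing assumption) so that Fubini applies, and the fact that $w(s,\cdot)$ genuinely lands in $L^2$ for $\nu$-a.e.\ $s$, which is exactly where the hypothesis that $v_{[s]}$ is locally square-integrable enters.
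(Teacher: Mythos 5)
Your proof is correct and follows essentially the same route as the paper: $\nu$-regularity via separability of the test-function spaces, then $\nu$-measurability of $s\mapsto L_{v_{[s]}}[f]$ for fixed $f$, concluded by statement~1 of Lemma~\ref{l6}. The only difference is that your Fubini-plus-Pettis step re-proves inline what the paper delegates to Lemma~\ref{l_meas} in the appendix (whose proof is the same truncation-and-Fubini argument), and your use of the compact exhaustion $K_m$ is if anything slightly more careful than the paper's single metrizable topology on $C_0^\infty(a,b)$.
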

\begin{proof}
Let $C_0^\infty(a,b)$ be endowed with the topology defined by the norms
\[
\|f\|_{K,n} = \sup_{x\in K,\,0\leq j\leq n} |f^{(j)}(x)|,
\]
where $n=0,1,\ldots$, $K$ is a compact subset of $(a,b)$ and $f^{(j)}$ is the $j$-th derivative of $f$. Then $C_0^\infty(a,b)$ becomes a separable
metrizable space such that $l_{v_{[s]}}$ induce continuous linear maps from $C_0^\infty(a,b)$ to $L^2(a,b)$ for $\nu$-a.e. $s$. Since the map $f\to
[f]$ puts $C_0^\infty(a,b)$ and $D_0$ in a one-to-one correspondence, we can transfer the topology from $C_0^\infty(a,b)$ to $D_0$. This makes $D_0$
a separable metrizable space such that $L_{v_{[s]}}$ are continuous maps from $D_0$ to $L^2(a,b)$ for $\nu$-a.e. $s$. It follows that $D=D_0$,
$a(s)=L_{v_{[s]}}$, and an arbitrary countable dense subset $Y$ of $D_0$ satisfy the conditions of Definition~\ref{d2} and, therefore, the family
$L_{v_{[s]}}$ is $\nu$-regular with respect to $D_0$. If $f\in C_0^\infty(a,b)$, then the function $(s,x)\to -f''(x) + v(s,x)f(x)$ on $S\times (a,b)$
is $(\nu\times\lambda)$-measurable. Lemma~\ref{l_meas} hence implies that the map $s\to L_{v_{[s]}}[f]$ is $\nu$-measurable. The $\nu$-measurability
of $L_{v_{[s]}}$ now follows from statement~1 of Lemma~\ref{l6}. The lemma is proved.
\end{proof}

Let $\nu$ be a measure on a measurable space $S$. Given a $(\nu\times\lambda)$-measurable real function $v$ on $S\times (a,b)$ such that $v_{[s]}$ is
locally square-integrable for $\nu$-a.e. $s$, we can consider the homogeneous equation
\begin{equation}\label{main_eq}
l_{v_{[s]}}f_{[s]} =0.
\end{equation}
Let $A$ be a $\nu$-measurable subset of $S$. A $(\nu\times\lambda)$-a.e. defined function $f$ on $A\times (a,b)$ will be called a solution
of~(\ref{main_eq}) on $A$ if $f_{[s]}$ belongs to $\mathcal D$ and satisfies~(\ref{main_eq}) for $\nu$-a.e. $s\in A$. A solution $f$ on $A$ is called
nontrivial if $f_{[s]}\neq 0$ for $\nu$-a.e. $s\in A$. Two solutions $f_1$ and $f_2$ on $A$ are called linearly independent if $(f_1)_{[s]}$ and
$(f_2)_{[s]}$ are linearly independent for $\nu$-a.e. $s\in A$.

\begin{lemma}\label{l_sol}
Let $\nu$ be a measure on a measurable space $S$ and $v$ be as in Lemma~$\ref{l_v}$. Let $f$ be a solution of~$(\ref{main_eq})$ on $S$. Suppose there
is $x_0\in (a,b)$ such that the functions $s\to f_{[s]}(x_0)$ and $s\to f'_{[s]}(x_0)$ are $\nu$-measurable. Then $f$ is
$(\nu\times\lambda)$-measurable.
\end{lemma}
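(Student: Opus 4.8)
The plan is to exhibit $f$ as the pointwise limit of Picard iterates built from the integral form of~$(\ref{main_eq})$, each of which is manifestly jointly measurable. First I would reduce to a compact subinterval: choose compact intervals $[c_n,d_n]\subset(a,b)$ with $c_n<x_0<d_n$ that increase to $(a,b)$. Since a function on $S\times(a,b)$ is $(\nu\times\lambda)$-measurable as soon as its restrictions to each $S\times[c_n,d_n]$ are, it suffices to fix one such interval $[c,d]$ and show that $f$ restricted to $S\times[c,d]$ is jointly measurable. For $\nu$-a.e.\ $s$ we have $v_{[s]}\in L^2$ on $[c,d]$, hence $\kappa(s):=\int_c^d|v(s,t)|\,dt<\infty$ by Cauchy--Schwarz, and $f_{[s]}\in\mathcal D$ with $f''_{[s]}=v_{[s]}f_{[s]}$; writing $y(s,x)=f_{[s]}(x)$ and $z(s,x)=f'_{[s]}(x)$ and using the absolute continuity of $f_{[s]}$ and $f'_{[s]}$, this yields the Volterra system
\begin{align}
y(s,x)&=f_{[s]}(x_0)+\int_{x_0}^x z(s,t)\,dt,\nonumber\\
z(s,x)&=f'_{[s]}(x_0)+\int_{x_0}^x v(s,t)\,y(s,t)\,dt.\nonumber
\end{align}

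I would solve this system by the iteration $y_0(s,x)=f_{[s]}(x_0)$, $z_0(s,x)=f'_{[s]}(x_0)$, and
\[
y_{n+1}(s,x)=f_{[s]}(x_0)+\int_{x_0}^x z_n(s,t)\,dt,\qquad z_{n+1}(s,x)=f'_{[s]}(x_0)+\int_{x_0}^x v(s,t)\,y_n(s,t)\,dt.
\]
The joint measurability of the iterates rests on the elementary fact that if $\varphi$ is jointly measurable on $S\times[c,d]$ with $\int_c^d|\varphi(s,t)|\,dt<\infty$ for $\nu$-a.e.\ $s$, then $(s,x)\mapsto\int_{x_0}^x\varphi(s,t)\,dt$ is jointly measurable. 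Indeed, for fixed $x$ this integral equals $\int_c^d\mathbf 1_{[x_0,x]}(t)\varphi(s,t)\,dt$, which is $\nu$-measurable in $s$ by Fubini (both $\nu$ and $\lambda$ being $\sigma$-finite), while for fixed $s$ it is absolutely continuous, hence continuous, in $x$; and a function on $S\times[c,d]$ that is measurable in $s$ for each $x$ and continuous in $x$ for each $s$ is jointly measurable over the separable metric space $[c,d]$. Since $f_{[s]}(x_0)$ and $f'_{[s]}(x_0)$ are $\nu$-measurable by hypothesis and $v$ is $(\nu\times\lambda)$-measurable, an induction using this fact shows that every $y_n$ and $z_n$ is jointly measurable on $S\times[c,d]$.

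It then remains to identify the limit of the iterates. Set $w_n(s,x)=|y_n(s,x)-y_{n-1}(s,x)|+|z_n(s,x)-z_{n-1}(s,x)|$. For $\nu$-a.e.\ fixed $s$ the recursions give $w_{n+1}(s,x)\le 2\int_{x_0}^x\bigl(1+|v(s,t)|\bigr)w_n(s,t)\,dt$ (with the obvious symmetric bound for $x\le x_0$), and iterating this Gronwall-type inequality over the simplex yields a factorial bound in $n$ governed by the finite number $(d-c)+\kappa(s)$. Hence $\sum_n w_n$ converges uniformly and the iterates converge uniformly in $x\in[c,d]$ to the unique solution of the Volterra system with data $\bigl(f_{[s]}(x_0),f'_{[s]}(x_0)\bigr)$ at $x_0$; by uniqueness this solution is $(f_{[s]},f'_{[s]})$. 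Therefore $f(s,x)=\lim_n y_n(s,x)$ for $\nu$-a.e.\ $s$ and all $x\in[c,d]$, and as a pointwise limit of jointly measurable functions $f$ is $(\nu\times\lambda)$-measurable on $S\times[c,d]$. Reassembling the intervals $[c_n,d_n]$ finishes the argument.

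The step I expect to be the main obstacle is the joint-measurability lemma for the partial integral: the Fubini reduction to $s$-measurability for fixed $x$ is routine, but combining it with continuity in $x$ to obtain genuine joint measurability is the delicate point, and one must also keep careful track of the single $\nu$-null set off which $v_{[s]}$ is locally square-integrable, $f_{[s]}$ solves the equation, and $\kappa(s)$ is finite. Everything else is the classical Picard contraction argument applied pointwise in $s$.
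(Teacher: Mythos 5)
Your proof is correct, and the core strategy coincides with the paper's: both exhibit $f$ as a pointwise limit of jointly measurable Picard iterates for the integral form of~(\ref{main_eq}), with the joint measurability of each iterate resting on the fact that a partial integral $(s,x)\mapsto\int_{x_0}^x\varphi(s,t)\,dt$ of a jointly measurable, fibrewise integrable $\varphi$ is jointly measurable (the paper isolates this as Lemma~\ref{l_meas1}, proved by the same measurable-in-$s$/continuous-in-$x$ approximation you sketch). The execution differs in two respects that are worth noting. First, the paper iterates the second-order equation directly as a double integral, whereas you pass to the first-order Volterra system for $(f_{[s]},f'_{[s]})$; this is cosmetic. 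Second, and more substantively, the paper obtains convergence only on a short interval $[x_0,x_1]$ with $C(x_1-x_0)<1$ under a uniform bound $\int_{x_0}^c|v(s,x)|\,dx<C$, and must then both step along the interval finitely many times (re-verifying $\nu$-measurability of the new initial data $s\mapsto f_{[s]}(x_1)$, $s\mapsto f'_{[s]}(x_1)$ at each step) and afterwards remove the uniformity assumption by decomposing $S$ into the sets $A_N=\{s:\int_{x_0}^c|v(s,x)|\,dx<N\}$. Your factorial (Gronwall-type) bound, governed by the $s$-dependent but $\nu$-a.e.\ finite quantity $(d-c)+\kappa(s)$, yields uniform convergence on the whole compact interval in one stroke and makes both the stepping and the $A_N$ decomposition unnecessary; the price is that you must separately identify the limit with $f_{[s]}$ via uniqueness for the Volterra system, a point the paper gets for free since its estimate is on $|f-f_n|$ directly. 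Both arguments are complete; yours is somewhat cleaner in the bookkeeping.
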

\begin{proof}
We shall show that $f$ is $(\nu\times\lambda)$-measurable on $S\times [x_0,b)$ by proving that it is $(\nu\times\lambda)$-measurable on $S\times
[x_0,c]$ for any $c\in (x_0,b)$. We first assume that there is $0<C<\infty$ such that
\begin{equation}\label{bound}
\int_{x_0}^c |v(s,x)|\,dx < C
\end{equation}
for $\nu$-a.e. $s$. We define the functions $f_0,f_1,\ldots$ on $S\times (a,b)$ by the relations
\begin{align}
& f_0(s,x) = f_{[s]}(x_0) + f'_{[s]}(x_0)(x-y),\nonumber \\
& f_n(s,x) = f_0(s,x) + \int_{x_0}^x dx'\int_{x_0}^{x'} v(s,\xi) f_{n-1}(s,\xi)\,d\xi,\quad n=1,2,\ldots. \label{eqn}
\end{align}
By the hypothesis, the function $f_0$ is $(\nu\times\lambda)$-measurable, and it follows from Lemma~\ref{l_meas1} that $f_n$ are
$(\nu\times\lambda)$-measurable for all $n$. Let $S'\subset S$ be a measurable set with a $\nu$-null complement in $S$ such that $v_{[s]}$ is locally
square-integrable, (\ref{bound}) holds, and $f_{[s]}$ is a solution of~(\ref{main_eq}) for all $s\in S'$. Since $f_{[s]}$ satisfies~(\ref{main_eq}),
we have
\begin{equation}\label{int_eq}
f(s,x) = f_0(s,x) + \int_{x_0}^x dx'\int_{x_0}^{x'} v(s,\xi) f(s,\xi)\,d\xi
\end{equation}
for any $s\in S'$ and any $x\in (a,b)$. Let $x_1\in (x_0,c]$ be such that $x_1-x_0 < 1/C$ and let
\[
M_n(s) = \sup_{x_0\leq x\leq x_1} |f(s,x)-f_n(s,x)|,\quad n=0,1,\ldots.
\]
It follows from~(\ref{eqn}) and~(\ref{int_eq}) that
\[
M_n(s) \leq C(x_1-x_0)M_{n-1}(s),\quad n=1,2,\ldots,
\]
for any $s\in S'$. We hence have
\[
M_n(s) \leq [C(x_1-x_0)]^n M_0(s),\quad s\in S'.
\]
Since $C(x_1-x_0)<1$, this means that $f_n$ converge to $f$ pointwise on $S'\times [x_0,x_1]$ and, therefore, $f$ is $(\nu\times\lambda)$-measurable
on $S\times [x_0,x_1]$. Moreover, it follows from~(\ref{eqn}) and~(\ref{int_eq}) that
\[
\sup_{x_0\leq x\leq x_1} |f'_{[s]}(x)-(f_n)'_{[s]}(x)|\leq C M_n(s)
\]
for any $s\in S'$. Hence, $(f_n)'_{[s]}(x)$ converge to $f'_{[s]}(x)$ for any $s\in S'$ and $x\in [x_0,x_1]$. In particular, the functions $s\to
(f_n)_{[s]}(x_1)$ and $s\to (f_n)'_{[s]}(x_1)$ converge $\nu$-a.e. to the functions $s\to f_{[s]}(x_1)$ and $s\to f'_{[s]}(x_1)$ respectively. This
implies that the latter two functions are $\nu$ measurable because the functions $s\to (f_n)_{[s]}(x_1)$ and $s\to (f_n)'_{[s]}(x_1)$ are
$\nu$-measurable by the Fubini theorem. We therefore can repeat the above arguments replacing $x_0$ with $x_1$ and choosing some $x_2\in (x_1,c]$
such that $C(x_2-x_1)<1$. As a result, we shall prove that $f$ is $(\nu\times\lambda)$-measurable on $S\times [x_0,x_2]$. Obviously, after a finite
number of such steps we shall establish the $(\nu\times\lambda)$-measurability of $f$ on $S\times [x_0,c]$.

In the general case (when (\ref{bound}) does not necessarily hold), we consider, for any $N>0$, the set $A_N$ of all $s\in S$ such that $\int_{x_0}^c
|v(s,x)|\,dx < N$. By the Fubini theorem $A_N$ is $\nu$-measurable. The above arguments show that $f$ is $(\nu\times\lambda)$-measurable on
$A_N\times [x_0,c]$. Since $S\setminus\bigcup_{N=1}^\infty A_N$ is a $\nu$-null set, we conclude that $f$ is $(\nu\times\lambda)$-measurable on
$S\times [x_0,c]$. Hence $f$ is $(\nu\times\lambda)$-measurable on $S\times [x_0,b)$.

Repeating the same proof with obvious changes, we make sure that $f$ is $(\nu\times\lambda)$-measurable on $S\times (a,x_0]$ and, hence on $S\times
(a,b)$. The lemma is proved.
\end{proof}

\begin{corollary}\label{cor_sol}
Let $\nu$, $S$, and $v$ be as in Lemma~$\ref{l_sol}$. Then there are $(\nu\times\lambda)$-measurable real solutions $f_1$ and $f_2$
of~$(\ref{main_eq})$ on $S$ such that $(f_1)_{[s]}$ and $(f_2)_{[s]}$ are linearly independent elements of $\mathcal D$ for $\nu$-a.e. $s$.
\end{corollary}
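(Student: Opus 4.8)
The plan is to produce $f_1$ and $f_2$ by solving, for each fixed $s$, the initial value problem for equation~(\ref{main_eq}) with two complementary choices of initial data at a common base point, and then to hand the question of joint measurability to Lemma~\ref{l_sol}.

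First I would fix a point $x_0\in(a,b)$ and let $S'$ be a measurable subset of $S$ with $\nu$-null complement such that $v_{[s]}$ is locally square-integrable for every $s\in S'$. For such $s$ the coefficient $v_{[s]}$ is locally integrable, so the standard existence and uniqueness theory for the second-order linear equation $-u''+v_{[s]}u=0$ yields a unique solution $(f_1)_{[s]}\in\mathcal D$ with $(f_1)_{[s]}(x_0)=1$, $(f_1)'_{[s]}(x_0)=0$, and a unique solution $(f_2)_{[s]}\in\mathcal D$ with $(f_2)_{[s]}(x_0)=0$, $(f_2)'_{[s]}(x_0)=1$. Since $v$ is real-valued and the initial data are real, both solutions are real. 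Setting $f_i(s,x)=(f_i)_{[s]}(x)$ for $s\in S'$ defines $(\nu\times\lambda)$-a.e.\ defined functions $f_1,f_2$ on $S\times(a,b)$ that are solutions of~(\ref{main_eq}) on $S$ in the sense of the definition preceding Lemma~\ref{l_sol}.

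Next I would check the hypotheses of Lemma~\ref{l_sol} for each $f_i$. By construction the functions $s\to(f_i)_{[s]}(x_0)$ and $s\to(f_i)'_{[s]}(x_0)$ are constant (equal to $1$ and $0$ for $i=1$, and to $0$ and $1$ for $i=2$), hence $\nu$-measurable. Lemma~\ref{l_sol} therefore gives that $f_1$ and $f_2$ are $(\nu\times\lambda)$-measurable. For linear independence I would invoke the Wronskian: with the convention~(\ref{wronskian}) we have $W((f_1)_{[s]},(f_2)_{[s]})(x_0)=(f_1)_{[s]}(x_0)(f_2)'_{[s]}(x_0)-(f_1)'_{[s]}(x_0)(f_2)_{[s]}(x_0)=1$ for every $s\in S'$. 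As recalled in the text, the Wronskian of two solutions of~(\ref{hom_eq}) is independent of $x$ and vanishes exactly when the solutions are linearly dependent, so $(f_1)_{[s]}$ and $(f_2)_{[s]}$ are linearly independent for every $s\in S'$, hence for $\nu$-a.e.\ $s$.

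The main point to watch is that the pointwise-in-$s$ construction actually produces a genuine solution on $S$, i.e.\ an honestly $(\nu\times\lambda)$-a.e.\ defined function; but because the initial data are chosen constant in $s$, this is immediate, and the measurability is precisely what Lemma~\ref{l_sol} supplies. Thus I expect no serious obstacle: the corollary is essentially a direct application of Lemma~\ref{l_sol}, with all the analytic work already absorbed into that lemma, rather than requiring any new estimates.
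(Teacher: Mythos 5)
Your proof is correct and follows essentially the same route as the paper: fix $x_0\in(a,b)$, solve the initial value problem with the constant-in-$s$ data $(1,0)$ and $(0,1)$ (the paper cites Theorem~2 of Chapter~V, Sec.~16 in Naimark for existence), and apply Lemma~\ref{l_sol} to get joint measurability. The Wronskian computation you spell out is exactly the linear-independence argument the paper treats as obvious.
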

\begin{proof}
Let $S'\subset S$ be a set with a $\nu$-null complement in $S$ such that $v_{[s]}$ is locally square integrable for all $s\in S'$. Choose $x_0\in
(a,b)$. By Theorem~2 of Chapter~V, Sec.~16 in~\cite{Naimark}, there are real functions $f_1$ and $f_2$ on $S'\times (a,b)$ such that (\ref{main_eq})
holds and the conditions
\[
(f_1)_{[s]}(x_0) = (f_2)'_{[s]}(x_0)=1,\quad (f_1)'_{[s]}(x_0) = (f_2)_{[s]}(x_0)=0
\]
are satisfied for all $s\in S'$. Obviously, $(f_1)_{[s]}$ and $(f_2)_{[s]}$ are linearly independent, and Lemma~\ref{l_sol} implies that $f_1$ and
$f_2$ are $(\nu\times\lambda)$-measurable. The corollary is proved.
\end{proof}

\begin{lemma}\label{l_ext1}
Let $\nu$ be a measure on a measurable space $S$ and let $v$ be a $(\nu\times\lambda)$-measurable real function on $S\times (a,b)$ such that
$v_{[s]}$ is locally square-integrable, lcc holds for $l_{v_{[s]}}$ at $a$, and lpc holds for $l_{v_{[s]}}$ at $b$ for $\nu$-a.e. $s$. If $f$ is a
nontrivial real $(\nu\times\lambda)$-measurable solution of~$(\ref{main_eq})$ on $S$, then $L^{f_{[s]}}_{v_{[s]}}$ is a $\nu$-measurable family of
self-adjoint extensions of $L_{v_{[s]}}$. If $H(s)$ is a $\nu$-measurable family of self-adjoint extensions of $L_{v_{[s]}}$, then there is a
nontrivial real $(\nu\times\lambda)$-measurable solution $f$ of~(\ref{main_eq}) on $S$ such that $H(s) = L^{f_{[s]}}_{v_{[s]}}$ for $\nu$-a.e. $s$.
\end{lemma}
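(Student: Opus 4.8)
The plan is to prove the two assertions separately, in both cases reducing everything to the section-wise parametrization of self-adjoint extensions by real nontrivial solutions recalled in Lemma~\ref{l_ext}, together with the fact that, in the lcc-at-$a$/lpc-at-$b$ situation, the deficiency indices are $(1,1)$, so that the graph of any self-adjoint extension of $L_{v_{[s]}}$ is obtained from $G_{\bar L_{v_{[s]}}}$ by adjoining a single vector.

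For the first direction, given a nontrivial real $(\nu\times\lambda)$-measurable solution $f$, I would exhibit a countable system of $\nu$-measurable sections whose graph values are dense in $G_{L^{f_{[s]}}_{v_{[s]}}}$ for $\nu$-a.e.\ $s$. For the minimal part I use Lemma~\ref{l_v}: the family $L_{v_{[s]}}$ is $\nu$-regular with respect to $D_0$, so a fixed countable dense $Y\subset D_0$ supplies constant sections $\psi\in Y$ whose graph values are dense in $G_{L_{v_{[s]}}}$, hence in $G_{\bar L_{v_{[s]}}}$ after closure. For the missing one-dimensional direction I fix $\phi\in C^\infty(a,b)$ with $\phi\equiv 1$ near $a$ and $\phi\equiv 0$ near $b$ and set $g(s,\cdot)=\phi\, f_{[s]}$. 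A direct check shows $g_{[s]}\in\mathcal D_{v_{[s]}}$ (it equals $f_{[s]}$ near $a$, hence is left square-integrable by lcc, and vanishes near $b$), that $l_{v_{[s]}}g_{[s]}=-\phi''f_{[s]}-2\phi'f'_{[s]}$ is supported in a compact subinterval of $(a,b)$, and that $W(f_{[s]},g_{[s]})(a)=0$, so $[g_{[s]}]\in D_{L^{f_{[s]}}_{v_{[s]}}}$. Since $l_{v_{[s]}}g_{[s]}$ is supported away from $a$, the function $\rho_{g_{[s]}}$ from (\ref{zeta}) vanishes near $a$, whence $\sigma_{g_{[s]}}=g_{[s]}=f_{[s]}$ near $a$ and therefore $\sigma_{g_{[s]}}=f_{[s]}$ everywhere; by Lemma~\ref{l_ext}, $[g_{[s]}]\in D_{L^{f_{[s]}}_{v_{[s]}}}\setminus D_{\bar L_{v_{[s]}}}$. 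The maps $s\mapsto[g_{[s]}]$ and $s\mapsto[l_{v_{[s]}}g_{[s]}]$ are $\nu$-measurable because $(s,x)\mapsto\phi(x)f(s,x)$ and $(s,x)\mapsto-\phi''(x)f(s,x)-2\phi'(x)f'(s,x)$ are $(\nu\times\lambda)$-measurable (with $f'$ $(\nu\times\lambda)$-measurable as an $x$-pointwise limit of difference quotients of $f$) and Lemma~\ref{l_meas} applies. Adjoining this extra graph vector to the span of the $Y$-sections fills the one missing dimension, so the family is $\nu$-measurable; each $L^{f_{[s]}}_{v_{[s]}}$ is self-adjoint by the theory recalled above.

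For the second direction I would invert this construction by a measurable selection of the boundary parameter. Fix, via Corollary~\ref{cor_sol}, $(\nu\times\lambda)$-measurable real solutions $f_1,f_2$ with $f_1{}_{[s]},f_2{}_{[s]}$ linearly independent, and for $\theta\in[0,\pi)$ put $f^\theta_{[s]}=\cos\theta\,f_1{}_{[s]}+\sin\theta\,f_2{}_{[s]}$ and $g^\theta(s,\cdot)=\phi\,f^\theta_{[s]}$ as above, so that $\sigma_{g^\theta_{[s]}}=f^\theta_{[s]}$. By Lemma~\ref{l_ext}, the membership $[g^\theta_{[s]}]\in D_{H(s)}$ holds if and only if $H(s)=L^{f^\theta_{[s]}}_{v_{[s]}}$; and since every self-adjoint extension of $L_{v_{[s]}}$ equals $L^{h}_{v_{[s]}}$ for a real nontrivial solution $h$ unique up to scale, for $\nu$-a.e.\ $s$ there is exactly one $\theta(s)\in[0,\pi)$ for which this membership holds. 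Using the $\nu$-measurable sections $\eta_n$ spanning $G_{H(s)}$ (provided by the $\nu$-measurability of $H$) and the identity $G_{H(s)}^\perp=\{(H(s)\eta,-\eta):\eta\in D_{H(s)}\}$ valid for self-adjoint $H(s)$, the membership test becomes $F(\theta,s):=\sum_n 2^{-n}(1+\|\zeta_n(s)\|^2)^{-1}\bigl|\langle([g^\theta_{[s]}],[l_{v_{[s]}}g^\theta_{[s]}]),U\zeta_n(s)\rangle\bigr|^2=0$, where $\zeta_n(s)=(\eta_n(s),H(s)\eta_n(s))$ and $U(\xi,\eta)=(\eta,-\xi)$. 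This $F$ is $\nu$-measurable in $s$ for each fixed $\theta$, continuous in $\theta$ for each fixed $s$, and vanishes exactly at the unique $\theta(s)$. I would then recover $\theta(s)$ measurably by minimizing $F(\cdot,s)$ over finer and finer finite grids in $[0,\pi)$: the grid minimizers are $\nu$-measurable, and by continuity of $F$ in $\theta$ together with uniqueness of its zero they converge to $\theta(s)$. Setting $f(s,x)=\cos\theta(s)\,f_1(s,x)+\sin\theta(s)\,f_2(s,x)$ then yields a nontrivial real $(\nu\times\lambda)$-measurable solution with $H(s)=L^{f_{[s]}}_{v_{[s]}}$ for $\nu$-a.e.\ $s$.

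The main obstacle is the second direction: converting the purely Hilbert-space datum of a $\nu$-measurable family of extensions into a pointwise, $(\nu\times\lambda)$-measurable solution. The difficulty concentrates in selecting the boundary parameter $\theta(s)$ measurably and in passing from $L^2$-equivalence classes back to honest functions; the grid-minimization selection, justified by the uniqueness of the zero of the Carath\'eodory function $F$, is the crux, whereas the first direction is essentially the verification that the explicit cutoff section $\phi f$ supplies the one missing graph dimension.
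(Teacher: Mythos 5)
Your first direction is essentially the paper's own argument: the paper likewise fixes a smooth cutoff $\tau$ equal to $1$ near $a$ and $0$ near $b$, forms $g=\tau f$ and $h=-\tau''f-2\tau'f'$, checks via Lemma~\ref{l_meas} that $s\mapsto[g_{[s]}]$ and $s\mapsto[h_{[s]}]$ are $\nu$-measurable, identifies $\sigma_{g_{[s]}}=f_{[s]}$ so that Lemma~\ref{l_ext} places $[g_{[s]}]$ in $D_{L^{f_{[s]}}_{v_{[s]}}}\setminus D_{\bar L_{v_{[s]}}}$, and adjoins the resulting graph section to a measurable sequence spanning $G_{\bar L_{v_{[s]}}}$ supplied by Lemma~\ref{l_v}, using that $G_{L^{f_{[s]}}_{v_{[s]}}}\ominus G_{\bar L_{v_{[s]}}}$ is one-dimensional.

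The converse is where you genuinely diverge. The paper never selects a boundary parameter: it notes that $G_{H(s)}\ominus G_{\bar L_{v_{[s]}}}$ is itself a one-dimensional $\nu$-measurable family of subspaces of $L^2(a,b)\oplus L^2(a,b)$, picks a nonvanishing $\nu$-measurable section $(\xi(s),\eta(s))$ of it, realizes $\xi(s)=[g_{[s]}]$ by a $(\nu\times\lambda)$-measurable function $g$ via Lemma~\ref{l_meas}, passes to a real representative $\tilde g$, and then obtains $f=\sigma_{\tilde g}$ from the explicit variation-of-parameters formula, whose joint measurability follows from Corollary~\ref{cor_sol} and Lemma~\ref{l_meas1}. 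That route yields a closed-form $f$ and avoids any selection theorem. Your route---parametrizing candidate extensions by $\theta\in[0,\pi)$ and selecting the unique zero of the Carath\'eodory function $F$ by grid minimization---is viable and self-contained (the identity $G_{H(s)}^\perp=\{(H(s)\eta,-\eta):\eta\in D_{H(s)}\}$, the equivalence of $F(\theta,s)=0$ with $H(s)=L^{f^\theta_{[s]}}_{v_{[s]}}$ via Lemma~\ref{l_ext}, and the uniqueness of $\theta(s)$ are all correct), but it has one small wrinkle as stated: $[0,\pi)$ is not compact, and $F(\cdot,s)$ extends continuously to $[0,\pi]$ with $F(\pi,s)=F(0,s)$, so when $\theta(s)=0$ the grid minimizers may accumulate at $\pi$ and fail to converge in $[0,\pi)$. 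The fix is routine---run the compactness argument on the circle $\R/\pi\Z$ (or on the compact arc $\Sigma$ the paper itself uses after this lemma) and then pass to representatives in $[0,\pi)$, a Borel operation---but the assertion that the grid minimizers ``converge to $\theta(s)$'' needs this extra step to be justified.
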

\begin{proof}
By Lemma~\ref{l_v}, the the operators $\bar L_{v_{[s]}}$ constitute a $\nu$-measurable family on $S$. This means that there is a sequence
$\zeta_1,\zeta_2,\ldots$ of $\nu$-measurable maps from $S$ to $L^2(a,b)\oplus L^2(a,b)$ such that the linear span of $\zeta_1(s),\zeta_2(s),\ldots$
is dense in the graph $G_{\bar L_{v_{[s]}}}$ for $\nu$-a.e. $s$.

Let $f$ be a real nontrivial $(\nu\times\lambda)$-measurable solution of~(\ref{main_eq}) on $S$ and $\mathcal L_f(s) = L^{f_{[s]}}_{v_{[s]}}$ for all
$s\in S$. Let $\tau$ be a smooth function on $(a,b)$ that is equal to unity in a neighborhood of $a$ and vanishes in a neighborhood of $b$. Let the
functions $g$ and $h$ on $S\times (a,b)$ be defined by the relations
\begin{align}
& g(s,x) = \tau(x) f(s,x) \nonumber\\
& h(s,x) = -\tau''(x) f(s,x) - 2\tau'(x) f'_{[s]}(x).\nonumber
\end{align}
Clearly, both $g$ and $h$ are $(\nu\times\lambda)$-measurable, $g_{[s]}\in \mathcal D_{v_{[s]}}$ for $\nu$-a.e. $s$, and
\begin{equation}\label{g_s}
l_{v_{[s]}} g_{[s]} = [h_{[s]}]
\end{equation}
for $\nu$-a.e. $s$. Let the maps $\xi$ and $\eta$ from $S$ to $L^2(a,b)$ be defined by the relations
\[
\xi(s) = [g_{[s]}],\quad \eta(s) = [h_{[s]}].
\]
By Lemma~\ref{l_meas}, $\xi$ and $\eta$ are $\nu$-measurable. Since $\sigma_{g_{[s]}} = f_{[s]}$ for $\nu$-a.e. $s$, it follows from
Lemma~\ref{l_ext} that $\xi(s)\in D_{\mathcal L_f(s)}\setminus D_{\bar L_{v_{[s]}}}$. In view of~(\ref{g_s}), this implies that $\mathcal
L_f(s)\xi(s) = \eta(s)$ for $\nu$-a.e. $s$. Hence, the map $\zeta\colon s\to (\xi(s),\eta(s))$ from $S$ to $L^2(a,b)\oplus L^2(a,b)$ is
$\nu$-measurable and $\zeta(s)\in G_{\mathcal L_f(s)}\setminus G_{\bar L_{v_{[s]}}}$ for $\nu$-a.e. $s$. Since $G_{\mathcal L_f(s)}\ominus G_{\bar
L_{v_{[s]}}}$ is one-dimensional for $\nu$-a.e. $s$, this implies that the linear span of the sequence $\zeta(s),\zeta_1(s),\zeta_2(s),\ldots$ is
dense in $G_{\mathcal L_f(s)}$ for $\nu$-a.e. $s$. This means that $\mathcal L_f(s)$ constitute a measurable family of operators on $S$.

Conversely, let $H(s)$ be a $\nu$-measurable family of self-adjoint extensions of $L_{v_{[s]}}$. Then both $G_{H(s)}$ and $G_{\bar L_{v_{[s]}}}$ form
$\nu$-measurable families of subspaces of $L^2(a,b)\oplus L^2(a,b)$ and, therefore, $G_{H(s)}\ominus G_{\bar L_{v_{[s]}}}$ is also a $\nu$-measurable
family of subspaces of $L^2(a,b)\oplus L^2(a,b)$. Since $G_{H(s)}\ominus G_{\bar L_{v_{[s]}}}$ is one-dimensional for $\nu$-a.e. $s$, there is a
$\nu$-measurable map $\zeta(s)=(\xi(s),\eta(s))$ from $S$ to $G_{H(s)}\ominus G_{\bar L_{v_{[s]}}}$ such that $\zeta(s)\neq 0$ for $\nu$-a.e. $s$. We
obviously have
\begin{equation}\label{member}
\xi(s)\in D_{H(s)}\setminus D_{\bar L_{v_{[s]}}}
\end{equation}
for $\nu$-a.e. $s$. Let $g$ be a function on $S\times (a,b)$ such that $g_{[s]}\in \mathcal D_{v_{[s]}}$ and $[g_{[s]}] = \xi(s)$ for $\nu$-a.e. $s$.
Since $\xi$ is $\nu$-measurable, Lemma~\ref{l_meas} implies that $g$ is $(\nu\times\lambda)$-measurable. Let $Q$ be the set of all $s\in S$ such that
$g_{[s]}$ has a nonzero real part. We define the function $\tilde g$ on $S\times (a,b)$ by the relation
\[
\tilde g(s,x) = \left\{
\begin{matrix}
\frac{g(s,x)+\overline{g(s,x)}}{2},& s\in Q,\\
\frac{g(s,x)-\overline{g(s,x)}}{2i},& s\in S\setminus Q.
\end{matrix}
\right.
\]
Then $\tilde g$ is $(\nu\times\lambda)$-measurable and $\tilde g_{[s]}$ is a real element of $\mathcal D_{v_{[s]}}$ for $\nu$-a.e. $s$. In view
of~(\ref{member}), we have $[\tilde g_{[s]}]\in D_{H(s)}\setminus D_{\bar L_{v_{[s]}}}$ for $\nu$-a.e. $s$, and it follows from Lemma~\ref{l_ext}
that $\sigma_{\tilde g_{[s]}}$ is nontrivial and
\begin{equation}\label{hsigma}
H(s) = L^{\sigma_{\tilde g_{[s]}}}_{v_{[s]}}
\end{equation}
for $\nu$-a.e. $s$. By Corollary~\ref{cor_sol}, there are $(\nu\times\lambda)$-measurable solutions $f_1$ and $f_2$ of~(\ref{main_eq}) on $S$ such
that $(f_1)_{[s]}$ and $(f_2)_{[s]}$ are linearly independent elements of $\mathcal D$ for $\nu$-a.e. $s$. Let the function $f$ on $S\times (a,b)$ be
given by
\begin{multline}
f(s,x) = \tilde g(s,x) -\\- \frac{1}{W(s)}\left[ f_1(s,x)\int_a^x \varphi(s,x')f_2(s,x')\,dx' - f_2(s,x)\int_a^x \varphi(s,x')f_1(s,x')\,dx'\right],
\nonumber
\end{multline}
where $W(s)$ denotes the Wronskian of $(f_1)_{[s]}$ and $(f_2)_{[s]}$ and the $(\nu\times \lambda)$-measurable function $\varphi$ on $S\times (a,b)$
is defined by the relation
\[
\varphi(s,x) = -\tilde g''_{[s]} + v(s,x)\tilde g(s,x).
\]
Since the function $s\to W(s)$ on $S$ is $\nu$-measurable, Lemma~\ref{l_meas1} implies that $f$ is $(\nu\times \lambda)$-measurable. As
$f_{[s]}=\sigma_{\tilde g_{[s]}}$ for $\nu$-a.e. $s$, it follows from~(\ref{hsigma}) that $H(s)=L^{f_{[s]}}_{v_{[s]}}$ for $\nu$-a.e. $s$. The Lemma
is proved.
\end{proof}

Let $\nu$ be a measure on a measurable space $S$ and $v$ be as in Lemma~\ref{l_ext1}. Let $f_1$ and $f_2$ be real $(\nu\times\lambda)$-measurable
solutions of~(\ref{main_eq}) on $S$ such that $(f_1)_{[s]}$ and $(f_2)_{[s]}$ are linearly independent for $\nu$-a.e. $s$ (such solutions always
exist by Corollary~\ref{cor_sol}). For any $\nu$-measurable map $\theta$ from $S$ to $[0,\pi)$, we define the real $(\nu\times\lambda)$-measurable
solution $\hat \theta$ of~(\ref{main_eq}) on $S$ by the relation
\begin{equation}\label{hattheta}
\hat\theta(s,x) = f_1(s,x)\cos \theta(s) + f_2(s,x)\sin \theta(s).
\end{equation}
Let $f$ be a real $(\nu\times\lambda)$-measurable solution of~(\ref{main_eq}) on $S$. Then there are $\nu$-a.e. defined real functions $C_1$ and
$C_2$ on $S$ such that the equality
\[
f(s,x) = C_1(s) f_1(s,x) + C_2(s) f_2(s,x)
\]
holds for $\nu$-a.e. $s\in S$ and all $x\in (a,b)$. For $\nu$-a.e. $s$, we have
\[
C_1(s) = \frac{W(f_{[s]},(f_2)_{[s]})}{W((f_1)_{[s]},(f_2)_{[s]})},\quad C_2(s) = \frac{W(f_{[s]},(f_1)_{[s]})}{W((f_2)_{[s]},(f_1)_{[s]})},
\]
and, therefore, both $C_1$ and $C_2$ are $\nu$-measurable on $S$. Let $U\subset \R^2$ be the set of all points of the form $(r\cos \varphi, r\sin
\varphi)$ with $r\geq 0$ and $\varphi\in [0,\pi)$ and let $\Sigma$ denote the intersection of $U$ with the unit circle (in other words, $\Sigma$ is
the set of all points of the form $(\cos \varphi, \sin \varphi)$ with $\varphi\in [0,\pi)$). We define the $\nu$-measurable functions $\tilde C_1$
and $\tilde C_2$ on $S$ by the equalities
\[
\tilde C_1(s) = C_1(s)/C(s),\quad \tilde C_2(s) = C_2(s)/C(s),
\]
where the $\nu$-measurable function $C$ is given by
\[
C(s) =  \left\{
\begin{matrix}
\sqrt{C_1(s)^2 + C_2(s)^2},& s\in U,\\
-\sqrt{C_1(s)^2 + C_2(s)^2},& s\in S\setminus U.
\end{matrix}
\right.
\]
We then have $(\tilde C_1(s), \tilde C_2(s))\in\Sigma$ for $\nu$-a.e. $s$. Let $\chi$ be the map $\varphi\to (\cos\varphi,\sin\varphi)$ from
$[0,\pi)$ to $\Sigma$. Clearly, $\chi$ is a bijection and both $\chi$ and $\chi^{-1}$ are continuous. We now define the $\nu$-measurable function
$\theta$ from $S$ to $[0,\pi)$ by setting
\[
\theta(s) = \chi^{-1}(\tilde C_1(s), \tilde C_2(s)).
\]
We then have
\begin{equation}\label{theta}
f(s,x) = C(s) \hat \theta(s,x)
\end{equation}
for $\nu$-a.e. $s\in S$ and all $x\in (a,b)$. Relation~(\ref{theta}) determines $C$ and $\theta$ uniquely up to $\nu$-equivalence. Indeed, suppose
there are functions $\tilde C$ and $\tilde\theta$ such that~(\ref{theta}) holds with $C$ and $\theta$ replaced with $\tilde C$ and $\tilde\theta$
respectively. Then we have
\[
C(s)\hat \theta(s,x) = \hat C(s)\hat\theta(s,x)
\]
for $\nu$-a.e. $s\in S$ and all $x\in (a,b)$. Since $(f_1)_{[s]}$ and $(f_2)_{[s]}$ are linearly independent for $\nu$-a.e. $s$, it follows that
\[
C(s)\cos\theta(s) = \tilde C(s)\cos\tilde\theta(s),\quad C(s)\sin\theta(s) = \tilde C(s)\sin\tilde\theta(s)
\]
for $\nu$-a.e. $s$. Because both $(\cos\theta(s),\sin\theta(s))$ and $(\cos\tilde\theta(s),\sin\tilde\theta(s))$ belong to $\Sigma$ for $\nu$-a.e.
$s$, this implies that $C(s)=\tilde C(s)$ and $\theta(s)=\tilde \theta(s)$ for $\nu$-a.e. $s$.

By Lemma~\ref{l_ext1}, $L^{\hat\theta_{[s]}}_{v_{[s]}}$ constitute a $\nu$-measurable family of self-adjoint extensions of $L_{v_{[s]}}$ for any
$\nu$-measurable map $\theta$ from $S$ to $[0,\pi)$. Conversely, let $H(s)$ be a $\nu$-measurable family of self-adjoint extensions of $L_{v_{[s]}}$.
Then it follows from Lemma~\ref{l_ext1} and the above arguments concerning representation~(\ref{theta}) that there is a unique (up to
$\nu$-equivalence) $\nu$-measurable map $\theta$ from $S$ to $[0,\pi)$ such that $H(s) = L^{\hat\theta_{[s]}}_{v_{[s]}}$ for $\nu$-a.e. $s$.

We now consider a more general case, when a $(\nu\times\lambda)$-measurable function $v$ on $S\times (a,b)$ is such that both lpc and lcc may hold
for $l_{v_{[s]}}$ at $a$. Let $A_{\mathrm{lc}}$ denote the set of all $s\in S$ such that $v_{[s]}$ is locally square-integrable on $(a,b)$ and lcc
holds for $l_{v_{[s]}}$ at $a$. Let $f_1$ and $f_2$ be as in Corollary~\ref{cor_sol}. For any $c\in (a,b)$, the set of all $s$ such that
\[
\int_a^c (f_1(s,x)^2+f_2(s,x)^2)\,dx < \infty
\]
differs from $A_{\mathrm{lc}}$ by at most a $\nu$-null set. It follows from the Fubini theorem that $A_{\mathrm{lc}}$ is $\nu$-measurable.

Let $f_1$ and $f_2$ be $(\nu\times\lambda)$-measurable solutions of~(\ref{main_eq}) on $A_{\mathrm{lc}}$ such that $(f_1)_{[s]}$ and $(f_2)_{[s]}$
are linearly independent for $\nu$-a.e. $s\in A_{\mathrm{lc}}$. Given a $\nu$-measurable map  $\theta$ from $A_{\mathrm{lc}}$ to $[0,\pi)$, we define
the family $\mathcal L_\theta(s)$ of self-adjoint extensions of $L_{v_{[s]}}$ on $S$ by the relation
\begin{equation}\label{L_family}
\mathcal L_\theta(s) = \left\{
\begin{matrix}
L^{\hat\theta_{[s]}}_{v_{[s]}},& s\in A_{\mathrm{lc}},\\
\bar L_{v_{[s]}},& s\in S\setminus A_{\mathrm{lc}},
\end{matrix}
\right.
\end{equation}
where the solution $\hat\theta$ of~(\ref{main_eq}) on $A_{\mathrm{lc}}$ is given by formula~(\ref{hattheta}) for $s\in A_{\mathrm{lc}}$. The family
$\mathcal L_\theta$ is $\nu$-measurable on $A_{\mathrm{lc}}$ and $S\setminus A_{\mathrm{lc}}$ by Lemmas~\ref{l_ext1} and~\ref{l_v} respectively.
Suppose now that $H(s)$ is a $\nu$-measurable family of self-adjoint extensions of $L_{v_{[s]}}$. Replacing $S$ with $A_{\mathrm{lc}}$ in the above
consideration, we conclude that there is a $\nu$-measurable map  $\theta$ from $A_{\mathrm{lc}}$ to $[0,\pi)$ such that $H(s) =
L^{\hat\theta_{[s]}}_{v_{[s]}}$ for $\nu$-a.e. $s\in A_{\mathrm{lc}}$. Since $L_{v_{[s]}}$ is essentially self-adjoint for $\nu$-a.e. $s\in
S\setminus A_{\mathrm{lc}}$, it follows that $H(s)=\bar L_{v_{[s]}}$ for $\nu$-a.e. $s\in S\setminus A_{\mathrm{lc}}$ and, therefore, $H(s) =
\mathcal L_\theta(s)$ for $\nu$-a.e. $s\in S$. We thus have proved the next theorem.
\begin{theorem} \label{t_meas}
Let $\nu$ be a measure on a measurable space $S$ and $v$ be a $(\nu\times\lambda)$-measurable real function on $S\times (a,b)$ such that $v_{[s]}$ is
locally square-integrable and lpc holds for $l_{v_{[s]}}$ at $b$ for $\nu$-a.e. $s$. Let $A_{\mathrm{lc}}$ be the set of all $s\in S$ such that
$v_{[s]}$ is locally square-integrable on $(a,b)$ and lcc holds for $l_{v_{[s]}}$ at $a$. Let $f_1$ and $f_2$ be $(\nu\times\lambda)$-measurable
solutions of~$(\ref{main_eq})$ on $A_{\mathrm{lc}}$ such that $(f_1)_{[s]}$ and $(f_2)_{[s]}$ are linearly independent for $\nu$-a.e. $s\in
A_{\mathrm{lc}}$. If $\theta$ is a $\nu$-measurable map from $A_{\mathrm{lc}}$ to $[0,\pi)$, then the family $\mathcal L_\theta(s)$ of self-adjoint
extensions of $L_{v_{[s]}}$ defined by~$(\ref{L_family})$, where $\hat \theta$ is given by~$(\ref{hattheta})$, is $\nu$-measurable. Conversely, if
$H(s)$ is a $\nu$-measurable family of self-adjoint extensions of $L_{v_{[s]}}$, then there is a unique up to $\nu$-equivalence $\nu$-measurable map
$\theta$ from $A_{\mathrm{lc}}$ to $[0,\pi)$ such that $H(s) = \mathcal L_\theta(s)$ for $\nu$-a.e. $s$.
\end{theorem}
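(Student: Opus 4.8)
The plan is to treat the theorem as an assembly of two regimes over the partition $S = A_{\mathrm{lc}}\cup(S\setminus A_{\mathrm{lc}})$: on $A_{\mathrm{lc}}$ the operator $L_{v_{[s]}}$ has a one-parameter family of self-adjoint extensions indexed by $\theta(s)\in[0,\pi)$, whereas on the complement $L_{v_{[s]}}$ is essentially self-adjoint and so admits only the extension $\bar L_{v_{[s]}}$. Before anything else I would check that $A_{\mathrm{lc}}$ is a $\nu$-measurable subset of $S$. Taking the globally defined $(\nu\times\lambda)$-measurable solutions $f_1,f_2$ furnished by Corollary~\ref{cor_sol}, the set $A_{\mathrm{lc}}$ differs by a $\nu$-null set from the set of those $s$ for which $\int_a^c (f_1(s,x)^2+f_2(s,x)^2)\,dx<\infty$ for a fixed $c\in(a,b)$; since the integrand is $(\nu\times\lambda)$-measurable, the Fubini theorem renders this set $\nu$-measurable.

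For the forward implication I would first observe that, for a $\nu$-measurable $\theta\colon A_{\mathrm{lc}}\to[0,\pi)$, the section $\hat\theta$ defined by~(\ref{hattheta}) is a nontrivial real $(\nu\times\lambda)$-measurable solution of~(\ref{main_eq}) on $A_{\mathrm{lc}}$: measurability is inherited from that of $\theta$, $f_1$, $f_2$, and nontriviality holds because $(f_1)_{[s]},(f_2)_{[s]}$ are linearly independent while $(\cos\theta(s),\sin\theta(s))\neq(0,0)$. Applying Lemma~\ref{l_ext1} to $f=\hat\theta$ then shows that $L^{\hat\theta_{[s]}}_{v_{[s]}}$ is a $\nu$-measurable family on $A_{\mathrm{lc}}$. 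On $S\setminus A_{\mathrm{lc}}$, Lemma~\ref{l_v} gives that $L_{v_{[s]}}$ is $\nu$-measurable, whence its closure $\bar L_{v_{[s]}}$ is $\nu$-measurable as well. Since $\mathcal L_\theta$ coincides with two $\nu$-measurable families on the two pieces of a measurable partition, it is $\nu$-measurable on all of $S$.

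For the converse I would restrict to $A_{\mathrm{lc}}$ and invoke the second half of Lemma~\ref{l_ext1}, producing a nontrivial real $(\nu\times\lambda)$-measurable solution $f$ of~(\ref{main_eq}) on $A_{\mathrm{lc}}$ with $H(s)=L^{f_{[s]}}_{v_{[s]}}$ for $\nu$-a.e. $s\in A_{\mathrm{lc}}$. The polar-type representation~(\ref{theta}), together with the uniqueness argument for $C$ and $\theta$ recorded just before the theorem, then yields a $\nu$-measurable $\theta\colon A_{\mathrm{lc}}\to[0,\pi)$, unique up to $\nu$-equivalence, with $f_{[s]}=C(s)\hat\theta_{[s]}$; as the extension $L^{f_{[s]}}_{v_{[s]}}$ depends only on $f_{[s]}$ up to a nonzero real factor, this gives $H(s)=\mathcal L_\theta(s)$ on $A_{\mathrm{lc}}$. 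On $S\setminus A_{\mathrm{lc}}$, for $\nu$-a.e. $s$ the potential $v_{[s]}$ is locally square-integrable and lpc holds at both endpoints, so $L_{v_{[s]}}$ is essentially self-adjoint and $H(s)=\bar L_{v_{[s]}}=\mathcal L_\theta(s)$ is forced.

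The main obstacle I anticipate is not a single hard estimate but the bookkeeping needed to merge the two regimes cleanly: ensuring $A_{\mathrm{lc}}$ is genuinely $\nu$-measurable, verifying that splicing two $\nu$-measurable families across a measurable partition yields a $\nu$-measurable family (which follows from condition~(III) of the definition by concatenating the two approximating sequences, each extended by zero off its piece), and confirming that $L^{f_{[s]}}_{v_{[s]}}$ is insensitive to the nonzero scalar $C(s)$ relating $f$ and $\hat\theta$. This last point rests on the fact noted in Sec.~\ref{s_meas} that $L^f_q=L^{\tilde f}_q$ holds precisely when the two real solutions are proportional, which also underlies the uniqueness of $\theta$ up to $\nu$-equivalence.
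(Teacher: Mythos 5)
Your proposal is correct and follows essentially the same route as the paper: measurability of $A_{\mathrm{lc}}$ via Fubini applied to $\int_a^c(f_1^2+f_2^2)\,dx$, Lemma~\ref{l_ext1} on $A_{\mathrm{lc}}$ combined with the polar representation~(\ref{theta}) for existence and uniqueness of $\theta$, and essential self-adjointness forcing $H(s)=\bar L_{v_{[s]}}$ on the complement. Your explicit remark about splicing the two $\nu$-measurable families across the measurable partition (via condition~(III)) is a point the paper leaves implicit, but it is handled correctly.
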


\section{Self-adjoint extensions of the three-dimensional Aharonov--Bohm Hamiltonian}
\label{s8}

The Hamiltonian for an electron moving in the magnetic field of an infinitely thin solenoid is formally given by the differential expression
\begin{equation}\label{diff_expr}
\frac{\hbar^2}{2m_{\mathrm e}}\left(i\nabla + \frac{e}{\hbar c}\mathbf A\right)^2,
\end{equation}
where $e$ and $m_{\mathrm e}$ are the electron charge and mass respectively, $c$ is the velocity of light, and the vector potential $\mathbf
A=(A^1,A^2,A^3)$ has the form
\[
A^1(x,y,z) = -\frac{\Phi y}{2\pi(x^2+y^2)},\quad A^2(x,y,z) = \frac{\Phi x}{2\pi(x^2+y^2)},\quad A^3(x,y,z)=0.
\]
Here, $\Phi$ is the flux of the magnetic field through the solenoid. The vector potential $\mathbf A$ is smooth outside the $z$-axis
$Z=\{(x,y,z)\in\R^3 : x=y=0\}$. Hence, (\ref{diff_expr}) naturally determines an operator $\check H$ on the space $C_0^\infty(\R^3\setminus Z)$ of
smooth functions on $\R^3$ with compact support contained in $\R^3\setminus Z$,
\begin{multline}
\check H \Psi = \left(i\nabla + \frac{e}{\hbar c}\mathbf A\right)^2\Psi = \\= \left(-\Delta +\frac{2i\phi}{x^2+y^2}(y\partial_x-x\partial_y)+
\frac{\phi^2}{x^2+y^2}\right)\Psi,\quad \Psi\in C_0^\infty(\R^3\setminus Z), \label{checkH}
\end{multline}
where
\[
\phi = -\frac{e\Phi}{2\pi\hbar c}
\]
(to simplify notation, we have dropped the factor $\hbar^2/2m_{\mathrm e}$ in~(\ref{diff_expr})). Lifting $\check H$ to $\Lambda$-equivalence
classes, where $\Lambda$ is the Lebesgue measure on $\R^3$, yields a symmetric operator $H$ in $L^2(\R^3)$:
\begin{align}
& D_H = \left\{ [\Psi]_\Lambda : \Psi\in C_0^\infty(\R^3\setminus Z)\right\}, \nonumber\\
& H[\Psi]_\Lambda = [\check H \Psi]_\Lambda. \nonumber
\end{align}
Let $\mathcal G$ be the Abelian group of linear operators in $\R^3$ generated by translations along the $z$-axis and rotations around the $z$-axis.
Given $G\in \mathcal G$, we denote by $U_G$ the unitary operator in $L^2(R^3)$ taking $[\Psi]_\Lambda$ to $[\Psi\circ G^{-1}]_\Lambda$ for any
square-integrable function $\Psi$ on $\R^3$. It is straightforward to check that $H$ commutes with $U_G$ for any $G\in \mathcal G$. We shall see that
$H$ is not essentially self adjoint. Hence, there are different quantum dynamics that can be associated with differential
expression~(\ref{diff_expr}) via constructing different self-adjoint extensions of $H$. In this section, we shall describe all self-adjoint
extensions of $H$ commuting with $U_G$ for any $G\in \mathcal G$.

We begin by constructing an exact diagonalization for the operators $U_G$. Let $\mu$ be the counting measure on $\Z$, which assigns to each set of
integers the number of points in the set. We define the measure $\nu$ on $S=\Z\times \R$ by setting $\nu=\mu\times\lambda$, where $\lambda$ is the
Lebesgue measure on $\R$. For any $\nu$-integrable $f$, we have
\[
\int_{S} f(m,p)\,d\nu(m,p) = \sum_{m\in\Z} \int_{-\infty}^\infty f(m,p)\,dp.
\]
For $\Psi\in C_0^\infty(\R^3)$, let the function $\tilde\Psi$ on $S\times \R_+$, where $\R_+=(0,\infty)$, be defined by the relation
\begin{equation}\label{tilde}
\tilde\Psi(s,r) = \frac{\sqrt{r}}{2\pi}\int_{-\infty}^\infty dz\int_{0}^{2\pi}d\varphi \Psi(r\cos\varphi,r\sin\varphi,z)e^{ipz+im\varphi},
\quad s=(m,p)\in S.
\end{equation}
Let $\mathfrak h = L^2(\R_+)$. For $\Psi\in C_0^\infty(\R^3)$, we define the map $\hat\Psi$ from $S$ to $\mathfrak h$ by setting
\begin{equation}\label{hatPsi}
\hat\Psi(s) = [\tilde\Psi_{[s]}]_{\lambda_+},
\end{equation}
where $\lambda_+$ is the restriction to $\R_+$ of the Lebesgue measure $\lambda$ on $\R$ and $\tilde\Psi_{[s]}\in C_0^\infty(\R_+)$ denotes, as in
Sec.~\ref{s_meas}, the partial function on $\R_+$ determined by $\tilde\Psi(s,r)$ for fixed $s$,
\[
\tilde\Psi_{[s]}(r) = \tilde\Psi(s,r),\quad r\in \R_+.
\]

The next lemma follows easily from the Fubini theorem and the unitarity of the Fourier transformation and the Fourier series expansion.

\begin{lemma}\label{l_unitary}
There is a unique unitary operator $V\colon L^2(\R^3)\to L^2(S,\mathfrak h,\nu)$ such that
\[
V[\Psi]_\Lambda = [\hat\Psi]_\nu,\quad \Psi\in C_0^\infty(\R^3).
\]
\end{lemma}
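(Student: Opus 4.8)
The plan is to realize the desired operator as a composition of three elementary unitary maps and then to verify that this composition agrees with the prescription $[\Psi]_\Lambda\mapsto[\hat\Psi]_\nu$ on the dense subspace $\{[\Psi]_\Lambda:\Psi\in C_0^\infty(\R^3)\}$. Uniqueness is then immediate: since $C_0^\infty(\R^3)$ is dense in $L^2(\R^3)$ and any two bounded (in particular unitary) operators that agree on a dense set coincide, there can be at most one $V$ satisfying the stated relation, so the whole content of the lemma is existence.

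For existence I would first pass to cylindrical coordinates. The map $U_1\colon[\Psi]_\Lambda\mapsto[(r,\varphi,z)\mapsto\sqrt r\,\Psi(r\cos\varphi,r\sin\varphi,z)]$ is unitary from $L^2(\R^3)$ onto $L^2(\R_+\times(0,2\pi)\times\R)$ taken with the Lebesgue measure $dr\,d\varphi\,dz$, the factor $\sqrt r$ precisely absorbing the Jacobian $r$ of the coordinate change. Let $U_2$ denote the Fourier series expansion in the variable $\varphi$, a unitary isomorphism $L^2((0,2\pi))\to\ell^2(\Z)$, and let $U_3$ denote the Fourier transform in the variable $z$, a unitary operator on $L^2(\R)$. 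Tensoring these with the identities on the remaining factors and composing, I obtain a unitary operator
\[
W=(\mathbf 1\otimes U_2\otimes U_3)\,U_1\colon L^2(\R^3)\longrightarrow L^2(\R_+)\otimes\bigl(\ell^2(\Z)\otimes L^2(\R)\bigr),
\]
and the target is canonically identified with $L^2(S,\mathfrak h,\nu)$ because $\mathfrak h=L^2(\R_+)$, $\nu=\mu\times\lambda$, and $\ell^2(\Z)\otimes L^2(\R)\cong L^2(\Z\times\R,\mu\times\lambda)=L^2(S,\nu)$. Each constituent is unitary, so $W$ is unitary; in particular surjectivity---the only feature that does not reduce to a norm computation---is automatic from the surjectivity of the Fourier series and the Fourier transform.

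It remains to check that $W[\Psi]_\Lambda=[\hat\Psi]_\nu$ for $\Psi\in C_0^\infty(\R^3)$, and this is exactly where the Fubini theorem enters. For such $\Psi$ the function $(r,\varphi,z)\mapsto\sqrt r\,\Psi(r\cos\varphi,r\sin\varphi,z)$ is continuous with compact support, so for $\lambda_+$-a.e.\ $r$ its restriction to $(0,2\pi)\times\R$ lies in $L^1\cap L^2$; consequently the Fourier series in $\varphi$ and the Fourier transform in $z$ may be evaluated by honest absolutely convergent integrals rather than as $L^2$-limits, and Fubini justifies interchanging the $\varphi$- and $z$-integrations. Collecting the normalizing constants into the single factor $1/2\pi$ reproduces formula~(\ref{tilde}) for $\tilde\Psi$, whence $W[\Psi]_\Lambda$ is represented by the $\nu$-measurable section $s\mapsto[\tilde\Psi_{[s]}]_{\lambda_+}=\hat\Psi(s)$ of~(\ref{hatPsi}); joint continuity of $\tilde\Psi$ guarantees that $\hat\Psi$ is indeed a measurable $\mathfrak h$-valued section. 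Setting $V=W$ then finishes the proof.

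The routine parts are the change of variables and the standard unitarity of the two Fourier operations; the main point requiring care is bookkeeping, namely matching the $2\pi$ normalizations across $U_2$ and $U_3$ and confirming that the pointwise formula~(\ref{tilde}) legitimately represents the composite of the $L^2$-Fourier operations, which is precisely the role played by the absolute convergence supplied by $\Psi\in C_0^\infty(\R^3)$. (Alternatively, isometry alone can be obtained by a direct computation: Tonelli reduces $\|[\Psi]_\Lambda\|^2$ to an iterated integral, and applying Parseval in $\varphi$ and Plancherel in $z$ for each fixed $r$ yields $\|[\hat\Psi]_\nu\|^2=\|[\Psi]_\Lambda\|^2$; surjectivity must then be argued separately, which is why the composition-of-unitaries route is cleaner.)
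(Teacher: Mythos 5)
Your proof is correct and follows exactly the route the paper intends: the paper gives no written proof beyond the remark that the lemma ``follows easily from the Fubini theorem and the unitarity of the Fourier transformation and the Fourier series expansion,'' and your argument is a careful elaboration of precisely that (cylindrical change of variables with the $\sqrt r$ Jacobian factor, Fourier series in $\varphi$, Fourier transform in $z$, Fubini to identify the composite with formula~(\ref{tilde}), and density of $C_0^\infty(\R^3)$ for uniqueness).
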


Given $\alpha,\beta\in\R$, let the function $g_{\alpha,\beta}$ on $S$ be given by
\[
g_{\alpha,\beta}(m,p) = e^{i\alpha m + i\beta p}.
\]
If $G\in \mathcal G$ is the composition of the rotation by the angle $\alpha$ around $z$-axis and the translation by $\beta$ along $z$-axis, then it
is easy to see that
\[
VU_G V^{-1} = \mathcal T_{g_{\alpha,\beta}},
\]
where $\mathcal T_{g_{\alpha,\beta}}$ is the operator of multiplication by $g_{\alpha,\beta}$ in $L^2(S,\mathfrak h,\nu)$. We now show that
$\{g_{\alpha,\beta}\}_{(\alpha,\beta)\in\mathbb Q^2}$, where $\mathbb Q$ is the set of rational numbers, is a $\nu$-separating family of functions on
$S$. Suppose $(m,p)$ and $(m',p')$ are such that $g_{\alpha,\beta}(m,p)=g_{\alpha,\beta}(m',p')$ for all $(\alpha,\beta)\in \mathbb Q^2$. Then we
have
\[
e^{i\alpha(m-m') +i\beta(p-p')}=1
\]
for all $(\alpha,\beta)\in \mathbb Q^2$. Since $\mathbb Q^2$ is dense in $\R^2$, it follows that $m=m'$ and $p=p'$, i.e., the family
$\{g_{\alpha,\beta}\}_{(\alpha,\beta)\in\mathbb Q^2}$ is $\nu$-separating. Theorem~\ref{t0a} now implies that $(S,\mathfrak h,\nu,V)$ is an exact
diagonalization for $U_G$.

It easily follows from (\ref{checkH}) that
\begin{multline}\label{polar}
(\check H\Psi)(r\cos \varphi,r\sin\varphi,z) =\\= \left(-\partial_z^2-
\partial_r^2-\frac{1}{r}\partial_r - \frac{1}{r^2}(\partial_\varphi^2+2i\phi\partial_\varphi-\phi^2)\right)
F_\Psi(\varphi,z,r)
\end{multline}
for any $\Psi\in C^\infty_0(\R^3\setminus Z)$, where $F_\Psi$ is the smooth function on $\R\times\R\times\R_+$ which represents $\Psi$ in the
cylindrical coordinates,
\[
F_\Psi(\varphi,z,r) = \Psi(r\cos \varphi,r\sin\varphi,z).
\]
Substituting (\ref{polar}) in~(\ref{tilde}) and integrating by parts yields
\begin{equation}\label{checkh}
\widetilde{\check H\Psi}(s,r) = (\check h_{m-\phi}\tilde \Psi_{[s]})(r) + p^2\tilde\Psi(s,r),\quad s=(m,p)\in S,
\end{equation}
where the operator $\check h_\kappa$ from $C^\infty_0(\R_+)$ to itself is given by
\[
(\check h_\kappa\psi)(r) = -\psi''(r) + \frac{\kappa^2-1/4}{r^2}\psi(r),\quad \psi\in C^\infty_0(\R_+),
\]
for any $\kappa\in\R$. Let $h_\kappa$ denote the operator in $\mathfrak h$ obtained by lifting $\check h_\kappa$ to $\lambda$-equivalence classes,
\begin{align}
& D_{h_\kappa} = D_0=\left\{ [\psi]_{\lambda_+} : \psi\in C_0^\infty(\R_+)\right\}, \nonumber\\
& h_\kappa[\psi]_{\lambda_+} = [\check h_\kappa \psi]_{\lambda_+}. \nonumber
\end{align}
It follows from~(\ref{hatPsi}) and~(\ref{checkh}) that
\begin{equation}\label{amp}
\widehat{\check H\Psi}(s) = a(s) \hat\Psi(s),\quad s\in S,
\end{equation}
where
\begin{equation}\label{amp_d}
a(m,p) = h_{m-\phi}+p^2 1_{\mathfrak h}
\end{equation}
for any $(m,p)\in S$ and $1_{\mathfrak h}$ is the identity operator in $\mathfrak h$. Let the function $v$ on $S\times \R_+$ be defined by the
relation
\begin{equation}\label{vsr}
v(s,r) = \frac{(m-\phi)^2-1/4}{r^2},\quad s=(m,p)\in S.
\end{equation}
Clearly, $v$ is $(\nu\times\lambda)$-measurable on $S\times \R_+$ and, in the notation of~Sec.~\ref{s_meas}, we have
\begin{equation}\label{lvs}
L_{v_{[s]}} = h_{m-\phi}
\end{equation}
for any $s=(m,p)\in S$. Hence, Lemma~\ref{l_v} implies that the family $(m,p)\to h_{m-\phi}$ of operators on $S$ is $\nu$-measurable and
$\nu$-regular with respect to $D_0$. Since $(m,p)\to p^2 1_{\mathfrak h}$ is a $\nu$-measurable family of bounded operators on $S$, it follows
from~(\ref{amp_d}) that the family $a(m,p)$ is also $\nu$-measurable and $\nu$-regular with respect to $D_0$. Fix a nonzero function $\chi\in
C^\infty_0(\R)$. For $\psi\in C^\infty_0(\R_+)$ and $n\in \Z$, let $\Psi_{n,\psi}\in C^\infty_0(\R^3\setminus Z)$ be the function whose cylindrical
coordinate representation has the form
\[
F_{\Psi_{n,\psi}}(\varphi,z,r) = \frac{1}{\sqrt{r}}e^{-in\varphi}\chi(z)\psi(r).
\]
Then we have
\begin{equation}\label{hat}
\hat \Psi_{n,\psi}(m,p) = \delta_{m,n}\,\mathcal F\chi(p)\,[\psi]_{\lambda_+},\quad (m,p)\in S,
\end{equation}
where $\mathcal F\chi$ is the Fourier transform of $\chi$,
\[
\mathcal F\chi(p) = \int \chi(z)e^{izp}\,dz,
\]
and $\delta_{m,n}$ is, as usual, equal to $1$ for $m=n$ and equal to $0$ for $m\neq n$. Since $\mathcal F\chi$ admits the analytic continuation to
$\C$, the set of its zeros has Lebesgue measure zero. It follows that the functions $g_n(m,p)= \delta_{m,n}\,\mathcal F\chi(p)$ constitute a
$\nu$-nonvanishing sequence of square-integrable functions o $S$ (see the paragraph before Lemma~\ref{l6}). By~(\ref{hat}), the $\nu$-equivalence
class of the map $s\to g_n(s)[\psi]_{\lambda,\R_+}$ is equal to $V[\Psi_{n,\psi}]_\Lambda$ and, therefore, belongs to $V(D_H)$ for any $\psi\in
C^\infty_0(\R_+)$. Hence, statement~2 of Lemma~\ref{l6} implies that the family $a(s)$ is compatible with $V(D_H)$.

By~(\ref{amp}), we have
\begin{equation}\label{100}
a(s)\xi(s) = \eta(s)
\end{equation}
for $\nu$-a.e. $s\in S$, whenever $\xi$ and $\eta$ are $\nu$-measurable maps from $S$ to $\mathfrak h$ such that
\begin{equation}\label{101}
[\xi]_\nu = V[\Psi]_\Lambda,\quad [\eta]_\nu = VH[\Psi]_\Lambda
\end{equation}
for some $\Psi\in C^\infty_0(\R^3\setminus Z)$. Taking~(\ref{100}), (\ref{101}), the compatibility of $a(s)$ with $V(D_H)$, and the exactness of the
diagonalization $(S,\mathfrak h,\nu,V)$ for $U_G$ into account and applying Theorem~\ref{t1}, we arrive at the next result.
\begin{lemma}\label{l100}
Let $\tilde a(s)$ be a $\nu$-measurable family of self-adjoint extensions of $a(s)$ on $S$. Then the operator
\begin{equation}\label{tilH}
\tilde H = V^{-1}\int^\oplus \tilde a(s) \,d\nu(s)\, V
\end{equation}
is a self-adjoint extension of $H$ commuting with $U_G$ for all $G\in \mathcal G$. Conversely, if $\tilde H$ is a self-adjoint extension of $H$
commuting with $U_G$ for all $G\in \mathcal G$, then there is a unique (up to $\nu$-equivalence) $\nu$-measurable family $\tilde a(s)$ of
self-adjoint extensions of $a(s)$ on $S$ such that $(\ref{tilH})$ holds.
\end{lemma}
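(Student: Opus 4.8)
The plan is to recognize that everything needed has already been assembled so that the two halves of the lemma follow by applying the two corresponding statements of Theorem~\ref{t1} to the symmetry set $\mathcal X = \{U_G : G\in \mathcal G\}$. First I would record the structural properties of $\mathcal X$: each $U_G$ is unitary, hence a bounded normal operator, and since $\mathcal G$ is Abelian the $U_G$ pairwise commute; moreover $U_G^* = U_{G^{-1}}\in \mathcal X$, so $\mathcal X\subset L(L^2(\R^3))$ is an involutive set of closed densely defined operators. This places us squarely in the hypotheses of Theorem~\ref{t1}, with $\mathfrak H = L^2(\R^3)$ and the constant family $I_{\,\mathfrak h,\nu}$ of fibers $\mathfrak h = L^2(\R_+)$.

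Next I would confirm the two remaining ingredients required by Theorem~\ref{t1}, namely that $(S,\mathfrak h,\nu,V)$ is an \emph{exact} diagonalization for $\mathcal X$ and that $a(s)$ is a \emph{reduction} of $H$ with respect to it. Exactness was already shown above: $\nu = \mu\times\lambda$ is standard, $VU_GV^{-1} = \mathcal T_{g_{\alpha,\beta}}$, and the countable family $\{g_{\alpha,\beta}\}_{(\alpha,\beta)\in\mathbb Q^2}$ is $\nu$-separating, so Theorem~\ref{t0a} applies. For the reduction, compatibility of $a(s)$ with $V(D_H)$ was obtained from statement~2 of Lemma~\ref{l6}, while the extension property $\int^\oplus a(s)\,d\nu(s)\supset VHV^{-1}$ is exactly the content of the identity~(\ref{100}), valid $\nu$-a.e.\ whenever $[\xi]_\nu = V[\Psi]_\Lambda$ and $[\eta]_\nu = VH[\Psi]_\Lambda$; by the characterization following Definition~\ref{d_red} this is precisely the statement that $a(s)$ is a reduction. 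The $\nu$-measurability of $a(s)$ follows from Lemma~\ref{l_v}, the boundedness of the fiberwise term $p^2\,1_{\mathfrak h}$, and statement~1 of Lemma~\ref{l6}.

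With these facts in place, the forward direction is statement~1 of Theorem~\ref{t1}: a $\nu$-measurable family $\tilde a(s)$ of self-adjoint (hence closed) extensions of $a(s)$ yields a closed extension $\tilde H$ of $H$ that is in fact self-adjoint and commutes with every element of $\mathcal X$, i.e.\ with all $U_G$. For the converse I would apply statement~2 of Theorem~\ref{t1} with $\mathcal Y = \mathcal X$, which is trivially equivalent to $\mathcal X$; since $\tilde H$ is self-adjoint we have $\tilde H^* = \tilde H$, so both $\tilde H$ and $\tilde H^*$ commute with all $U_G$ by hypothesis, and statement~2 produces a family $\tilde a(s)$ of closed extensions of $a(s)$, unique up to $\nu$-equivalence, satisfying $\tilde H = V^{-1}\int^\oplus \tilde a(s)\,d\nu(s)\,V$, while the self-adjointness of $\tilde H$ forces $\tilde a(s)$ to be self-adjoint for $\nu$-a.e.\ $s$. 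I do not expect a genuine obstacle here: the lemma is a bookkeeping specialization of Theorem~\ref{t1}, and the only nonroutine verifications—measurability, $\nu$-regularity, and compatibility of $a(s)$ together with the reduction identity~(\ref{100})—have already been discharged, so the only real care needed is matching each hypothesis of Theorem~\ref{t1} against the concrete objects $\mathcal X$, $H$, and $a(s)$.
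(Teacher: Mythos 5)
Your proposal is correct and follows essentially the same route as the paper, which likewise derives Lemma~\ref{l100} by invoking Theorem~\ref{t1} after noting the exactness of the diagonalization, the compatibility of $a(s)$ with $V(D_H)$, and the reduction identity~(\ref{100})--(\ref{101}). The hypothesis-checking you spell out (involutivity and normality of $\{U_G\}$, taking $\mathcal Y=\mathcal X$ in statement~2, and using self-adjointness of $\tilde H$ to get commutation of $\tilde H^*$) is exactly the bookkeeping the paper leaves implicit.
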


It follows from~(\ref{amp_d}) that $\tilde a(s)$ is a $\nu$-measurable family of self-adjoint extensions of $a(s)$ if and only if
\[
\tilde a(m,p) = \tilde h(m,p) + p^2 1_{\mathfrak h}
\]
for $\nu$-a.e. $(m,p)\in S$, where $\tilde h(s)$ is a $\nu$-measurable family of operators on $S$ such that $\tilde h(m,p)$ is a self-adjoint
extension of $h_{m-\phi}$ for $\nu$-a.e. $(m,p)\in S$. In view of Lemma~\ref{l100}, this implies that the problem of describing self-adjoint
extensions of $H$ reduces to describing all such families $\tilde h(s)$.

As in Sec.~\ref{s_meas}, let $\mathcal D$ denote the space of all absolutely continuous complex functions on $\R_+$ whose first derivative is also
absolutely continuous. For $\kappa\in \R$, let $l_\kappa$ be the linear map from $\mathcal D$ to the space of complex-valued $\lambda_+$-equivalence
classes taking $\psi\in \mathcal D$ to the $\lambda_+$-equivalence class of the map
\[
r\to -\psi''(r) + \frac{\kappa^2-1/4}{r^2}\psi(r).
\]
Let the subspace $\mathcal D_\kappa$ of $\mathcal D$ and the subspace $D_\kappa$ of $\mathfrak h$ be defined by the relations
\begin{align}
&\mathcal D_\kappa =
\{\psi\in \mathcal D : [\psi]_{\lambda_+} \mbox{ and } l_\kappa \psi \mbox{ are both in } \mathfrak h\},\nonumber \\
& D_\kappa = \{ [\psi]_{\lambda_+} : \psi\in \mathcal D_\kappa\}.\nonumber
\end{align}
In the notation of Sec.~\ref{s_meas}, we have
\begin{equation}\label{equalities}
h_\kappa = L_{q_\kappa},\quad l_\kappa = l_{q_\kappa},\quad \mathcal D_\kappa = \mathcal D_{q_\kappa},\quad D_\kappa = D_{q_\kappa},
\end{equation}
where the function $q_\kappa$ on $\R_+$ is given by
\[
q_\kappa(r) = \frac{\kappa^2-1/4}{r^2}.
\]
Hence, the adjoint $h_\kappa^*$ of $h_\kappa$ has the form
\begin{align}
& D_{h_\kappa^*} = D_\kappa,\nonumber\\
& h_\kappa^* [\psi]_{\lambda_+} = l_\kappa \psi, \quad \psi\in \mathcal D_\kappa.\nonumber
\end{align}
The equation
\begin{equation}\label{201}
l_\kappa \psi =0
\end{equation}
has two linearly independent solutions
\begin{align}
& \psi^{(1)}_\kappa(r) = r^{1/2+\kappa},\quad \psi^{(2)}_\kappa(r) = r^{1/2-\kappa},\quad \kappa\neq 0,\nonumber\\
& \psi^{(1)}_\kappa(r) = r^{1/2},\quad \psi^{(2)}_\kappa(r) = r^{1/2}\ln r,\quad \kappa= 0.\nonumber
\end{align}
Hence, lpc holds at $r=\infty$ for all $\kappa$, while lpc holds at $r=0$ for $|\kappa|\geq 1$ and lcc holds at $r=0$ for $|\kappa|<1$. This implies
that $h_\kappa$ is essentially self-adjoint for $|\kappa|\geq 1$ and its unique self-adjoint extension is $h_\kappa^*$. For $\vartheta\in\R$, let the
solution $\psi_{\kappa,\vartheta}$ of~(\ref{201}) be given by
\begin{equation}\label{psikt}
\psi_{\kappa,\vartheta}(r) = \psi^{(1)}_\kappa(r) \cos\vartheta + \psi^{(2)}_\kappa(r) \sin \vartheta.
\end{equation}
For $|\kappa|<1$, we define the self-adjoint extension $h_{\kappa,\vartheta}$ of $h_\kappa$ by setting
\begin{equation}\label{hkt}
h_{\kappa,\vartheta}=L_{q_\kappa}^{\psi_{\kappa,\vartheta}},
\end{equation}
i.e., $h_{\kappa,\vartheta}$ is the restriction of $h_\kappa^*$ to
\begin{equation}
D_{h_{\kappa,\vartheta}} =\left \{[\psi]_{\lambda_+} : \psi\in \mathcal D_\kappa \mbox{ and } \lim_{r\downarrow 0}
W(\psi,\psi_{\kappa,\vartheta})(r)=0\right\},
\nonumber
\end{equation}
where the Wronskian $W$ is given by~(\ref{wronskian}).

For each $\phi\in \R$, there is a unique $m(\phi)\in\Z$ such that $m(\phi)-\phi \in (-1,0]$ (note that $m(\phi)=\phi$ for $\phi\in\Z$). The operator
$h_{m-\phi}$ is not essentially self-adjoint for $m=m(\phi)$ if $phi\in \Z$ and for $m=m(\phi),m(\phi)+1$ if $\phi\notin\Z$. Hence, families of
self-adjoint extensions of $h_{m-\phi}$ are defined differently for $\phi\in\Z$ and $\phi\notin\Z$.

\par\medskip\noindent
1. Let $\phi\notin \Z$ and let $\tau_1$ and $\tau_2$ be $\lambda$-measurable maps from $\R$ to $[0,\pi)$. We define the family $\tilde
h_{\tau_1,\tau_2}(s)$ of self-adjoint operators on $S$ by setting
\begin{equation}\label{htau12}
\tilde
h_{\tau_1,\tau_2}(m,p) = \left\{
\begin{matrix}
h_{m-\phi}^*,& m<m(\phi)\mbox{ or } m>m(\phi)+1,\\
h_{m-\phi,\tau_1(p)},& m=m(\phi),\\
h_{m-\phi,\tau_2(p)},& m=m(\phi)+1.
\end{matrix}
\right.
\end{equation}

\par\medskip\noindent
2. Let $\phi\in \Z$ and let $\tau$ be a $\lambda$-measurable map from $\R$ to $[0,\pi)$. We define the family $\tilde h_{\tau}(s)$ of self-adjoint
operators on $S$ by setting
\begin{equation}\label{htau}
\tilde
h_{\tau}(m,p) = \left\{
\begin{matrix}
h_{m-\phi}^*,& m\neq \phi,\\
h_{\,0,\tau(p)},& m=\phi.
\end{matrix}
\right.
\end{equation}

\begin{theorem}\label{t2}
$\,$
\par\noindent $1.$ Let $\phi\notin \Z$. Suppose $\tau_1$ and $\tau_2$ are $\lambda$-measurable maps from $\R$ to $[0,\pi)$. Then the family $\tilde
h_{\tau_1,\tau_2}(s)$ on $S$ of self-adjoint operators in $\mathfrak h$ defined by~$(\ref{htau12})$ is $\nu$-measurable and
\begin{equation}\label{301}
\tilde H = V^{-1}\int_S^\oplus (\tilde h_{\tau_1,\tau_2}(m,p)+ p^2 1_{\mathfrak h})\,d\nu(m,p)\,\, V
\end{equation}
is a self-adjoint extension of $H$. Conversely, for any self-adjoint extension $\tilde H$ of $H$, there are unique (up to $\lambda$-equivalence)
$\lambda$-measurable maps $\tau_1$ and $\tau_2$ from $\R$ to $[0,\pi)$ such that $(\ref{301})$ holds.

\par\medskip\noindent $2.$ Let $\phi\in \Z$. Suppose $\tau$ is a $\lambda$-measurable map from $\R$ to $[0,\pi)$. Then the family $\tilde
h_{\tau}(s)$ on $S$ of self-adjoint operators in $\mathfrak h$ defined by~$(\ref{htau})$ is $\nu$-measurable and
\begin{equation}\label{302}
\tilde H = V^{-1}\int_S^\oplus (\tilde h_{\tau}(m,p)+ p^2 1_{\mathfrak h})\,d\nu(m,p)\,\, V
\end{equation}
is a self-adjoint extension of $H$. Conversely, for any self-adjoint extension $\tilde H$ of $H$, there is a unique (up to $\lambda$-equivalence)
$\lambda$-measurable map $\tau$ from $\R$ to $[0,\pi)$ such that $(\ref{302})$ holds.
\end{theorem}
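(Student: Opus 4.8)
The plan is to assemble the theorem from two ingredients that the text has already prepared. The discussion following Lemma~\ref{l100} shows, via (\ref{amp_d}), that a $\nu$-measurable family $\tilde a(s)$ of self-adjoint extensions of $a(s)$ is precisely a family of the form $\tilde a(m,p)=\tilde h(m,p)+p^2 1_{\mathfrak h}$, where $\tilde h$ is a $\nu$-measurable family with $\tilde h(m,p)$ a self-adjoint extension of $h_{m-\phi}$; and by Lemma~\ref{l100} such families $\tilde a$ are in bijection with the self-adjoint extensions of $H$ commuting with all $U_G$, via (\ref{tilH}). Hence it suffices to prove that the families $\tilde h_{\tau_1,\tau_2}$ of (\ref{htau12}) (for $\phi\notin\Z$) and $\tilde h_\tau$ of (\ref{htau}) (for $\phi\in\Z$) are exactly the $\nu$-measurable families of self-adjoint extensions of $h_{m-\phi}$, with the stated parametrization; the direct and converse parts of Theorem~\ref{t2} then follow by adding $p^2 1_{\mathfrak h}$ and invoking the two halves of Lemma~\ref{l100}.

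To classify these families I would apply Theorem~\ref{t_meas} with the potential $v$ of (\ref{vsr}) on $S\times(0,\infty)$, so that $L_{v_{[s]}}=h_{m-\phi}$ by (\ref{lvs}) and (\ref{equalities}) and lpc holds at $b=\infty$ for every $s$. Here the set $A_{\mathrm{lc}}$ consists of the $s=(m,p)$ with $|m-\phi|<1$; from $m(\phi)-\phi\in(-1,0]$ one reads off $A_{\mathrm{lc}}=\{\phi\}\times\R$ if $\phi\in\Z$ and $A_{\mathrm{lc}}=\{m(\phi),m(\phi)+1\}\times\R$ if $\phi\notin\Z$, while $h_{m-\phi}$ is essentially self-adjoint with unique extension $h_{m-\phi}^*$ for every $s\notin A_{\mathrm{lc}}$. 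Taking the two linearly independent $(\nu\times\lambda)$-measurable solutions to be $f_1(s,r)=\psi^{(1)}_{m-\phi}(r)$ and $f_2(s,r)=\psi^{(2)}_{m-\phi}(r)$, the solution $\hat\theta$ of (\ref{hattheta}) becomes $\psi_{m-\phi,\theta(s)}$ by (\ref{psikt}), so by (\ref{hkt}) the family $\mathcal L_\theta$ of (\ref{L_family}) equals $h_{m-\phi,\theta(s)}$ on $A_{\mathrm{lc}}$ and $h_{m-\phi}^*$ off it.

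It remains to rewrite the parameter $\theta\colon A_{\mathrm{lc}}\to[0,\pi)$ of Theorem~\ref{t_meas} in the concrete form of the theorem. Since $\nu=\mu\times\lambda$ with $\mu$ the counting measure, each slice $\{m_0\}\times\R$ carries the restriction of $\lambda$, so a $\nu$-measurable map on $A_{\mathrm{lc}}$ amounts to a $\lambda$-measurable map on each of its one or two slices, and $\nu$-equivalence amounts to $\lambda$-equivalence of these slice maps. For $\phi\in\Z$ this gives a single $\tau(p)=\theta(\phi,p)$ with $\mathcal L_\theta=\tilde h_\tau$ of (\ref{htau}); for $\phi\notin\Z$ it gives the pair $\tau_1(p)=\theta(m(\phi),p)$, $\tau_2(p)=\theta(m(\phi)+1,p)$ with $\mathcal L_\theta=\tilde h_{\tau_1,\tau_2}$ of (\ref{htau12}). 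The direct assertion of Theorem~\ref{t_meas} then yields $\nu$-measurability of these families and, through the direct part of Lemma~\ref{l100}, the self-adjointness of $\tilde H$ in (\ref{301}) and (\ref{302}); its converse, combined with the converse of Lemma~\ref{l100}, yields existence and uniqueness of $\tau$ (resp.\ $\tau_1,\tau_2$) for every symmetry-preserving self-adjoint extension. The only genuinely delicate point is this last translation: one must verify that $\nu$-measurability and $\nu$-equivalence on the discrete-times-continuous space $A_{\mathrm{lc}}$ correspond exactly to $\lambda$-measurability and $\lambda$-equivalence of the slice parameters, which is precisely where the product structure $\nu=\mu\times\lambda$ enters.
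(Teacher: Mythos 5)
Your proposal is correct and follows essentially the same route as the paper: reduce via Lemma~\ref{l100} to classifying $\nu$-measurable families of self-adjoint extensions of $h_{m-\phi}$, apply Theorem~\ref{t_meas} with the potential (\ref{vsr}) and the power-law solutions $f_i(s,r)=\psi^{(i)}_{m-\phi}(r)$, identify $A_{\mathrm{lc}}$ as one or two slices $\{m\}\times\R$, and match $\mathcal L_\theta$ with $\tilde h_{\tau_1,\tau_2}$ (resp.\ $\tilde h_\tau$) by reading off $\tau_i$ from the slices of $\theta$. The slice translation you single out as delicate is handled implicitly in the paper via the product structure $\nu=\mu\times\lambda$, exactly as you describe.
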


\begin{proof}
1. Let the functions $f_1$ and $f_2$ on $S\times \R_+$ be defined by the formulas
\[
f_1(s,r) = \psi^{(1)}_{m-\phi}(r),\quad f_2(s,r) = \psi^{(2)}_{m-\phi}(r),\quad s=(m,p)\in S.
\]
In view of~(\ref{equalities}), the functions $f_1$ and $f_2$ are real linearly independent solutions of~(\ref{main_eq}) for $v$ given by~(\ref{vsr})
because $v_{[s]}= q_{m-\phi}$ for all $s=(m,p)\in S$. The set $A_{\mathrm{lc}}$ of all $s\in S$ such that lcc holds at $r=0$ for $l_{v_{[s]}}$ has
the form
\[
A_{\mathrm{lc}} = \{m(\phi),m(\phi)+1\}\times \R.
\]
If $\tau_1$ and $\tau_2$ are $\lambda$-measurable maps from $\R$ to $[0,\pi)$, then (\ref{psikt}), (\ref{hkt}), and (\ref{htau12}) imply that
\begin{equation}\label{400}
\tilde h_{\tau_1,\tau_2}(s) = \mathcal L_\theta(s)
\end{equation}
for $\nu$-a.e. $s\in S$, where $\mathcal L_\theta(s)$ is given by~(\ref{L_family}) and the $(\nu\times \lambda)$-measurable map $\theta$ from
$A_{\mathrm{lc}}$ to $[0,\pi)$ is defined by the relations
\[
\theta(m(\phi),p)=\tau_1(p),\quad \theta(m(\phi)+1,p)=\tau_2(p).
\]
Theorem~\ref{t_meas} now implies that $\tilde h_{\tau_1,\tau_2}(s)$ is a $\nu$-measurable family of operators on $S$. Since $\tilde
h_{\tau_1,\tau_2}(m,p)$ is a self-adjoint extension of $h_{m-\phi}$ for all $(m,p)\in S$, the operators
\begin{equation}\label{401}
\tilde a(m,p) = \tilde h_{\tau_1,\tau_2}(m,p) + p^2 1_{\mathfrak h}
\end{equation}
constitute a $\nu$-measurable family of self-adjoint extensions of $a(m,p)$. Lemma~\ref{l100} hence implies that the operator $\tilde H$ defined
by~(\ref{301}) is a self-adjoint extension of $H$.

Conversely, let $\tilde H$ be a self-adjoint extension of $H$. By Lemma~\ref{l100}, there is a unique (up to $\nu$-equivalence) $\nu$-measurable
family $\tilde a(s)$ of self-adjoint extensions of $a(s)$ on $S$ such that (\ref{tilH}) holds. Hence, the operators $\tilde a(m,p)-p^2 1_{\mathfrak
h}$ constitute a $\nu$-measurable family of self-adjoint extensions of $h_{m-\phi}$. In view of~(\ref{lvs}), Theorem~\ref{t_meas} implies that there
is a unique (up to $\nu$-equivalence) $\nu$-measurable map $\theta$ from $A_{\mathrm{lc}}$ to $[0,\pi)$ such that
\begin{equation}\label{402}
\tilde a(m,p)-p^2 1_{\mathfrak h} = \mathcal L_\theta(m,p).
\end{equation}
Let $\tau_1$ and $\tau_2$ denote the maps $p\to \theta(m(\phi),p)$ and $p\to \theta(m(\phi)+1,p)$ respectively. Then both $\tau_1$ and $\tau_2$ are
$\nu$-measurable maps from $\R$ to $[0,\pi)$ and~(\ref{400}) holds for $\nu$-a.e. $s\in S$. Now substituting~(\ref{400}) in~(\ref{402})
yields~(\ref{401}), and substituting~(\ref{401}) in~(\ref{tilH}) yields~(\ref{301}). It remains to note that~(\ref{301}) determines $\tau_1$ and
$\tau_2$ uniquely up to $\lambda$-equivalence.

\par\medskip\noindent
2. The proof of statement~2 is the same as the proof of statement~1 with the only difference that we have
\[
A_{\mathrm{lc}} = \{\phi\}\times \R
\]
in this case. The theorem is proved.

\end{proof}

\appendix

\section{Some measurability questions}

\begin{lemma}\label{l_meas}
Let $\nu$ be a measure on a measurable space $S$ and $\lambda$ be a standard measure on a measurable space $T$. Let $g$ be a $(\nu\times
\lambda)$-a.e. defined complex-valued function on $S\times T$. Let $h$ be a $\nu$-a.e. defined map from $S$ to $L^2(T,\lambda)$ such that $h(s)=
[g_{[s]}]_\lambda$ for $\nu$-a.e. $s$. Then $g$ is $(\nu\times\lambda)$-measurable if and only if $h$ is $\nu$-measurable.
\end{lemma}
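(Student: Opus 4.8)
The plan is to prove the two implications separately, the crucial preliminary observation being that $L^2(T,\lambda)$ is a \emph{separable} Hilbert space. Indeed, since $\lambda$ is standard, there is a measurable $T'\subset T$ with $\lambda(T\setminus T')=0$ that is a standard Borel space, and the $L^2$ space of a $\sigma$-finite standard measure is separable. Separability is what lets me move freely between the Borel (norm) measurability of $h$ used in the paper, weak measurability, and approximation by simple maps. Throughout I regard a fixed $\phi\in L^2(T,\lambda)$ also as the function $(s,t)\mapsto\phi(t)$ on $S\times T$, which is $(\nu\times\lambda)$-measurable.

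For the implication that $(\nu\times\lambda)$-measurability of $g$ forces $\nu$-measurability of $h$, I would first show that $h$ is weakly measurable. Fix $\phi\in L^2(T,\lambda)$. The product $(s,t)\mapsto g(s,t)\overline{\phi(t)}$ is $(\nu\times\lambda)$-measurable, and for $\nu$-a.e.\ $s$ it is $\lambda$-integrable in $t$ by the Cauchy--Schwarz inequality, since $\|g_{[s]}\|<\infty$. Fubini's theorem then makes the partial integral $s\mapsto\int_T g(s,t)\overline{\phi(t)}\,d\lambda(t)=\langle h(s),[\phi]_\lambda\rangle$ $\nu$-measurable. As $\phi$ was arbitrary, $h$ is weakly measurable, and because $L^2(T,\lambda)$ is separable the Pettis measurability theorem yields that $h$ is $\nu$-measurable.

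For the converse, I would approximate and then extract a pointwise-convergent subsequence. Replacing $h$ by a Borel representative, strong measurability into the separable space $L^2(T,\lambda)$ produces simple measurable maps $h_n=\sum_k [\phi_{n,k}]_\lambda\,\chi_{A_{n,k}}$ (finite sums, with $A_{n,k}\subset S$ measurable and $\phi_{n,k}\in L^2(T,\lambda)$) such that $h_n(s)\to h(s)$ in $L^2(T,\lambda)$ for $\nu$-a.e.\ $s$. Setting $g_n(s,t)=\sum_k \phi_{n,k}(t)\chi_{A_{n,k}}(s)$ gives $(\nu\times\lambda)$-measurable functions with $g_n(s,\cdot)\to g(s,\cdot)$ in $L^2(\lambda)$ for $\nu$-a.e.\ $s$. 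It remains to upgrade this fibrewise norm convergence to joint almost-everywhere convergence along a subsequence. Using $\sigma$-finiteness I reduce to the case $\nu(S)<\infty$ and $\lambda(T)<\infty$, covering $S\times T$ by finitely-measured rectangles $S_i\times T_k$ and noting that composing $h$ with the bounded restriction map $L^2(T,\lambda)\to L^2(T_k,\lambda)$ preserves measurability, so the lemma's hypotheses localize. On such a piece the inner integrals $\int_T \min(1,|g_n-g|^2)\,d\lambda \le \|h_n(s)-h(s)\|^2\to 0$ are bounded by the constant $\lambda(T)$, whence dominated convergence gives
\[
\int_{S\times T}\min\bigl(1,|g_n-g|^2\bigr)\,d(\nu\times\lambda)\longrightarrow 0 .
\]
Passing to a subsequence with summable integrals forces $\min(1,|g_{n_j}-g|^2)\to 0$, hence $g_{n_j}\to g$, for $(\nu\times\lambda)$-a.e.\ $(s,t)$. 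Thus $g$ agrees a.e.\ with a pointwise limit of $(\nu\times\lambda)$-measurable functions, so it is $(\nu\times\lambda)$-measurable, and reassembling the rectangles completes the argument.

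The main obstacle is precisely this last upgrade in the second implication: $L^2$-norm convergence of the slices $g_n(s,\cdot)$ for each fixed $s$ does not by itself yield a single subsequence converging pointwise for almost every pair $(s,t)$, because the usual ``$L^2$-convergence admits an a.e.-convergent subsequence'' argument selects a subsequence depending on $s$. The device of integrating the truncated squares $\min(1,|g_n-g|^2)$ over the product---after cutting down to finite measures so that dominated convergence and Tonelli apply---is what converts fibrewise norm convergence into genuine joint almost-everywhere convergence, and some care is needed to check that the finiteness reductions remain compatible with the vector-valued measurability of $h$.
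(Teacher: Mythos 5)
Your overall architecture matches the paper's: the forward direction is Fubini plus the separability of $L^2(T,\lambda)$ (the paper truncates on the sets $\{s:\|h(s)\|^2\le N\}$ to make $g\bar\phi$ product-integrable, while you invoke Cauchy--Schwarz sectionwise and Pettis explicitly; these are essentially the same argument), and the converse in both cases runs ``approximate $h$ by explicitly product-measurable objects, extract a subsequence of the corresponding functions on $S\times T$ converging $(\nu\times\lambda)$-a.e.\ to a measurable $\tilde g$, and identify $\tilde g$ with $g$ fibrewise.'' Where you genuinely differ is in the choice of approximants and of the convergence device: the paper expands $h(s)$ in a fixed orthonormal basis $[e_n]_\lambda$, shows that the partial sums $b_n(s,t)=\sum_{j\le n}a_j(s)e_j(t)$ form a Cauchy sequence in $L^2(S\times T,\nu\times\lambda)$ after reducing to $\int_S\|h\|^2\,d\nu<\infty$, and extracts an a.e.-convergent subsequence; you instead use simple-function approximants $h_n$ and convert fibrewise norm convergence into convergence in measure on the product via the bounded truncation $\min(1,|g_n-g|^2)$. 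Both are legitimate routes.

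However, the pivotal display in your converse direction is circular as written: the integral $\int_{S\times T}\min(1,|g_n-g|^2)\,d(\nu\times\lambda)$ presupposes that the integrand, and hence $g$, is $(\nu\times\lambda)$-measurable, which is exactly the conclusion being proved. The step is repairable, but only by working with the iterated integral: the inner integral $F_n(s)=\int_T\min\bigl(1,|g_n(s,t)-g(s,t)|^2\bigr)\,d\lambda(t)$ makes sense because $g_{[s]}$ is $\lambda$-measurable for $\nu$-a.e.\ $s$, satisfies $F_n(s)\le\min\bigl(\lambda(T),\|h_n(s)-h(s)\|^2\bigr)$, and is a $\nu$-measurable function of $s$ because $\psi\mapsto\int_T\min(1,|\psi|^2)\,d\lambda$ is continuous on $L^2(T,\lambda)$ while $s\mapsto h_n(s)-h(s)$ is $\nu$-measurable. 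Dominated convergence then gives $\int_S F_n\,d\nu\to0$; choosing $n_j$ with $\sum_j\int_S F_{n_j}\,d\nu<\infty$ and applying monotone convergence twice yields that for $\nu$-a.e.\ $s$ one has $g_{n_j}(s,t)\to g(s,t)$ for $\lambda$-a.e.\ $t$. Note this is a fibrewise statement rather than literally ``$(\nu\times\lambda)$-a.e.''; you should conclude, exactly as the paper does with its $b_{n_k}$, by letting $\tilde g$ be the product-measurable pointwise limit of the $g_{n_j}$ on the measurable set where they converge and observing that $\tilde g_{[s]}=g_{[s]}$ $\lambda$-a.e.\ for $\nu$-a.e.\ $s$. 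With that rewriting your proof closes and is a sound alternative to the paper's basis-expansion argument.
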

\begin{proof}
Let $g$ be $(\nu\times\lambda)$-measurable. We have to show that the function $s\to ([f]_\lambda, h(s))$ is $\nu$-measurable for any
square-integrable function $f$ on $T$. Then $|g|^2$ is a $(\nu\times\lambda)$-measurable map from $S\times T$ to the extended positive semi-axis, and
the Fubini theorem implies that $s\to \int_T |g(s,t)|^2\,d\lambda(t)$ is a $\nu$-measurable function on $S$. For $N=1,2,\ldots$, let $A_N$ be the set
of all $s\in S$ such that $\int_T |g(s,t)|^2\,d\lambda(t)\leq N$. Then the function $(s,t)\to g(s,t)\chi_{A_N}(s) \bar f(t)$, where $\chi_{A_N}$ is
the characteristic function of $A_N$, is $(\nu\times\lambda)$-integrable, and it follows from the Fubini theorem that the function $s\to
\chi_{A_N}(s)\int_T \bar f(t)g(s,t)\,d\lambda(t)$ is $\nu$-measurable. This means that the function $s\to \chi_{A_N}(s) ([f]_\lambda, h(s))$ is
$\nu$-measurable. Since $S\setminus \bigcup_N A_N$ is a $\nu$-null set, it follows that the function $s\to ([f]_\lambda, h(s))$ is $\nu$-measurable.

Conversely, let $h$ be $\nu$-measurable. We first suppose that $h$ is square-integrable. Since $\lambda$ is standard, $L^2(T,\lambda)$ is separable.
Let $e_1,e_2,\ldots$ be a sequence of $\lambda$-measurable functions on $T$ such that $[e_1]_\lambda,[e_2]_\lambda,\ldots$ is a basis in
$L^2(T,\lambda)$. For each $n=1,2,\ldots$, we set
\begin{equation}\label{anbn}
a_n(s) = \int_T \bar e_n(t) g(s,t)\,d\lambda(t),\quad b_n(s,t) = \sum_{j=1}^n a_j(s) e_j(t).
\end{equation}
Since $h$ is $\nu$-measurable, all $a_n$ are $\nu$-measurable and, therefore, all $b_n$ are $(\nu\times\lambda)$-measurable. For $\nu$-a.e. $s$, we
have $\|h(s)\|^2 = \sum_{n=1}^\infty |a_n(s)|^2$, and it follows from the monotone convergence theorem that
\begin{equation}\label{sum_int}
\int_S \|h(s)\|^2 \,d\nu(s) = \sum_{n=1}^\infty \int_S |a_n(s)|^2\, d\nu(s).
\end{equation}
This implies that all $b_n$ are square-integrable and
\[
\int_{S\times T} |b_n(s,t)-b_m(s,t)|^2\,d(\nu\times\lambda)(s,t) = \sum_{j=m+1}^n \int_S |a_j(s)|^2\, d\nu(s),\quad n\geq m.
\]
In view of~(\ref{sum_int}), it follows that $[b_n]_{\nu\times\lambda}$ is a Cauchy sequence in $L^2(S\times T, \nu\times\lambda)$ and, therefore, we
can choose a subsequence $b_{n_k}$ that converges $(\nu\times\lambda)$-a.e. to some $(\nu\times\lambda)$-measurable function $\tilde g$. On the other
hand, $[b_{n_k}(s,\cdot)]_\lambda$ converge to $h(s)$ in $L^2(T,\lambda)$ for $\nu$-a.e. $s$. Hence, for $\nu$-a.e. $s$, there is a subsequence of
$b_{n_k}(s,\cdot)$ that converges $\lambda$-a.e. to $g(s,\cdot)$. This means that $g$ and $\tilde g$ coincide $(\nu\times\lambda)$-a.e. and,
therefore, $g$ is $(\nu\times\lambda)$-measurable. In the general case, we denote by $A_N$ the set of all $s\in S$ such that $\|h(s)\|^2\leq N$. Then
the map $s\to \chi_{A_N}(s)h(s)$ is square-integrable and by the above, the function $(s,t)\to \chi_{A_N}(s) g(s,t)$ is
$(\nu\times\lambda)$-measurable. This implies that $g$ is $(\nu\times\lambda)$-measurable. The lemma is proved.
\end{proof}

\begin{lemma}\label{l_meas1}
Let $\nu$ be a measure on a measurable spaces $S$ and $\lambda$ be the Lebesgue measure on an interval $(a,b)$. Let $f$ be a
$(\nu\times\lambda)$-measurable complex-valued function on $S\times (a,b)$ such that $f_{[s]}$ is locally $\lambda$-integrable for $\nu$-a.e. $s$.
Then for any $x_0\in (a,b)$, the function
\[
g(s,x) = \int_{x_0}^x f(s,x')\,dx'
\]
on $S\times (a,b)$ is $(\nu\times\lambda)$-measurable. If $f_{[s]}$ is left $\lambda$-integrable for $\nu$-a.e. $s$, then the statement also holds
for $x_0=a$.
\end{lemma}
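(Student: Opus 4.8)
The plan is to prove joint measurability of $g$ by separating the roles of the two variables: I will show that $g$ is \emph{continuous} in $x$ (for $\nu$-a.e. $s$) and \emph{$\nu$-measurable} in $s$ (for each fixed $x$), and then pass from separate to joint measurability by a grid-approximation argument. Fix a measurable set $S'\subset S$ with $\nu(S\setminus S')=0$ such that $f_{[s]}$ is locally $\lambda$-integrable for every $s\in S'$ (in the case $x_0=a$, take $S'$ to be the set where $f_{[s]}$ is left $\lambda$-integrable). Since $(S\setminus S')\times(a,b)$ is $(\nu\times\lambda)$-null, it suffices to work on $S'\times(a,b)$.

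First I would record the two separate regularity properties. For every fixed $s\in S'$, the function $f_{[s]}$ is integrable on each compact subinterval of $(a,b)$ (resp.\ on $(a,c]$ for every $c\in(a,b)$ when $x_0=a$), so its indefinite integral $x\mapsto g(s,x)$ is absolutely continuous, hence \emph{continuous}, on $(a,b)$. For every fixed $x\in(a,b)$, the function $s\mapsto g(s,x)$ is \emph{$\nu$-measurable}: writing $f$ as a complex combination of the four nonnegative $(\nu\times\lambda)$-measurable functions $(\mathrm{Re}\,f)^{\pm}$ and $(\mathrm{Im}\,f)^{\pm}$, I apply the Tonelli part of the Fubini theorem (legitimate since $\nu$ and the restriction of $\lambda$ to $(a,b)$ are both $\sigma$-finite) to the product of each such nonnegative function with the characteristic function of the compact set $[x_0,x]$ (or $[x,x_0]$, carrying the appropriate sign). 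This yields a $\nu$-measurable $[0,\infty]$-valued partial integral, finite $\nu$-a.e.\ by local integrability, and $g(\cdot,x)$ is the resulting finite linear combination. The point here is that full Fubini is unavailable because there is no global integrability over $S$; only Tonelli together with local finiteness is needed, and this is precisely where local integrability enters.

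The remaining step, which I expect to be the genuine crux since neither separate measurability nor continuity alone suffices, is to combine the two properties into joint $(\nu\times\lambda)$-measurability. For each $n$ I take the grid $x^{(n)}_k=k\,2^{-n}$ with $k\in\Z$, write $\ell_n(x)$ for the largest grid point not exceeding $x$, let $I^{(n)}_k=[x^{(n)}_k,x^{(n)}_{k+1})\cap(a,b)$, and set
\[
g_n(s,x)=\sum_k g\bigl(s,x^{(n)}_k\bigr)\,\chi_{I^{(n)}_k}(x),
\]
the sum ranging over those $k$ with $x^{(n)}_k\in(a,b)$ (so $g_n$ vanishes on the sliver of $(a,b)$ lying left of the smallest in-range grid point). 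Each summand is a product of a function of $s$ that is $\nu$-measurable (by the previous paragraph, applied at the fixed grid point $x^{(n)}_k$) with a Borel function of $x$, so every $g_n$ is $(\nu\times\lambda)$-measurable. For each fixed $x\in(a,b)$ one has $\ell_n(x)\in(a,b)$ for all large $n$ and $\ell_n(x)\to x$, whence $g_n(s,x)=g(s,\ell_n(x))\to g(s,x)$ pointwise on $S'\times(a,b)$ by the continuity of $g(s,\cdot)$. A pointwise limit of $(\nu\times\lambda)$-measurable functions is $(\nu\times\lambda)$-measurable, so $g$ is $(\nu\times\lambda)$-measurable. The case $x_0=a$ is handled identically, the only modification being that the compact set $[x_0,x]$ in the Tonelli step is replaced by the interval $(a,x]$, whose integral is finite by left integrability.
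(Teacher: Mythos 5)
Your proof is correct and follows essentially the same route as the paper's: both establish $\nu$-measurability of $s\mapsto g(s,x)$ for fixed $x$ via Fubini/Tonelli, then approximate $g$ by step functions in the $x$-variable built from values of $g$ at fixed points, and conclude by continuity of $g(s,\cdot)$ and pointwise convergence. Your explicit decomposition of $f$ into the four nonnegative parts to justify the Tonelli step is a welcome refinement of a point the paper passes over quickly, but it does not change the argument.
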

\begin{proof}
For each $N=1,2,\ldots$, we choose a partition of $(a,b)$ into measurable sets $A^{N}_1,\ldots,A^{N}_{k_N}$ such that the diameter of every $A^{N}_j$
is less than $1/N$. For each $j=1,\ldots,k_N$, we choose a point $x^{N}_j\in A^{N}_j$ and set
\[
g_{N}(s,x) = \sum_{j=1}^{k_N} \chi_{A^{N}_j}(x) g(s,x^{N}_j),
\]
where $\chi_{A^{N}_j}$ is the characteristic function of $A^N_j$. By the Fubini theorem, the function $s\to g(s,x)$ on $S$ is $\nu$-measurable for
any $x\in (a,b)$ and, therefore, $g_{N}$ are $(\nu\times\lambda)$-measurable for all $N$. Since $g_{[s]}$ is continuous on $(a,b)$ for $\nu$-a.e.
$s$, the sequence $(g_{N})_{[s]}$ converges pointwise to $g_{[s]}$ for $\nu$-a.e. $s$. This implies that $g_{N}$ converge $(\nu\times\lambda)$-a.e.
to $g$ and, hence, $g$ is $(\nu\times\lambda)$-measurable. The lemma is proved.
\end{proof}

\end{document}